\newtheorem{theorem}{Theorem}
\newtheorem{definition}[theorem]{Definition}
\newtheorem{lemma}[theorem]{Lemma}
\newtheorem{remark}[theorem]{Remark}
\newcommand{\R}{\mathbb R}
\newcommand{\pl}{\partial}
\renewcommand{\hat}{\widehat}
\newcommand{\na}{\nabla}
\newcommand{\lt}{\left}
\newcommand{\rt}{\right}
\title{Quasi-local mass on unit spheres at spatial infinity}
\author{Po-Ning Chen, Mu-Tao Wang, Ye-Kai Wang, and Shing-Tung Yau}
\numberwithin{theorem}{section}
\numberwithin{equation}{section}
\begin{document}

\begin{abstract}
In this note, we compute the limit of the Wang-Yau quasi-local mass on unit spheres at spatial infinity of an asymptotically flat initial data set. 
Similar to the small sphere limit of the Wang-Yau quasi-local mass, we prove that the leading order term of the quasi-local mass recovers the stress-energy tensor.
For a vacuum spacetime, the quasi-local mass decays faster and the leading order term is related to the Bel-Robinson tensor.  Several new techniques of evaluating quasilocal mass are developed in this note. 
\end{abstract}

\thanks{P.-N. Chen is supported by NSF grant DMS-1308164 and Simons Foundation collaboration grant \#584785, M.-T. Wang is supported by by NSF grant DMS-1405152 and DMS-1810856, Y.-K. Wang is supported by MOST Taiwan grant 105-2115-M-006-016-MY2, 107-2115-M-006-001-MY2, and S.-T. Yau is supported by NSF grants  PHY-0714648 and DMS-1308244. The authors would like to thank the National Center for Theoretical Sciences at National Taiwan University where part of this research was carried out} 

\maketitle
\section{Introduction}
In general relativity, a spacetime is a  4-manifold $N$ with a Lorentzian metric $g_{\alpha\beta}$ satisfying the {\it Einstein equation}
\[R_{\alpha\beta}-\frac{R}{2}g_{\alpha\beta}=8\pi T_{\alpha\beta},\]
where $R_{\alpha\beta}$ and $R$ are the Ricci curvature and the scalar curvature of the metric $g_{\alpha\beta}$, respectively. 
On the right hand side of the Einstein equation, $T_{\alpha\beta}$ is the stress-energy tensor of the matter field, a divergence free and symmetric 2-tensor. For most matter fields, $T_{\alpha\beta}$ satisfies the dominant energy condition. For a vacuum spacetime where $T_{\alpha\beta}=0$ (which implies $R_{\alpha \beta} =0$), one way of measuring the 
gravitational energy  is to consider the {\it Bel-Robinson tensor} \cite{Bel} 
\begin{equation}\label{Q} Q_{\mu\nu \alpha \beta} = W^{\rho \,\,\,\, \sigma}_{\,\,\,\, \mu \,\,\,\, \alpha}W_{\rho \nu \sigma \beta}+W^{\rho \,\,\,\, \sigma}_{\,\,\,\, \mu \,\,\,\, \beta}W_{\rho \nu \sigma \alpha} - \frac{1}{2}g_{\mu\nu}W_{\alpha}^{\,\,\,\,  \rho \sigma \tau}W_{\beta  \rho \sigma \tau},  \end{equation}
where $W_{\alpha \beta \gamma \delta}$ is the Weyl curvature tensor of the spacetime $N$. For a vacuum spacetime, the Bel-Robinson tensor is a divergence free and totally symmetric 4-tensor which also satisfies a certain positivity condition \cite[Lemma 7.1.1]{Christodoulou-Klainerman}.

We recall that given a spacelike 2-surface $\Sigma$ in a spacetime $N$, the Wang-Yau quasi-local energy $E(\Sigma, \mathcal X, T_0)$ (see \eqref{qle}) is defined in \cite{Wang-Yau1,Wang-Yau2} with respect to each pair $(\mathcal X, T_0)$ of an isometric embedding $\mathcal X$ of $\Sigma$ into the Minkowski space $\R^{3,1}$
and a constant future timelike unit vector $T_0\in \R^{3,1}$. If the spacetime satisfies the dominant energy condition and the pair $(\mathcal X,T_0)$ is admissible (see \cite[Definition 5.1]{Wang-Yau2}), it is proved that $E(\Sigma, \mathcal X, T_0) \ge 0$. 
The Wang-Yau quasi-local mass is defined to be the infimum of the quasi-local energy among all admissible pairs $(\mathcal X,T_0)$. The Euler-Lagrange equation for the critical points of the quasi-local energy is derived in  \cite{Wang-Yau2}. The Euler-Lagrange equation, coupled with the isometric embedding equation, is referred to as the optimal embedding equation, see \eqref{oee}. A solution to the equation is referred to as an optimal embedding.

When studying different notions of quasi-local energy, it is natural to evaluate the large sphere and the small sphere limits of the quasi-local energy and compare with the known measures of the gravitational energy in these situations. One expects the following \cite{Christodoulou-Yau, Penrose}:

\

1) For a family of surfaces approaching the spatial/null infinity of an isolated system (the large sphere limit), the limit of the quasi-local energy recovers the total energy-momentum of the isolated system.

2) For a family of surfaces approaching a point $p$ (the small sphere limit), the limit of the quasi-local energy recovers the stress-energy tensor for spacetimes with matter fields and the Bel-Robinson tensor for vacuum spacetimes.

\

There are many works on evaluating the large sphere and the small sphere limits of different notions of quasi-local energy. See for example \cite{blk1,blk2,Chen-Wang-Yau1,Chen-Wang-Yau2,fst,Geroch,Horowitz-Schmidt,Huiken-Ilmanen,Kwong-Tam,Miao-Tam-Xie,Wang-Yau3,Wiygul,yu}.  The list here is by no means exhaustive. For a more comprehensive review of different notions of quasi-local energy and their limiting behaviors, see \cite{sz}  and the references therein.

In a series of papers \cite{Chen-Wang-Yau1,Chen-Wang-Yau2,Wang-Yau3}, the above expectations for the Wang-Yau quasi-local energy were verified. One of the key observations in \cite{Wang-Yau3} (see \cite[Theorem 2.1]{Wang-Yau3}) is that for a family of surfaces $\Sigma_r$ and isometric embeddings $\mathcal X_r$, the limit of $E(\Sigma_r, \mathcal X_r, T_0)$ is a linear function of $T_0$ under the compatibility condition
\begin{equation} \label{ratio}
\lim_{r\rightarrow r_0} \frac{|H_0|}{|H|}=1,
\end{equation}
where $H$ and $ H_0$ are the mean curvature vectors of $\Sigma_r$ in $N$ and  the image of the isometric embedding $\mathcal X_r$ in $\R^{3,1}$, respectively. The compatibility condition \eqref{ratio} holds naturally in the large sphere limit ($r_0=\infty$) at both spatial and null infinity and the small sphere limit ($r_0=0$) around a point. In particular, \cite[Theorem 2.1]{Wang-Yau3} is used throughout the sequence of papers \cite{Chen-Wang-Yau1,Chen-Wang-Yau2,Wang-Yau3}.

In addition to the large sphere limit and the small sphere limit, there is another interesting situation where the compatibility condition holds naturally, namely, the limit of the quasi-local mass on unit spheres at infinity of an asymptotically flat spacetime. In a series of papers \cite{Chen-Wang-Yau6,Chen-Wang-Yau9}, we evaluated the limit at null infinity to capture the information of gravitational radiation. In particular, this is carried out in \cite{Chen-Wang-Yau9} for the Vaidya spacetime. In this note, we evaluate the limit for unit spheres at spatial infinity of an asymptotically flat spacetime, namely, at infinity of an asymptotically flat initial data set.

\begin{theorem} \label{main_theorem}
Let $(M, g, k)$ be an asymptotically flat initial data set as in \eqref{initial_data}. Let $\gamma$ be a geodesic on $M$ which is parametrized by arc-length and extends to infinity. Let $p=\gamma(d)$ be a point on $\gamma$ and $\Sigma$ be the unit geodesic sphere in $M$ that is centered at $p=\gamma(d)$.   The quasi-local mass $E(\Sigma, \mathcal X, T_0)$ for  $T_0 = (a_0, -a_1, -a_2, -a_3)$ has the following asymptotic behavior as $d\rightarrow \infty $ for each of the following isometric embeddings $\mathcal X$ of $\Sigma$.
\begin{enumerate}
\item For the isometric embedding $\mathcal X:\Sigma \rightarrow \R^3$, we have
\begin{align*}
E(\Sigma, \mathcal X, T_0) = \frac{1}{6} \lt( a_0 \mu(p) - a_i J^i(p) \rt) + O(d^{-3-2\alpha}),
\end{align*} where $\mu$ and $J^i$ are defined in \eqref{Hamiltonian constraint} and \eqref{momentum constraint}. 
\item Suppose the initial data set $(M,g,k)$ satisfies the vacuum constraint equation \eqref{vc}.  Let $N$ be the future development of $(M, g, k)$ with Weyl curvature $\bar W$. Let  $e_0, e_1, e_2, e_3$ be an orthonormal basis at $p$ with $e_0$ the unit timelike normal of $M$ in $N$. For $(\mathcal X,T_0)$ solving the leading order of the optimal embedding equation \eqref{oee}, we have
\begin{align*}
E(\Sigma, \mathcal X, T_0) &= \frac{1}{90} \lt( Q(e_0,e_0,e_0,T_0) + \frac{1}{2a_0} \bar W_{0i0j} \bar W\indices{_0^i_0^j} \rt) + O(d^{-4-3\alpha}). 
\end{align*}
\end{enumerate}
Here $\bar W_{0i0j} = \bar W(e_0, e_i, e_0 ,e_j)(p)$, $Q$ is the Bel-Robinson tensor of $N$ at $p$, and $T_0$ is identified with the timelike vector $a_0 e_0 + \sum_{i=1}^3 a_i e_i$ at $p$.
\end{theorem}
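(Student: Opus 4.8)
\medskip
\noindent\emph{Strategy.} The guiding picture is that sending $p = \gamma(d)$ to infinity turns the fixed unit geodesic sphere $\Sigma$ into a configuration governed entirely by asymptotic flatness. In geodesic normal coordinates centered at $p$, the data \eqref{initial_data} give $g_{ij} = \delta_{ij} + O(d^{-2-\alpha})$ and $k_{ij} = O(d^{-1-\alpha})$, with each derivative costing a further factor $d^{-1}$; hence the induced metric of $\Sigma$ is $C^2$-close to the round unit metric with error $O(d^{-2-\alpha})$, the norm of the mean curvature vector of $\Sigma$ in $N$ satisfies $|H| = 2 + O(d^{-2-\alpha})$, and the connection one-form in the mean curvature gauge is $\alpha_H = O(d^{-1-\alpha})$. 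The first step is to record these expansions together with the Taylor expansions about $p$ of $g$, $k$, and --- for part (2) --- of the spacetime Weyl tensor $\bar W$, keeping every term that can contribute at the claimed order. Since $\Sigma$ is a small perturbation of the round sphere, the isometric embeddings needed (into $\R^3$ in part (1), and the one solving the leading order of \eqref{oee} in part (2)) exist and are unique by the stability of the Weyl isometric embedding, and admit expansions in powers of $d^{-1}$.

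\medskip
\noindent\emph{Part (1).} For the Euclidean embedding $\mathcal X \colon \Sigma \to \R^3$ the reference surface is $\mathcal X(\Sigma) \subset \R^3$ with mean curvature $H_0$, and the time function is $\tau = -\langle \mathcal X, T_0 \rangle = a_i X^i$, of size $O(1)$. The plan is to substitute $\tau$, $|H|$, $\alpha_H$, and $H_0$ into the Wang-Yau energy \eqref{qle}, expand, and integrate over $\Sigma$. Because the compatibility condition \eqref{ratio} holds here ($|H_0|/|H| \to 1$), the analysis simplifies as in \cite[Theorem 2.1]{Wang-Yau3} and one may treat the $a_0$- and $a_i$-coefficients separately; moreover the $O(d^{-1-\alpha})$ pieces of the integrand (in particular of $\alpha_H(\nabla\tau)$) are odd on $S^2$ and integrate to zero. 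The contributions surviving down to the remainder involve only the curvature of $M$ at $p$ and the values of $k$ and $\nabla k$ at $p$ (higher Taylor coefficients enter only the remainder); the Gauss and Codazzi equations for $\Sigma \subset M$ convert the geometry of $\Sigma$ into ambient curvature, and the Hamiltonian and momentum constraints \eqref{Hamiltonian constraint}, \eqref{momentum constraint} package the result as $\tfrac16\bigl(a_0\mu(p) - a_i J^i(p)\bigr)$, the $\tfrac16$ coming from $\int_{S^2}\hat x^i\hat x^j$-type integrals. The omitted terms each carry an extra (odd) derivative of curvature or an extra factor of $k$, which gives the remainder $O(d^{-3-2\alpha})$.

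\medskip
\noindent\emph{Part (2) (vacuum).} Now $\mu = J = 0$, so the whole $O(d^{-2-\alpha})$ output of part (1) is absent, as are the would-be $O(d^{-3-\alpha})$ and $O(d^{-3-2\alpha})$ terms (odd on $S^2$, or annihilated by the vacuum Bianchi identity). One must therefore solve \eqref{oee} to leading order: linearized on the round sphere this is a fourth-order equation of the schematic form $\Delta(\Delta + 2)\tau = (\text{curvature forcing})$, whose solution $\tau$ is a low-harmonic function with coefficients linear in $\bar W_{0i0j}(p)$ and of size $O(d^{-2-\alpha})$. Feeding this $\tau$, together with the second-order expansions of $|H|$ and $\alpha_H$, into \eqref{qle} and expanding to order $O(d^{-4-2\alpha})$, the energy should organize into (i) a piece quadratic in the curvature at $p$, coming from the physical and reference mean curvature terms, which by the spherical-integration bookkeeping and the algebraic definition \eqref{Q} assembles into $\tfrac1{90}Q(e_0,e_0,e_0,T_0)$; and (ii) a piece from the backreaction of the optimal $\tau$, quadratic in $\bar W_{0i0j}$, which after integrating a low-harmonic square over $S^2$ produces the anomalous term $\tfrac1{90}\cdot\tfrac1{2a_0}\bar W_{0i0j}\bar W\indices{_0^i_0^j}$ (the $a_0^{-1}$ tracing back to $\tau = -\langle \mathcal X, T_0\rangle$ and the normalization built into the embedding). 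The vacuum second Bianchi identity is used throughout to trade derivatives of $\bar W$ and to check that the intermediate orders cancel, leaving remainder $O(d^{-4-3\alpha})$.

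\medskip
\noindent\emph{Main obstacle.} The crux is part (2). One has to (a) solve \eqref{oee} to the correct order over a perturbed sphere --- inverting a linearized fourth-order operator on $S^2$ and controlling the harmonic content of its forcing term; (b) carry out the long but finite expansion of \eqref{qle} to order $d^{-4-2\alpha}$ without dropping terms, which is where the new evaluation techniques promised in the abstract enter; and (c) perform the tensor algebra identifying the surviving curvature quadratic --- written first in $\Sigma$-adapted frames and in the curvature of $M$ --- as the spacetime Bel-Robinson combination $Q(e_0,e_0,e_0,T_0)$ together with precisely the stated multiple of $\bar W_{0i0j}\bar W\indices{_0^i_0^j}$, and verifying that no other independent quadratic Weyl invariant survives. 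Two consistency checks are that the coefficient $\tfrac1{90}$ should match the known small-sphere vacuum limit specialized to radius $1$, and that the $a_i$-dependence of $Q(e_0,e_0,e_0,T_0)$ agrees with the linearity in $T_0$ forced by \eqref{ratio} and \cite[Theorem 2.1]{Wang-Yau3}.
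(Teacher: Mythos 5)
Your outline follows the ``direct'' route --- expand every term of \eqref{qle} on the surface $\Sigma$ and integrate --- whereas the paper deliberately avoids this: it differentiates the Brown--York mass along the foliation $\Sigma(s)$ to rewrite the reference term as a bulk integral (Lemma \ref{lemma_Brown_York}), solves the Dirichlet problem for Jang's equation with the optimal $\tau$ as boundary data, and applies the Schoen--Yau identity to recast $E(\Sigma,\mathcal X,T_0)$ as the bulk integral of $\mu-J(\tilde e_4)$ plus manifestly quadratic remainders (Theorem \ref{mass by Jang's equation}). That difference of route would be acceptable if your expansion closed, but as written it does not.

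The gap is your treatment of the second fundamental form $k$. Since $k_{ij}(p)=O(d^{-1-\alpha})$ with $\alpha>\tfrac12$, the energy density contains terms quadratic in $k_{ij}(p)$ of order $d^{-2-2\alpha}$ --- for instance from $(\mathrm{tr}_\Sigma k)^2$ inside $|H|=\sqrt{H^2-(\mathrm{tr}_\Sigma k)^2}$, from $\alpha_H(\na\tau)$ paired with the correction to $\tau$, and from $\int\hat H$ once the embedding acquires a time component. These are \emph{even} on $S^2$ and are not removed by parity or by the Bianchi identity, yet for $\alpha<1$ they exceed the claimed leading term $O(d^{-3-\alpha})$ of part (1) and vastly exceed the $O(d^{-4-2\alpha})$ answer of part (2). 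Their cancellation is the structural heart of the proof, and it hinges on data you have suppressed: the leading correction to the optimal embedding is not ``of size $O(d^{-2-\alpha})$ and linear in $\bar W_{0i0j}$'' but contains the $O(d^{-1-\alpha})$ term $\mathcal X^0=\tfrac12 k_{ij}(p)\tilde x^i\tilde x^j+\cdots$ forced by the $k$-part of $\mathrm{div}\,\alpha_H$ in \eqref{oee} (Lemma \ref{optimal, general observer}). In the paper this choice propagates into the Jang graph, whose Hessian cancels $k_{ij}(p)$ pointwise so that $k_{ij}-D_iD_ju/\sqrt{1+|Du|^2}=S_{ijm}X^m+O(d^{-3-\alpha})$ with $S=O(d^{-2-\alpha})$, and the Schoen--Yau identity then exhibits the energy as a sum of squares of such quantities plus the gauge term --- all manifestly $O(d^{-4-2\alpha})$. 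Your sketch asserts the intermediate orders vanish without identifying this mechanism, and with the $\tau$ you describe they would not vanish; relatedly, the cross terms $k\otimes\pl k$ and $(\pl k)^2$ are what produce, via the Codazzi equation $\pl_m k_{ij}-\pl_i k_{mj}=-\bar W_{0jmi}+\cdots$, the $\bar W_{0imj}\bar W\indices{_0^{imj}}$ and $a_m\bar W\indices{_0^{imj}}\bar W_{0i0j}$ pieces of $Q(e_0,e_0,e_0,T_0)$, so they cannot be dismissed as error terms either.
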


Our investigation begins with the Brown-York mass. We compute the derivative of the Brown-York mass and use it to rewrite the Brown-York mass as a bulk integral. The integrand consists of the scalar curvature and quadratic terms of the difference of the physical and the reference data, see Lemma \ref{lemma_Brown_York}. We use Lemma \ref{lemma_Brown_York}  to evaluate the limit in Theorem \ref{Brown_York_thm}. The scalar curvature corresponds to the stress-energy tensor whereas the quadratic terms, which decay faster than the scalar curvature, correspond to the Bel-Robinson tensor. 

In the remaining part of this article, we consider the Wang-Yau quasi-local mass for initial data sets which are not necessarily time-symmetric. We start by solving the optimal embedding equation. The structure of the equation is similar to that of \cite{Chen-Wang-Yau2} for the small sphere limit and of \cite{Chen-Wang-Yau9} for unit spheres at null infinity of the Vaidya spacetime. After obtaining the optimal embedding, we use it as the Dirichlet boundary value to solve Jang's equation in the bulk. Using the Schoen-Yau identity from \cite{Schoen-Yau} and the canonical gauge for the quasi-local mass from \cite{Wang-Yau1}, we obtain Theorem \ref{mass by Jang's equation} which generalizes Theorem \ref{Brown_York_thm} for the Wang-Yau quasi-local mass. While the formula is more complicated, it still consists of the integral of the stress-energy tensor and some quadratic terms. In Section 5, we compute the terms appearing in Theorem \ref{mass by Jang's equation} explicitly and evaluate the limit. Theorem \ref{main_theorem} is obtained after assembling these results. We observe that the answer is very similar to the small sphere limit obtained in \cite{Chen-Wang-Yau2}. In Section 6, we demonstrate how the new approach of this article can be applied to recover the result of \cite{Chen-Wang-Yau2} for the small sphere limit.

\section{Review of the Wang-Yau quasi-local mass and asymptotical flatness}
Let $\Sigma$ be a closed spacelike 2-surface in a spacetime $N$ with spacelike mean curvature vector $H$. Denote the induced metric and connection one-form of $\Sigma$ by $\sigma$ and
\[ \alpha_H(\cdot) = \lt\langle \na^N_{(\cdot)} \frac{J}{|H|}, \frac{H}{|H|} \rt\rangle\]
where $J$ is the reflection of $H$ through the incoming light cone in the normal bundle. Given an isometric embedding $\mathcal X:\Sigma \rightarrow \R^{3,1}$ and future timelike unit Killing field $T_0$ in $\R^{3,1}$, we consider the projected embedding $\hat{\mathcal X}$ into the orthogonal complement of $T_0$, and denote the induced metric and the mean curvature of the image surface $\hat\Sigma$ by $\hat\sigma$ and $\hat H$.

The quasi-local energy with respect to $(\mathcal X,T_0)$ is
\begin{equation}\label{qle} E(\Sigma,\mathcal X,T_0) = \frac{1}{8\pi} \int_{\hat\Sigma} \hat H d\hat\Sigma - \frac{1}{8\pi} \int_{\Sigma} \lt( \sqrt{1 + |\na\tau|^2} \cosh\theta |H| - \na\tau \cdot \na\theta - \alpha_H(\na\tau )\rt) d\Sigma,\end{equation}
where $\na$ and $\Delta$ are the gradient and Laplace operator of $\sigma$, $\tau = - \langle \mathcal X, T_0 \rangle$ is considered as a function on the 2-surface, and
\[ \theta = \sinh^{-1} \lt( \frac{-\Delta\tau}{|H| \sqrt{1+|\na\tau|^2}} \rt). \]
Moreover, we say that $\tau$ solves the {\it optimal embedding equation} if
\begin{equation}\label{oee} div_{\sigma} \lt( \rho \na\tau - \na \lt[ \sinh^{-1} \lt( \frac{\rho\Delta\tau}{|H_0||H|}\rt)\rt] - \alpha_{H_0} + \alpha_H \rt) =0,\end{equation}
where
\[ \rho = \frac{\sqrt{|H_0|^2 + \frac{(\Delta\tau)^2}{1+|\na\tau|^2}} - \sqrt{|H|^2 + \frac{(\Delta\tau)^2}{1+|\na\tau|^2}}}{\sqrt{1+|\na\tau|^2}}. \]

Next we recall the definition of an asymptotically flat initial data set. 
\begin{definition}\label{initial_data}
$(M^3,g,k)$ is an {\it asymptotically flat} initial data set if, outside a compact set,
$M^3$ is diffeomorphic to  $ \R^3  \setminus \{ | x | \le r_0 \}$ for some $r_0>0$
and under the diffeomorphism, we have
\begin{equation}\label{asym_flat_g}
g_{ij} - \delta_{ij} = O ( | x |^{- \alpha} ) ,\  \partial g_{ij} = O ( |x|^{-1-\alpha}), \  \partial^2 g_{ij} = O (|x|^{-2 - \alpha} ), \  \partial^3 g_{ij} = O (|x|^{-3 - \alpha} ), 
\end{equation}
and
\begin{equation}\label{asym_flat_k}
k_{ij}  = O ( | x |^{- 1-\alpha} ) ,  \partial k_{ij} = O ( |x|^{-2-\alpha}), \  \partial^2 k_{ij} = O (|x|^{-3 - \alpha} )
\end{equation}
for some $\alpha>\frac {1}{2}$. Here $ \partial $ denotes the partial differentiation on $ \R^3$. Furthermore, we shall assume that for the constraint equation, we have
\begin{align}
\frac{1}{2} \lt( R(g) + (tr k)^2 - |k|^2 \rt)= \mu, \quad \mu  &= O (|x|^{-3 - \alpha} ), \pl \mu = O(|x|^{-4-\alpha})\label{Hamiltonian constraint} \\
D^i (k_{ij} - (trk) g_{ij}) = J_j, \quad J &=O (|x|^{-3 - \alpha} ), \pl J = O(|x|^{-4-\alpha})  \label{momentum constraint}
\end{align}
\end{definition}
Recall that an initial data set satisfies the dominant energy condition if
\[
\mu \ge |J|.
\]
On the other hand, an initial data set satisfies the vacuum constraint equation if 
\begin{equation}\label{vc}
\mu = 0  \   {\rm and}  \ J =0.
\end{equation}
In this case, there is a unique spacetime $N$ with initial data $(M^3,g,k)$ which solves the vacuum Einstein equation.

Let $\gamma$ be a geodesic on $M$ which is parametrized by arc-length and extends to infinity. We consider $p=\gamma(d)$ for $d\rightarrow \infty$. 
Consider the normal coordinate $(X^1,X^2,X^3)$ centered at $p$ and let $\Sigma(s)$ be the sphere of radius $s$ in the normal coordinate. 
The goal is to evaluate the quasi-local mass of the surface $\Sigma=\Sigma(1)$. In particular, we are interested in the leading order term in $d$.

The set-up of our calculation can be described as the following. On a unit ball $B$ of $\R^3$, there is a family of Riemannian metrics $g_{ij}(d)$ and symmetric 2-tensors $k_{ij}(d)$ parametrized by $d$. The metrics $g_{ij}(d)$ (the symmetric 2-tensors $k_{ij}(d)$, respectively) are the pull back of the metrics (the symmetric 2-tensors, respectively) on the unit geodesic ball centered at $\gamma (d)$, $d_0\leq d<\infty$, a geodesic on $M$ that extends to spatial infinity. We assume that
\begin{enumerate}
\item the standard Cartesian coordinate system $(X^1, X^2, X^3)$ is a geodesic coordinates system for each $g_{ij}(d)$ such that the origin of the coordinates system corresponds to $\gamma(d)$;
\item with respect to $(X^1, X^2, X^3)$, the asymptotic flat conditions \eqref{asym_flat_g} and \eqref{asym_flat_k} are satisfied with $|x|$ replaced by $d$;
\item the constraint equations \eqref{Hamiltonian constraint} and \eqref{momentum constraint} are satisfied with $|x|$ replaced by $d$. 
\end{enumerate}
Namely,
\begin{equation}
g_{ij}(d) - \delta_{ij} = O ( d^{- \alpha} ) ,\  \partial g_{ij}(d) = O ( d^{-1-\alpha}), \  \partial^2 g_{ij}(d) = O (d^{-2 - \alpha} ), \  \partial^3 g_{ij}(d) = O (d^{-3 - \alpha} ),
\end{equation}
and
\begin{equation}
k_{ij}(d)  = O ( d^{- 1-\alpha} ) ,  \partial k_{ij}(d) = O ( d^{-2-\alpha}), \  \partial^2 k_{ij}(d) = O (d^{-3 - \alpha} )
\end{equation}
for some $\alpha>\frac {1}{2}$, where $ \partial $ now denotes the partial differentiation with respect to $X^1,X^2,X^3$. Moreover,
\begin{align}
\frac{1}{2} \lt( R(g) + (tr k)^2 - |k|^2 \rt)= \mu, \quad \mu  &= O (d^{-3 - \alpha} ), \pl \mu = O(d^{-4-\alpha})\\
D^i (k_{ij} - (tr k) g_{ij}) = J_j, \quad J &=O (d^{-3 - \alpha} ), \pl J = O(d^{-4-\alpha}) .
\end{align}

In particular, let $R_{ij}(d)$ denote the Ricci curvature of $g_{ij}(d)$, by the Taylor expansion at a point in $B$ with respect to the geodesic coordinate system $(X^1, X^2, X^3)$, we have 
\begin{align}\label{expansion_geodesic}
g_{ij}(d)(X^1, X^2, X^3) &= \delta_{ij}-\frac{1}{3}R_{ikjl}(d) (0, 0, 0) X^k X^l + O(d^{-3-\alpha}),\\
R_{ij}(d)(X^1, X^2, X^3) &= R_{ij}(d)(0, 0, 0) + O(d^{-3-\alpha}),\\
k_{ij}(d)(X^1, X^2, X^3) &= k_{ij}(d)(0, 0, 0) + \pl_m k_{ij}(d) (0, 0, 0) X^m + O(d^{-3-\alpha}).
\end{align}
These expansions will be abbreviated as 
\begin{align}\label{abbreviated expansion}
g_{ij}&= \delta_{ij}-\frac{1}{3}R_{ikjl}(p) X^k X^l + O(d^{-3-\alpha}),\\
R_{ij} &= R_{ij}(p) + O(d^{-3-\alpha}),\\
k_{ij}&= k_{ij}(p) + \pl_m k_{ij}(p) X^m + O(d^{-3-\alpha}).
\end{align}
We will also use the spherical coordinate system $(s, u^1, u^2)$ on $B$ such that the coordinate transformation $(s, u^1, u^2)\mapsto (X^1, X^2, X^3)$ is given by  $X^i=s \tilde{x}^i (u^1, u^2), i=1, 2, 3$, where $\tilde{x}^i, i=1, 2, 3$ are the three standard coordinate functions on the standard unit sphere in $\R^3$. \\

\noindent{\it Notation:} Einstein summation notation will be used throughout the paper, where $i,j,\cdots$ sum from 1 to 3. Since we are working in normal coordinates, we can freely raise or lower indices for tensors at $p=\gamma(d)$. 
\section{The Brown-York mass}
In this section, we consider a time-symmetric initial data set and compute the limit of the Brown-York quasi-local mass. The starting point is the following lemma for the Brown-York quasi-local mass: Given a surface $\Sigma$ in a 3-manifold $(M,g)$, let  $R$ be the scalar curvature of $g$. Let $\Omega$ be the region in $M$ bounded by $\Sigma$. Suppose $\Omega$ is foliated by surfaces $\Sigma(s)$  with positive Gauss curvature where $0 < s \le 1$,  
$ \Sigma(1) = \Sigma $, and $\Sigma(s)$ shrinks to a point as $s$ tends to $0$. Let $\sigma(s)$ be the induced metric on $\Sigma(s)$. The positivity of the Gauss curvature of $\sigma(s)$ guarantees an isometric embedding into $\R^3$. Denote the mean curvature of  $\Sigma(s)$ in $M$ by $H(s)$ and the mean curvature of the isometric embedding of $\Sigma(s)$  into $\R^3$ by  $H_0(s)$. Let $h(s)$ and $h_0(s)$ be the second fundamental form of $\Sigma(s)$ in $M$ and $\R^3$, respectively.
\begin{lemma} \label{lemma_Brown_York}
The Brown-York quasi-local mass, $m_{BY}(\Sigma)$, of $\Sigma$ is
\[
m_{BY}(\Sigma) = \frac{1}{ 16 \pi}\int_{\Omega } \left( |h_0(s)-h(s)|^2 -   (H_0(s) - H(s)) ^2 + R \right)
\]
where $R$ is the scalar curvature of $g$. 
\end{lemma}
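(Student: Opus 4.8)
The plan is to exhibit $m_{BY}(\Sigma)$ as an integral in $s$ of the $s$‑derivative of the Brown–York masses of the leaves. Writing $m_{BY}(\Sigma(s)) = \tfrac{1}{8\pi}\int_{\Sigma(s)}(H_0(s)-H(s))\,d\sigma(s)$, the surfaces $\Sigma(s)$ shrink to a point, so $H_0(s)-H(s)=O(s)$ while $|\Sigma(s)|=O(s^2)$ and hence $\int_{\Sigma(s)}(H_0-H)\,d\sigma=O(s^3)\to 0$ as $s\to 0$ (this is the vanishing of the small‑sphere limit of the Brown–York mass). Therefore $m_{BY}(\Sigma)=\int_0^1\frac{d}{ds}\big[\tfrac{1}{8\pi}\int_{\Sigma(s)}(H_0-H)\,d\sigma\big]\,ds$, and it remains to compute this $s$‑derivative and match it, via the coarea formula $dV=f\,ds\,d\sigma$ with $f$ the lapse of the given foliation, against $\tfrac{1}{16\pi}$ times the claimed bulk integrand.

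For the physical term I would write the deformation vector of the foliation as $f\nu+V$ with $V$ tangent to $\Sigma(s)$, and use the first variation formulas $\delta(d\sigma)=fH\,d\sigma$ and $\delta H = -\Delta f-(|h|^2+\mathrm{Ric}(\nu,\nu))f$; the contributions of $V$ are exact divergences and drop on integrating over the closed surface $\Sigma(s)$. This gives $\frac{d}{ds}\int_{\Sigma(s)}H\,d\sigma=\int_{\Sigma(s)}(H^2-|h|^2-\mathrm{Ric}(\nu,\nu))f\,d\sigma$, which the contracted Gauss equation $H^2-|h|^2=2K-R+2\,\mathrm{Ric}(\nu,\nu)$ ($K$ the Gauss curvature of $\sigma(s)$) rewrites as $\int_{\Sigma(s)}(2K-R+\mathrm{Ric}(\nu,\nu))f\,d\sigma$. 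For the reference term, I run the same computation in flat $\R^3$ for the family of isometric embeddings $\Sigma_0(s)$ (well defined and smooth in $s$ up to a choice of rigid motion, by Weyl's embedding theorem and Cohn–Vossen rigidity), with deformation vector $f_0\nu_0+V_0$, obtaining $\frac{d}{ds}\int_{\Sigma_0(s)}H_0\,d\sigma=\int (H_0^2-|h_0|^2)f_0\,d\sigma=\int 2K f_0\,d\sigma$, since $H_0^2-|h_0|^2=2\det h_0=2K$ by the flat Gauss equation and $K_{\Sigma_0(s)}=K_{\Sigma(s)}$.

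The crux is to remove the mismatch between the two lapses $f$ and $f_0$. Since the induced metric $\sigma(s)$ is the same whether $\Sigma(s)$ sits in $M$ or as $\Sigma_0(s)\subset\R^3$, comparing the first variation of the metric, $\delta\sigma=2fh+\mathcal L_V\sigma=2f_0h_0+\mathcal L_{V_0}\sigma$, yields $fh-f_0h_0=\tfrac12\mathcal L_W\sigma$ for the tangent field $W=V_0-V$. Pairing with $h_0$ via the mixed discriminant $D(A,B)=\tfrac12(\mathrm{tr}A\,\mathrm{tr}B-\langle A,B\rangle)$ and integrating by parts, the Codazzi equation $\mathrm{div}\,h_0=\na H_0$ forces $\int_{\Sigma(s)}D(h_0,\mathcal L_W\sigma)\,d\sigma=0$, so $\int_{\Sigma(s)}2Kf_0\,d\sigma=\int_{\Sigma(s)}2\,D(h_0,h_0)f_0\,d\sigma=\int_{\Sigma(s)}2\,D(h_0,h)f\,d\sigma$, using $D(h_0,h_0)=\det h_0=K$. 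Combining, $\frac{d}{ds}\int_{\Sigma(s)}(H_0-H)\,d\sigma=\int_{\Sigma(s)}f\big[2D(h_0,h)-2K+R-\mathrm{Ric}(\nu,\nu)\big]\,d\sigma$; the algebraic identities $\det(h_0-h)=D(h_0,h_0)-2D(h_0,h)+D(h,h)$ and $|h_0-h|^2-(H_0-H)^2=-2\det(h_0-h)$, together with the Gauss equation for $\Sigma(s)$ in $M$, identify the bracket with $\tfrac12(|h_0-h|^2-(H_0-H)^2+R)$. Integrating in $s$ and applying the coarea formula gives the assertion.

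I expect the main obstacle to be exactly this lapse reconciliation: recognizing $fh-f_0h_0$ as a symmetrized covariant derivative and showing that its pairing with the Codazzi tensor $h_0$ integrates to zero — this is where the isometric embedding, and the appearance of the scalar curvature rather than the full Ricci tensor, are really used. Secondary bookkeeping points that need care are the smooth $s$‑dependence of the $\Sigma_0(s)$ (hence of $h_0(s)$, $H_0(s)$, $f_0$), and the verification that all tangential terms in the two first‑variation computations are genuine divergences on the closed leaves $\Sigma(s)$.
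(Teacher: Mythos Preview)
Your proposal is correct and follows essentially the same route as the paper: differentiate $m_{BY}(\Sigma(s))$ along the foliation, compute the variations of $\int H$ and $\int H_0$ separately, and combine via the Gauss equations. The only difference is packaging --- where the paper simply invokes \cite[Proposition~6.1]{Wang-Yau2} to get $\frac{d}{ds}\int_{\Sigma(s)} H_0 = \int_{\Sigma(s)} f\,(H_0 H - h\cdot h_0)$ with the \emph{physical} lapse $f$, you reprove this identity yourself: your ``lapse reconciliation'' via the Codazzi equation for $h_0$ is precisely the content of that proposition, and your $2D(h_0,h)f$ is exactly $f(H_0H - h\cdot h_0)$.
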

\begin{proof}
Assume that $\Sigma(s)$ are given by $F(x,s): \Sigma \times (0,1] \rightarrow M$ with $DF_{(x,s)}(\frac{\pl}{\pl s}) = f(x,s) \nu(x,s)$ where $\nu(x,s)$ is the unit normal of $\Sigma(s)$. We first show that the derivative of the Brown York quasi-local mass is given by 
\begin{align}\label{variation_Brown_York}
\frac{{\rm d}}{ {\rm d } s} E(\Sigma(s)) = \int_{\Sigma(s)} \frac{f}{2} \left( |h_0-h|^2 -   (H_0 - H) ^2 + R \right). 
\end{align}
The above formula is known, see \cite[Theorem 3.1]{Miao-Shi-Tam} for example. For completeness, we include the proof here. We have
\[
\frac{{\rm d}}{ {\rm d } s} \sigma(s) = 2f h.
\]
By Proposition 6.1 of \cite{Wang-Yau2}, 
\begin{equation} \label{Brown_York_1}
\frac{{\rm d}}{ {\rm d } s} \int_{\Sigma(s)} H_0(s)   = \int_{  \Sigma(s)}  f (H_0 H - h \cdot h_0 ).
\end{equation}
On the other hand, from the second variation formula, we have
\begin{equation} \label{Brown_York_2}
\frac{{\rm d}}{ {\rm d } s} \int_{\Sigma(s)} H(s) = \int_{  \Sigma(s)} f (H^2 -Ric(\nu,\nu) - |h|^2 ).
\end{equation}
The Gauss equations of  $\Sigma(s)$ in $\R^3$ and $M$ imply:
\[
\begin{split}
K = & \frac{1}{2} (H_0^2 - |h_0|^2),  \\
K = & \frac{R}{2}  -Ric(\nu,\nu) + \frac{1}{2} (H^2 - |h|^2)  .
\end{split}
\]
Taking the difference of the two Gauss equations, we obtain
\begin{equation} \label{Brown_York_3}
Ric(\nu,\nu)=  \frac{R}{2} +  \frac{1}{2} (H^2 - |h|^2 - H_0^2 + |h_0|^2) .
\end{equation}
The claim follows from subtracting \eqref{Brown_York_2} from  \eqref{Brown_York_1} and using  \eqref{Brown_York_3} to replace the Ricci curvature term in the result. The lemma follows from integrating \eqref{variation_Brown_York} along the foliation.
\end{proof}

In our setup, $\Sigma(s)$ is the sphere of radius $s$ in the normal coordinates centered at $p=\gamma(d) \in M$. The induced metric and second fundamental form of $\Sigma(s)$ are given by
\begin{align*}
\sigma_{ab} &= s^2 ( \tilde \sigma_{ab} - \frac{1}{3} R_{ikjl}(p) \tilde x^i_a \tilde x^j_b \tilde x^k \tilde x^l ) + O(d^{-3-\alpha})
\\
h(s)_{ab} &= s \tilde\sigma_{ab} + O(d^{-2-\alpha})\end{align*}
where $R_{ikjl} $ is the Riemann curvature tensor of the metric $g$ on $M$ and $\tilde x^i_a$ is a shorthand for $\pl_a \tilde x^i$.

We first compute the difference of the mean curvature and second fundamental form of $\Sigma(s)$ in $M$ and $\R^3$.
\begin{lemma} \label{lemma_mean_curvature_diff_1}
Consider the surface $\Sigma(s)$. We have
\begin{equation}
H_0(s)- H(s) =  -s R_{ij}(p)\tilde x^i \tilde x^j + O(d^{-3 - \alpha}) 
\end{equation}
and
\begin{equation}
h_0(s) - h(s) = s^3 R_{ij}(p) \tilde x^i_a \tilde x^j_b + O(d^{-3 - \alpha}).
\end{equation} 
\end{lemma}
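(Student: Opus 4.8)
The plan is to play the Gauss and Codazzi equations of $\Sigma(s)$ on the two sides of the isometric embedding off against each other, using that the induced metric $\sigma(s)$ is common to both sides and is $O(d^{-2-\alpha})$-close, on the unit ball, to the round metric $s^{2}\tilde\sigma$ of a sphere of radius $s$. As input I would record: (i) that $H(s)=\tfrac{2}{s}+O(d^{-2-\alpha})$ and that the outward unit normal is $\nu=\tilde x^{i}\partial_{i}+O(d^{-2-\alpha})$ — both immediate from the expansions of $\sigma(s)$, $h(s)$ above and $g_{ij}=\delta_{ij}+O(d^{-2-\alpha})$ on the unit ball — so that, as functions on $\Sigma(s)$, $Ric(\nu,\nu)=R_{ij}(p)\tilde x^{i}\tilde x^{j}+O(d^{-3-\alpha})$ and $Ric(\nu,\partial_{a})=sR_{ij}(p)\tilde x^{i}\tilde x^{j}_{a}+O(d^{-3-\alpha})$; (ii) that in the time-symmetric case the Hamiltonian constraint \eqref{Hamiltonian constraint} forces $R=2\mu=O(d^{-3-\alpha})$, $\partial R=O(d^{-4-\alpha})$; (iii) that, since the round metric $s^{2}\tilde\sigma$ is a non-degenerate (infinitesimally rigid) solution of the isometric embedding problem into $\R^{3}$, its $O(d^{-2-\alpha})$ deformation $\sigma(s)$ embeds as an $O(d^{-2-\alpha})$ deformation of the round sphere, so $H_{0}(s)-H(s)=O(d^{-2-\alpha})$ and $h_{0}(s)-h(s)=O(d^{-2-\alpha})$, with the estimates holding in $C^{1}$.

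For the mean curvature I would use the difference of the two Gauss equations, which is exactly \eqref{Brown_York_3}, $Ric(\nu,\nu)=\tfrac{R}{2}+\tfrac{1}{2}(H^{2}-|h|^{2}-H_{0}^{2}+|h_{0}|^{2})$. Writing $\delta H=H_{0}-H$ and $\delta h=h_{0}-h$, the isometry of the embedding gives $\delta H=tr_{\sigma}\,\delta h$; using $H=\tfrac{2}{s}+O(d^{-2-\alpha})$ and $h_{ab}=s\tilde\sigma_{ab}+O(d^{-2-\alpha})$ one gets $H_{0}^{2}-H^{2}=\tfrac{4}{s}\delta H+O(d^{-4-2\alpha})$ and $|h_{0}|^{2}-|h|^{2}=2\langle h,\delta h\rangle_{\sigma}+O(d^{-4-2\alpha})=\tfrac{2}{s}\delta H+O(d^{-4-2\alpha})$, so \eqref{Brown_York_3} becomes $\tfrac{2}{s}\delta H=R-2\,Ric(\nu,\nu)+O(d^{-4-2\alpha})$. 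Substituting $R=O(d^{-3-\alpha})$ and $Ric(\nu,\nu)=R_{ij}(p)\tilde x^{i}\tilde x^{j}+O(d^{-3-\alpha})$ gives $H_{0}(s)-H(s)=-sR_{ij}(p)\tilde x^{i}\tilde x^{j}+O(d^{-3-\alpha})$. Note that only the round-sphere leading terms of $H$ and $h$ enter; their $O(d^{-2-\alpha})$ corrections drop out.

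For the second fundamental form, decompose $\delta h=\tfrac{1}{2}(\delta H)\sigma+\Psi$ with $\Psi$ trace-free; the first piece is already known, $\tfrac{1}{2}(\delta H)\sigma_{ab}=-\tfrac{s^{3}}{2}R_{ij}(p)\tilde x^{i}\tilde x^{j}\tilde\sigma_{ab}+O(d^{-3-\alpha})$, and it remains to find $\Psi$. For this I would subtract the two Codazzi equations: in $\R^{3}$, $div_{\sigma}h_{0}=dH_{0}$, while in $M$, $div_{\sigma}h=dH+Ric(\nu,\cdot)^{\top}$ (in the conventions under which \eqref{Brown_York_3} holds); since $div_{\sigma}(f\sigma)=df$ this yields $div_{\sigma}\Psi=\tfrac{1}{2}\,d(\delta H)-Ric(\nu,\cdot)^{\top}$. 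Every term on the right is now an explicit spherical harmonic of degree $\le 2$ in $\tilde x$, and using that the Hessian of $\tilde x^{i}$ on the round $S^{2}$ equals $-\tilde x^{i}\tilde\sigma$, together with $\partial_{a}(R_{ij}\tilde x^{i}\tilde x^{j})=2R_{ij}\tilde x^{i}\tilde x^{j}_{a}$, the right side equals $-2sR_{ij}(p)\tilde x^{i}\tilde x^{j}_{a}+O(d^{-3-\alpha})$. Since the round $S^{2}$ admits no nonzero transverse-traceless symmetric $2$-tensor, $div_{\sigma}$ is injective on trace-free symmetric $2$-tensors there, so this equation has at most one solution; direct substitution verifies that $\Psi_{ab}=s^{3}(R_{ij}(p)\tilde x^{i}_{a}\tilde x^{j}_{b}+\tfrac{1}{2}R_{ij}(p)\tilde x^{i}\tilde x^{j}\tilde\sigma_{ab})+O(d^{-3-\alpha})$ is a solution, hence is the solution. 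Adding $\tfrac{1}{2}(\delta H)\sigma_{ab}$ back gives $h_{0}(s)-h(s)=s^{3}R_{ij}(p)\tilde x^{i}_{a}\tilde x^{j}_{b}+O(d^{-3-\alpha})$.

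The main obstacle is the Codazzi step: one needs the infinitesimal rigidity of the round sphere to conclude that $\Psi$ is pinned down by its divergence, and one must keep the $O(d^{-2-\alpha})$-versus-$O(d^{-3-\alpha})$ bookkeeping straight through several simultaneous approximations — the quadratic error terms in the Gauss expansion, the linearization of $div_{\sigma}$ about $s^{2}\tilde\sigma$, the identification of $\nu$ and of the coordinate fields $\partial_{a}$ with Euclidean directions, the replacement $R_{ij}(X)\to R_{ij}(p)$ (also after one derivative), and the use of $R=O(d^{-3-\alpha})$. A more computational alternative, which the later sections need anyway, is to expand the isometric embedding directly as $\mathcal X=sX_{0}+W$, solve the linearized isometric embedding equation on $S^{2}$ for the $O(d^{-2-\alpha})$ correction $W$, and read off $H_{0}$ and $h_{0}$ from $W$ and the perturbed normal; this avoids invoking rigidity but requires the explicit curvature correction to $\sigma(s)$ and more spherical-harmonic computation.
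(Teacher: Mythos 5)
Your proposal is correct and follows essentially the same route as the paper: the difference of the two Gauss equations (i.e.\ \eqref{Brown_York_3}) together with $R=O(d^{-3-\alpha})$ gives $H_0-H$, and the difference of the traced Codazzi equations, combined with the nonexistence of divergence-free trace-free symmetric $2$-tensors on $S^2$ and verification of an explicit particular solution, determines the trace-free part of $h_0-h$. The only additions are your explicit a priori $O(d^{-2-\alpha})$ closeness of the two embeddings (which the paper uses implicitly when discarding the quadratic terms) and the closing remark about an alternative direct computation.
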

\begin{proof}
We will use repeatedly an implication of \eqref{Hamiltonian constraint}  that $R = O(d^{-3-\alpha})$. Let $\hat h$ and $\hat h_0$ denote the traceless second fundamental forms. By the Gauss equations, 
\begin{align*}
2K &= R - 2 Ric(\nu,\nu) + \frac{1}{2} H^2 - \frac{1}{2} |\hat h|^2\\
&= \frac{1}{2} H_0^2 - \frac{1}{2} |\hat{h_0}|^2.
\end{align*}
Since the unit normal of $\Sigma(s)$ is $\nu = \tilde x^i \frac{\pl}{\pl X^i} + O(d^{-3-\alpha})$ and $Ric(\nu,\nu) = R_{ij}\tilde x^i \tilde x^j + O(d^{-3-\alpha})$, we get $H_0(s) - H(s) = -s R_{ij} \tilde x^i \tilde x^j + O(d^{-3-\alpha})$.

Taking the difference of the Codazzi equations for $\Sigma(s)$ in $M$ and $\R^3$ implies
\begin{align*}
\tilde \na^a (\hat h_{ab}(s) - \hat h_{0ab} (s)) &= -\frac{1}{2} \pl_b (H_0(s) - H(s)) - Ric(\nu,\pl_b) + O (d^{-4-2\alpha})\\
&= \pl_b (s R_{ij} \tilde x^i \tilde x^j) + O(d^{-3-\alpha}).
\end{align*}
One readily checks that
\[ \hat h_{ab} (s) - \hat h_{0ab}(s) = s^3 R_{ij} \tilde x^i_a \tilde x^j_b + \frac{s^3}{2} R_{ij} \tilde x^i \tilde x^j \tilde\sigma_{ab} \]
satisfies the above equation. Indeed, we find the unique solution as there is no divergence-free, traceless symmetric 2-tensor on $S^2$.
\end{proof}
We obtain the following result for the limit of the Brown-York mass:
\begin{theorem} \label{Brown_York_thm}
On $\Sigma=\Sigma(1)$, we have
\[
m_{BY}(\Sigma) = \frac{1}{6} \mu(p)+O(d^{-4 -\alpha}).
\]

For an initial data set satisfying the vacuum constraint equation, we have
\[
m_{BY}(\Sigma) = \frac{1}{ 60} Q(e_0,e_0,e_0,e_0)+O(d^{-5 -2 \alpha})
\]
where $Q$ is the Bel-Robinson tensor \eqref{Q} at $p$ of the solution to the vacuum Einstein equation with the time-symmetric initial data $(M,g)$.
\end{theorem}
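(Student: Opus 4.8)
\emph{Proof proposal.} The plan is to feed the bulk-integral identity of Lemma~\ref{lemma_Brown_York} with the foliation $\{\Sigma(s)\}_{0<s\le 1}$ by coordinate spheres of the geodesic normal coordinate system $(X^1,X^2,X^3)$ centered at $p$, so that $\Omega=\{|X|\le 1\}$ is exactly the unit coordinate ball, and then substitute the expansions of $H_0(s)-H(s)$ and $h_0(s)-h(s)$ from Lemma~\ref{lemma_mean_curvature_diff_1}.

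\emph{The general time-symmetric case.} By Lemma~\ref{lemma_mean_curvature_diff_1}, both $H_0(s)-H(s)$ and $h_0(s)-h(s)$ are $O(s\,d^{-2-\alpha})$, hence the quadratic terms $|h_0(s)-h(s)|^2-(H_0(s)-H(s))^2$ are $O(d^{-4-2\alpha})$ and contribute only $O(d^{-4-2\alpha})=O(d^{-4-\alpha})$ to $m_{BY}(\Sigma)$. The remaining contribution is $\frac{1}{16\pi}\int_\Omega R\, dV_g=\frac{1}{8\pi}\int_\Omega \mu\, dV_g$ by \eqref{Hamiltonian constraint} with $k=0$. Since $\pl\mu=O(d^{-4-\alpha})$ one has $\mu=\mu(p)+O(d^{-4-\alpha})$ on $\Omega$; and because we work in normal coordinates, $\sqrt{\det g}=1+O(d^{-2-\alpha})$ there, so $\mathrm{Vol}_g(\Omega)=\tfrac{4\pi}{3}+O(d^{-2-\alpha})$ and the cross term $\mu(p)\cdot O(d^{-2-\alpha})$ is $O(d^{-5-2\alpha})$. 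Therefore $\frac{1}{8\pi}\int_\Omega\mu\,dV_g=\frac{1}{8\pi}\cdot\tfrac{4\pi}{3}\mu(p)+O(d^{-4-\alpha})=\tfrac16\mu(p)+O(d^{-4-\alpha})$, which is the first identity.

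\emph{The vacuum case.} Now $\mu=0$ forces $R(g)\equiv 0$ (as $k=0$), so only the quadratic terms survive. Writing $P_{ab}=R_{ij}(p)\tilde x^i_a\tilde x^j_b$ and $q=R_{ij}(p)\tilde x^i\tilde x^j$, and raising indices with $\sigma^{ab}=s^{-2}\tilde\sigma^{ab}+\cdots$, Lemma~\ref{lemma_mean_curvature_diff_1} gives $|h_0(s)-h(s)|^2=s^2\,\tilde\sigma^{ac}\tilde\sigma^{bd}P_{ab}P_{cd}+O(d^{-5-2\alpha})$ and $(H_0(s)-H(s))^2=s^2q^2+O(d^{-5-2\alpha})$, so the integrand equals $s^2\bigl(\tilde\sigma^{ac}\tilde\sigma^{bd}P_{ab}P_{cd}-q^2\bigr)+O(d^{-5-2\alpha})$. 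Using $\tilde\sigma^{ab}\tilde x^i_a\tilde x^j_b=\delta^{ij}-\tilde x^i\tilde x^j$ together with $R(g)=0$, one finds $\tilde\sigma^{ac}\tilde\sigma^{bd}P_{ab}P_{cd}-q^2=R_{ij}(p)R^{ij}(p)-2R_{ik}(p)R_{jk}(p)\tilde x^i\tilde x^j$. Since $dV_g=s^2\,d\tilde\Sigma\,ds+\cdots$, the radial integral contributes $\int_0^1 s^4\,ds=\tfrac15$, and the angular integral is evaluated with $\int_{S^2}\tilde x^i\tilde x^j\,d\tilde\Sigma=\tfrac{4\pi}{3}\delta^{ij}$, giving $\int_{S^2}\bigl(\tilde\sigma^{ac}\tilde\sigma^{bd}P_{ab}P_{cd}-q^2\bigr)d\tilde\Sigma=\tfrac{4\pi}{3}R_{ij}(p)R^{ij}(p)$. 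Assembling, $m_{BY}(\Sigma)=\frac{1}{16\pi}\cdot\frac15\cdot\frac{4\pi}{3}R_{ij}(p)R^{ij}(p)+O(d^{-5-2\alpha})=\frac{1}{60}R_{ij}(p)R^{ij}(p)+O(d^{-5-2\alpha})$. It remains to identify $R_{ij}(p)R^{ij}(p)$ with $Q(e_0,e_0,e_0,e_0)$: with $k=0$ the Gauss equation gives $R_{ikjl}(g)=\bar R_{ikjl}(p)$ and Ricci-flatness of $N$ gives $\bar R=\bar W$, so $R_{ij}(g)$ agrees up to sign with the electric Weyl part $\bar W_{0i0j}$, while the magnetic part vanishes since the Codazzi equation reduces to $\bar R_{0ikl}|_M=0$ when $k=0$; hence $Q(e_0,e_0,e_0,e_0)=\bar W_{0i0j}\bar W\indices{_0^i_0^j}+0=R_{ij}(p)R^{ij}(p)$ by \eqref{Q} and \cite[Lemma 7.1.1]{Christodoulou-Klainerman}.

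\emph{Main obstacle.} The point that needs the most care is the sharp error $O(d^{-4-\alpha})$ in the first identity: this relies on the fact that in normal coordinates the volume element is $1+O(d^{-2-\alpha})$ rather than merely $1+O(d^{-\alpha})$, so that the deviation of $\mathrm{Vol}_g(\Omega)$ from $\tfrac{4\pi}{3}$, once multiplied by $\mu(p)=O(d^{-3-\alpha})$, drops below the error threshold. The remaining work — the traceless/trace bookkeeping in the vacuum computation and the sign and normalization check identifying $R_{ij}R^{ij}$ with the Bel-Robinson component — is routine.
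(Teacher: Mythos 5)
Your proposal is correct and follows essentially the same route as the paper: apply Lemma \ref{lemma_Brown_York} to the foliation by coordinate spheres, note that the quadratic terms contribute only $O(d^{-4-2\alpha})$ in the general time-symmetric case so that $\frac{1}{16\pi}\int_\Omega R = \frac16\mu(p)+O(d^{-4-\alpha})$, and in the vacuum case substitute the expansions of Lemma \ref{lemma_mean_curvature_diff_1} to get the integrand $s^2\bigl(R_{ij}R^{ij}-2R_{ik}R_{jk}\tilde x^i\tilde x^j\bigr)$, whose integral yields $\frac{1}{60}R_{ij}R^{ij}=\frac{1}{60}Q(e_0,e_0,e_0,e_0)$ via $\bar W_{0i0j}=R_{ij}$ and the vanishing of the magnetic part. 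The only cosmetic slip is that the simplification $\tilde\sigma^{ac}\tilde\sigma^{bd}P_{ab}P_{cd}-q^2=R_{ij}R^{ij}-2R_{ik}R_{jk}\tilde x^i\tilde x^j$ does not actually use $R(g)=0$ (the $q^2$ terms cancel identically); this does not affect the result.
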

\begin{proof}
On $\Omega$, $R = R(p) + O(d^{-4-\alpha})$ by (\ref{Hamiltonian constraint}). Applying Lemma \ref{lemma_Brown_York}, together with  
\begin{equation} 
\begin{split}
|h(s)-h_0(s)| = &O(d^{-2 - \alpha})\\
H_0(s)- H(s) =  &O(d^{-2-\alpha}),
\end{split}
\end{equation} 
we obtain the first formula.

For a vacuum initial data set, we compute
\begin{align*}
|h_0(s) - h(s)|^2 - (H_0(s) - H(s))^2 = s^2 R_{ij} R_{lm} (\delta^{il}\delta^{jm} - \delta^{il} \tilde x^j \tilde x^m - \delta^{jm} \tilde x^i \tilde x^l) + O(d^{-5-2\alpha}).
\end{align*}
Since $R=0$, we have
\begin{align*}
m_{BY}(\Sigma)= \frac{1}{16\pi} R_{ij} R^{ij} \cdot \frac{4\pi}{3} \int_0^1 s^4 ds +O(d^{-5-2\alpha})= \frac{1}{60} R_{ij}R^{ij} +O(d^{-5-2\alpha}).
\end{align*}
Finally, if $N$ is the solution to the Einstein equation with time-symmetric initial data $(M,g)$, its Weyl curvature satisfies
\begin{align*}
\bar W_{0i0j}&= R_{ij}, \\
\bar W_{0ijk} &=0.
\end{align*}
This finishes the proof of the theorem.
\end{proof}
\section{Optimal embedding equation and the Jang equation}
In this section, we describe our strategy to handle the second fundamental form  $k_{ij}$. We study the optimal embedding equation on $\Sigma$ and the Jang equation on $\Omega$. In particular, we first solve the optimal embedding equation on the boundary. Then we solve the Jang equation on the bulk $\Omega$ using the solution of the optimal embedding equation as the boundary value

Consider the product manifold $\Omega \times \R$ with the product metric $dt^2 + g_{ij} dX^i dX^j$. The data $k_{ij}, \mu, J$ are extended parallelly along the $\R$ factor. Jang's equation for $u \in C^2(\Omega)$ reads
\begin{align}
\lt( g^{ij} - \frac{D^iu D^ju}{1 + |D u|^2} \rt) \lt( k_{ij}-\frac{D_i D_j u}{\sqrt{1 + |Du|^2}} \rt)=0.
\end{align}  

Denote the graph of $u$ in $\Omega\times \R$ by $\tilde\Omega$ and $\tilde\Sigma = \pl \tilde\Omega$. Let $\tilde e_4$ be the downward normal of $\tilde\Omega$ and $Y_i = \lt( k - \frac{D^2 u}{\sqrt{1+|Du|^2}} \rt)\lt( \frac{\pl}{\pl X^i}, \tilde e_4 \rt) $.

Let $\tilde g$ be the induced metric of $\tilde\Omega$. Let $H_0$ be the mean curvature of the isometric embedding of $\tilde\Sigma$ into $\R^3$. We recall
\[ E(\Sigma, \mathcal X, T_0) = \frac{1}{8\pi} \lt( \int H_0 d\tilde\Sigma - \int \lt[ \sqrt{1+|\na\tau|^2} \cosh |H| + \Delta \tau \theta - \alpha_H(\na\tau) \rt] d\Sigma\rt)\]
where $\sinh\theta = \frac{-\Delta\tau}{|H|\sqrt{1+|\na\tau|^2}}$. By \cite[Theorem 4.1]{Wang-Yau2},
\[ 
\begin{split}
 & E(\Sigma, \mathcal X, T_0) \\
 = &\frac{1}{8\pi} \lt( \int \lt[ H_0 - \tilde H + \langle Y, \tilde e_3 \rangle \rt] d\tilde\Sigma + \int \lt[ |H|\sqrt{1+|\na\tau|^2} (\cosh\theta' - \cosh\theta) + \Delta\tau (\theta' - \theta)\rt] d\Sigma \rt),
 \end{split}
 \]
where $\theta'$ is defined by
\begin{align*}
\cosh\phi e_3 - \sinh\phi e_4 = \cosh\theta' e_3^H - \sinh\theta' e_4^H
\end{align*}
with $\sinh\phi = - \frac{u_3}{\sqrt{1+|\na\tau|^2}}$.
By the Schoen-Yau identity \cite[(2.29)]{Schoen-Yau}
\[ 2(\mu - J(\tilde{e}_4) ) = \tilde R - \lt| k_{ij} - \frac{D_i D_j u}{\sqrt{1+|Du|^2}} \rt|^2_{\tilde g} - 2 |Y|^2_{\tilde g} + 2 \tilde D^i Y_i.\] 
Together with Lemma \ref{lemma_Brown_York}, we obtain
\begin{theorem}\label{mass by Jang's equation}
\begin{align*}
E(\Sigma, \mathcal \mathcal X, T_0) &= \frac{1}{8\pi} \int \lt( \mu - J(\tilde e_4) \rt)d\tilde\Omega\\
&\quad + \frac{1}{16\pi} \int \lt[ \lt| k_{ij} - \frac{D_iD_ju}{\sqrt{1+|Du|^2}} \rt|^2_{\tilde g} + 2 |Y|^2_{\tilde g} \rt] d\tilde\Omega \\
&\quad + \frac{1}{16\pi} \int \lt[ |h_0(s) - h(s)|^2 - (H_0(s) - H(s))^2 \rt] d\tilde\Omega \\
&\quad + \frac{1}{8\pi} \int \lt[ |H|\sqrt{1+|\na\tau|^2} (\cosh\theta' - \cosh\theta) + \Delta\tau (\theta' - \theta)\rt] d\Sigma
\end{align*}
\end{theorem}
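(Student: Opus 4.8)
The starting point is the expression for $E(\Sigma,\mathcal X,T_0)$ furnished by \cite[Theorem 4.1]{Wang-Yau2} recalled just above: a boundary integral over $\tilde\Sigma=\pl\tilde\Omega$ of $H_0-\tilde H+\langle Y,\tilde e_3\rangle$, plus a correction integral over $\Sigma$ built from $\theta'-\theta$. The plan is to leave the $\Sigma$-integral untouched and to rewrite the $\tilde\Sigma$-integral as a bulk integral over $\tilde\Omega$. The key remark is that $\frac{1}{8\pi}\int_{\tilde\Sigma}(H_0-\tilde H)\,d\tilde\Sigma$ is exactly the Brown-York mass of $\tilde\Sigma$ regarded as a surface in the Riemannian region $(\tilde\Omega,\tilde g)$, so Lemma \ref{lemma_Brown_York} applies verbatim with the ambient region $\Omega$ replaced by $\tilde\Omega$, the scalar curvature $R$ by $\tilde R$, and the foliation $\Sigma(s)$ by the lift to the Jang graph of the geodesic-sphere foliation of $\Omega$. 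This gives
\[
\frac{1}{8\pi}\int_{\tilde\Sigma}(H_0-\tilde H)\,d\tilde\Sigma=\frac{1}{16\pi}\int_{\tilde\Omega}\lt(|h_0(s)-h(s)|^2-(H_0(s)-H(s))^2+\tilde R\rt)d\tilde\Omega .
\]

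Next I would dispose of the leftover boundary term $\frac{1}{8\pi}\int_{\tilde\Sigma}\langle Y,\tilde e_3\rangle\,d\tilde\Sigma$. Since $\tilde e_3$ is the outward unit normal of $\tilde\Sigma$ in $\tilde\Omega$ and $Y$ is the one-form on $\tilde\Omega$ introduced above, the divergence theorem turns this into $\frac{1}{8\pi}\int_{\tilde\Omega}\tilde D^iY_i\,d\tilde\Omega$. Now substitute the Schoen-Yau identity $\tilde R=2(\mu-J(\tilde e_4))+|k_{ij}-\frac{D_iD_ju}{\sqrt{1+|Du|^2}}|^2_{\tilde g}+2|Y|^2_{\tilde g}-2\tilde D^iY_i$ into the display above: the $-2\tilde D^iY_i$ piece, carrying the factor $\frac{1}{16\pi}$, produces $-\frac{1}{8\pi}\int_{\tilde\Omega}\tilde D^iY_i\,d\tilde\Omega$, which cancels exactly the divergence term coming from $\langle Y,\tilde e_3\rangle$. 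What survives is $\frac{1}{8\pi}\int_{\tilde\Omega}(\mu-J(\tilde e_4))$, together with $\frac{1}{16\pi}\int_{\tilde\Omega}\big(|k_{ij}-\frac{D_iD_ju}{\sqrt{1+|Du|^2}}|^2_{\tilde g}+2|Y|^2_{\tilde g}\big)$ and $\frac{1}{16\pi}\int_{\tilde\Omega}\big(|h_0(s)-h(s)|^2-(H_0(s)-H(s))^2\big)$; adding back the untouched $\Sigma$-integral yields precisely the four terms in the statement.

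The only step that is more than bookkeeping is the legitimacy of invoking Lemma \ref{lemma_Brown_York} on $(\tilde\Omega,\tilde g)$: one needs a $C^2$ solution $u$ of Jang's equation on $\Omega$ realizing the optimal-embedding Dirichlet data, a foliation of $\tilde\Omega$ by surfaces of positive Gauss curvature that shrink to a point, and the convexity required for isometric embedding into $\R^3$. Because $\tilde g$ differs from $\delta$ by $O(d^{-\alpha})$ and the spheres $\Sigma(s)$ are close to round, the lifted foliation inherits these properties for $d$ large, so the lemma is genuinely available; granting that, the theorem is the algebraic recombination described above. I expect this existence-and-regularity input for the boundary-value Jang equation, rather than any computation, to be the main technical obstacle.
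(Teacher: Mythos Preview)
Your proposal is correct and follows essentially the same approach as the paper: the paper's entire argument is the two sentences ``By the Schoen--Yau identity \ldots\ Together with Lemma \ref{lemma_Brown_York}, we obtain,'' and you have accurately unpacked what this means---applying Lemma \ref{lemma_Brown_York} on $(\tilde\Omega,\tilde g)$ to convert $\int(H_0-\tilde H)$ to a bulk integral involving $\tilde R$, using the divergence theorem on $\langle Y,\tilde e_3\rangle$, and then substituting the Schoen--Yau identity so that the $\tilde D^i Y_i$ terms cancel. Your remark about the hypotheses needed to invoke Lemma \ref{lemma_Brown_York} on the Jang graph is a legitimate technical point that the paper leaves implicit.
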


\section{Limit of the Wang-Yau mass}
In this section, we study the optimal embedding equation and the Jang equation with respect to the observer 
\[
T_0=(a_0,-a_1,-a_2,-a_3).
\]
Before restating our main result, Theorem \ref{main_theorem}, recall that in the vacuum case we view the initial data set $(M,g,k)$ as a spacelike hypersurface in its future development $N$ and denote the Weyl curvature of $N$ by $\bar W$. Let $e_0, e_1, e_2, e_3$ be an orthonormal basis at $p$ with $e_0$ the unit timelike normal of $M$ in $N$ and identify $T_0$ with the timelike vector $a_0 e_0 + \sum_{i=1}^3 a_i e_i$ at $p$.  
\begin{theorem}\label{main}
The quasi-local mass $E(\Sigma, \mathcal X, T_0)$ for  $T_0 = (a_0, -a_1, -a_2, -a_3)$ has the following asymptotic behavior as $d\rightarrow \infty $ for each of the following isometric embeddings $\mathcal X$ of $\Sigma$.
\begin{enumerate}
\item For the isometric embedding $\mathcal X:\Sigma \rightarrow \R^3$, we have
\begin{align*}
E(\Sigma, \mathcal X, T_0) = \frac{1}{6} \lt( a_0 \mu(p) - a_i J^i(p) \rt) + O(d^{-3-2\alpha}),
\end{align*} where $\mu$ and $J^i$ are defined in \eqref{Hamiltonian constraint} and \eqref{momentum constraint}. 
\item Suppose the initial data set $(M,g,k)$ satisfies the vacuum constraint equation \eqref{vc}.  For $(\mathcal X,T_0)$ solving the leading order of the optimal embedding equation \eqref{oee}, we have
\begin{align*}
E(\Sigma, \mathcal X, T_0) &= \frac{1}{90} \lt( Q(e_0,e_0,e_0,T_0) + \frac{1}{2a_0} \bar W_{0i0j} \bar W\indices{_0^i_0^j} \rt) + O(d^{-4-3\alpha}). 
\end{align*}
\end{enumerate}
Here $\bar W_{0i0j} = \bar W(e_0, e_i, e_0 ,e_j)(p)$ and $Q$ is the Bel-Robinson tensor of $N$ at $p$.
\end{theorem}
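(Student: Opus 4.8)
The plan is to deduce Theorem~\ref{main} from the master formula of Theorem~\ref{mass by Jang's equation} by carrying out a careful order-counting in the parameter $d$, completely parallel to the small-sphere analysis of \cite{Chen-Wang-Yau2}. The four terms in Theorem~\ref{mass by Jang's equation} are: (i) the bulk integral of $\mu - J(\tilde e_4)$, (ii) the nonnegative quadratic term in $k_{ij} - D_iD_ju/\sqrt{1+|Du|^2}$ and $Y$, (iii) the Brown--York-type quadratic term $|h_0-h|^2 - (H_0-H)^2$ already analyzed in Lemma~\ref{lemma_mean_curvature_diff_1}, and (iv) the boundary term measuring the discrepancy between $\theta'$ and $\theta$. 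For case (1), I would first solve the optimal embedding equation \eqref{oee} to leading order on $\Sigma = \Sigma(1)$: since $k_{ij} = O(d^{-1-\alpha})$, the leading behavior of $\tau = -\langle \mathcal X, T_0\rangle$ is governed by the time function $a_i X^i$ restricted to $\Sigma$, giving $\tau = O(d^{-1-\alpha})$ in the relevant (non-$\mathbb R^3$) direction, and for the embedding $\mathcal X:\Sigma\to\mathbb R^3$ one has $\tau$ constant so several terms drop out. Then solve Jang's equation on $\Omega$ with this Dirichlet data; the solution $u$ will be of size $O(d^{-1-\alpha})$, so $Y_i = O(d^{-1-\alpha})$ and the quadratic terms (ii) and (iii) are $O(d^{-3-2\alpha}) = o(d^{-1-\alpha})$ once $\alpha > 1/2$. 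The dominant contribution is $\frac{1}{8\pi}\int_{\tilde\Omega}(\mu - J(\tilde e_4))$; since $\tilde e_4 = e_0 + O(d^{-1-\alpha})$-type corrections and $\tilde\Omega$ is close to the flat unit ball, this evaluates to $\frac{1}{8\pi}\cdot\frac{4\pi}{3}(a_0\mu(p) - a_iJ^i(p))/a_0$ after normalizing $T_0$; I would track the $a_0$ normalization carefully so that the stated $\frac16(a_0\mu(p) - a_iJ^i(p))$ emerges (recalling $T_0$ is a unit vector, so $a_0^2 - |a|^2 = 1$), with the error $O(d^{-3-2\alpha})$ coming from the quadratic terms and from $\partial\mu, \partial J = O(d^{-4-\alpha})$.

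For case (2), the vacuum constraint forces $\mu = 0$ and $J = 0$, so term (i) vanishes identically and the answer is driven by the three quadratic/boundary terms, which are now the leading order. Here I need the genuine leading-order solution of \eqref{oee}: with $\mu = J = 0$ one has $R = O(d^{-3-\alpha})$ improved, and the relevant data are $R_{ij}(p) = O(d^{-2-\alpha})$ and $\partial_m k_{ij}(p) = O(d^{-2-\alpha})$, so $\tau$ and $u$ are of size $O(d^{-2-\alpha})$, making the quadratic terms $O(d^{-4-2\alpha})$ — precisely the order appearing in the theorem. I would expand each of the three surviving terms to this order: term (iii) reproduces exactly the Brown--York computation of Theorem~\ref{Brown_York_thm}, contributing $\frac{1}{60}R_{ij}R^{ij} = \frac{1}{60}\bar W_{0i0j}\bar W\indices{_0^i_0^j}$-type pieces; term (ii) contributes the $|k_{ij} - D_iD_ju/\sqrt{1+|Du|^2}|^2 + 2|Y|^2$ integrals which, after using Jang's equation and the Codazzi relation $D^i(k_{ij} - (\mathrm{tr}\,k)g_{ij}) = 0$, produce terms quadratic in $\partial k$ that assemble into components $\bar W_{0ijk}$ of the Weyl tensor; term (iv) contributes the cross terms linking $\tau$-data to $k$-data. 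Assembling these with the Gauss--Codazzi dictionary that expresses $\bar W_{0i0j}$ and $\bar W_{0ijk}$ in terms of $R_{ij}(p)$ and $\partial_mk_{ij}(p)$ (the initial-data decomposition of the Weyl tensor on a hypersurface), the combination reorganizes into $\frac{1}{90}(Q(e_0,e_0,e_0,T_0) + \frac{1}{2a_0}\bar W_{0i0j}\bar W\indices{_0^i_0^j})$ using the definition \eqref{Q} of the Bel--Robinson tensor, which in an orthonormal frame reads $Q(e_0,e_0,e_0,e_0) = |\bar W_{0i0j}|^2 + |\bar W_{0ijk}|^2/\text{(const)}$ and whose spatial components $Q(e_0,e_0,e_0,e_i)$ are the "super-Poynting" terms built from $\bar W_{0i0j}\bar W_{0ijk}$. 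The coefficient bookkeeping — matching $\frac{1}{90}$ and the $\frac{1}{2a_0}$ on the electric-part square — is where the small-sphere analogue in \cite{Chen-Wang-Yau2} serves as the template.

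The main obstacle I expect is the precise leading-order solution of the optimal embedding equation \eqref{oee} in the vacuum case and, correspondingly, the expansion of the boundary term (iv) involving $\cosh\theta' - \cosh\theta$ and $\theta' - \theta$. The functions $\theta$ and $\theta'$ both vanish at leading order (since $\Delta\tau$ and $u_3$ are small), so these differences are second-order quantities whose expansion requires knowing $\tau$ and $u$ to sub-leading order, not merely their size; extracting the correct quadratic functional of $R_{ij}(p)$ and $\partial_mk_{ij}(p)$ from this term, and checking it conspires with terms (ii) and (iii) to yield exactly the Bel--Robinson combination rather than some other quadratic invariant, is the delicate point. A secondary technical nuisance is keeping the $T_0 = (a_0,-a_1,-a_2,-a_3)$ normalization consistent throughout: the quasi-local energy is computed with a unit $T_0$, but the linearity in $T_0$ (via \cite[Theorem 2.1]{Wang-Yau3}, valid because the compatibility condition \eqref{ratio} holds here since $|H_0|/|H| \to 1$ as the surfaces become round unit spheres in a nearly flat metric) means the final answer should be read off by polarizing in $a_0, a_i$, and one must be careful that the $\frac{1}{2a_0}$ factor is genuine and not an artifact of normalization. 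I would finish by invoking the standard hypersurface decomposition of the spacetime Weyl tensor — $\bar W_{0i0j}$ equal to the electric part, expressible via $R_{ij}$ and $k$, and $\bar W_{0ijk}$ the magnetic part, expressible via $\partial k$ — to convert the initial-data expression into the stated spacetime-covariant form involving $Q$ and $\bar W$ at $p$.
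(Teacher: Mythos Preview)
Your proposal follows essentially the same approach as the paper: use Theorem~\ref{mass by Jang's equation} as the master formula, solve the optimal embedding equation (the paper's Lemma~\ref{optimal, general observer}) and Jang's equation (Lemma~\ref{solve_Jang}), then for case~(1) show that only the $\mu - J(\tilde e_4)$ term contributes at leading order, while for case~(2) evaluate the three quadratic terms separately (the paper's Lemmas~\ref{lemma_Schoen-Yau}, \ref{lemma_difference_2nd_ff}, \ref{lemma_gauge}) and reassemble them into Bel--Robinson form using the Gauss relation $\bar W_{0i0j} = R_{ij} + O(d^{-2-2\alpha})$.

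There is, however, a confusion in your treatment of case~(1) that, if followed literally, would lose the momentum term. For the embedding $\mathcal X:\Sigma\to\R^3$ with $T_0=(a_0,-a_1,-a_2,-a_3)$, the time function is $\tau = -\langle \mathcal X, T_0\rangle = a_i \mathcal X^i \approx a_i \tilde x^i$, which is \emph{not} constant; correspondingly the solution of Jang's equation has leading term $u = a_i X^i$, which is $O(1)$, not $O(d^{-1-\alpha})$. This $O(1)$ tilt is precisely what produces $\tilde e_4 = \frac{1}{a_0}(-1,a_1,a_2,a_3) + O(d^{-1-\alpha})$ and $d\tilde\Omega = a_0\,dx + O(d^{-2-\alpha})$, so that
\[
\frac{1}{8\pi}\int_{\tilde\Omega}\big(\mu - J(\tilde e_4)\big)\,d\tilde\Omega = \frac{1}{8\pi}\int_{B_1}\big(a_0\mu - a_i J^i\big)\,dx = \frac{1}{6}\big(a_0\mu(p) - a_i J^i(p)\big)
\]
with no leftover $a_0$ normalization to chase. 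If you instead take $u$ small and $\tilde e_4 \approx -\partial_t$, the $a_i J^i$ contribution disappears. Also, for case~(1) you should not be solving the optimal embedding equation at all: the embedding $\mathcal X$ is prescribed as the isometric embedding into $\R^3$, and $\tau$ is determined by it. Once these points are corrected, your outline matches the paper's proof.
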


\begin{remark}
We mostly work at the initial data level and the error term has order $O(d^{-5-2\alpha})$. Only when the result  is expressed in terms of the spacetime curvature using the Gauss equation of $N$, $
\bar W_{0i0j} = R_{ij} + O(d^{-2-2\alpha}),$
does the error become $O(d^{-4-3\alpha})$.  
\end{remark}

The outline of this section is as follows. We solve the optimal embedding equation and the Dirichlet problem of Jang's equation in the first two subsections and then evaluate each integral in Theorem \ref{mass by Jang's equation} in the subsequent three subsections. Finally, we put everything together to prove Theorem \ref{main}.

\subsection{Optimal embedding}
Let us begin with the optimal embedding equation.
\begin{lemma}\label{optimal, general observer}
The following pair $T_0=(a_0,-a_1,-a_2,-a_3)$ and 
\[
\begin{split}
\mathcal X^0 = & \frac{1}{2}k_{ij}(p) \tilde x^i \tilde x^j + \frac{1}{6}\partial_i k_{jm}(p) \tilde x^i \tilde x^j \tilde x^m + \frac{a_i R_{mn}(p)\tilde x^m \tilde x^n \tilde x^i}{6a_0} \\
\mathcal X^i =  &  \tilde x^i - \frac{1}{6} R_{in}(p)\tilde x^n - \frac{1}{6} R_{mn}(p) \tilde x^m \tilde x^n \tilde x^i 
\end{split}
\]
solves the first two order of the optimal embedding equation. In particular, the above solution gives a time function $\tau = - \langle \mathcal{X},T_0 \rangle$ with
\begin{align}\label{time function}
\tau = a_i \tilde x^i + a_0 \left [  \frac{1}{2}k_{ij}(p) \tilde x^i \tilde x^j + \frac{1}{6}\partial_i k_{jm}(p) \tilde x^i \tilde x^j \tilde x^m \right ] - \frac{1}{6} a_i R^i_n(p) \tilde x^n  
\end{align}
\end{lemma}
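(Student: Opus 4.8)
The plan is to solve the optimal embedding equation \eqref{oee} perturbatively in powers of $d^{-1}$, treating the curvature terms $R_{ij}(p) = O(d^{-3-\alpha})$ and the second fundamental form data $k_{ij}(p) = O(d^{-1-\alpha})$, $\pl_m k_{ij}(p) = O(d^{-2-\alpha})$ as small parameters. First I would record the reference data for the round unit sphere in $\R^3$: $|H_0| = 2$, $\alpha_{H_0} = 0$, and the metric $\sigma_{ab} = \tilde\sigma_{ab} + O(d^{-3-\alpha})$, together with the physical data on $\Sigma(1)$, namely $|H|$, $\alpha_H$, and $\sigma$, expanded to the relevant order using \eqref{abbreviated expansion} and the computations already done for Lemma \ref{lemma_mean_curvature_diff_1}. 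In particular the connection one-form $\alpha_H$ is controlled by the Codazzi equation and by $k$; one gets $\alpha_H(\pl_a) = k_{ij}\tilde x^i \tilde x^j_a + (\text{cubic in }\tilde x) + O(d^{-3-\alpha})$ after subtracting the appropriate gauge, analogous to \cite{Chen-Wang-Yau2} and \cite{Chen-Wang-Yau9}.

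Next I would make the ansatz that $\tau$ is a sum of an $\ell=1$ spherical harmonic piece $a_i\tilde x^i$ (the Minkowski-type term, present even in flat space), plus correction terms that are quadratic and cubic polynomials in $\tilde x$ with coefficients built from $k_{ij}(p)$, $\pl_m k_{ij}(p)$, and $R_{ij}(p)$. Plugging this ansatz into \eqref{oee}, I would expand $\rho$, $\sinh^{-1}(\rho\Delta\tau/|H_0||H|)$, and $\alpha_H - \alpha_{H_0}$ to leading and subleading order. Because $|H_0|^2 - |H|^2 = O(d^{-3-\alpha})$ and $\Delta\tau$ has a leading $O(d^{-1-\alpha})$ piece coming from the $k$-quadratic term, one checks which terms contribute at the first two orders $O(d^{-3-\alpha})$ and $O(d^{-4-\alpha})$ (or whatever the precise orders are once $a_0$ is normalized). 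The optimal embedding equation then reduces, order by order, to the statement that the divergence of an explicit vector field on $S^2$ vanishes; using the fact (already invoked in the proof of Lemma \ref{lemma_mean_curvature_diff_1}) that there is no nonzero divergence-free traceless symmetric 2-tensor on $S^2$, and the standard Hodge decomposition of one-forms and functions on $S^2$ into spherical harmonics, one pins down the coefficients uniquely. The $\mathcal X^i$ components are then determined by the isometric embedding equation: $\mathcal X^i$ must be a perturbation of $\tilde x^i$ whose induced metric matches $\sigma$, which forces the $-\frac16 R_{in}\tilde x^n - \frac16 R_{mn}\tilde x^m\tilde x^n\tilde x^i$ corrections (this is the same computation as realizing the metric $\tilde\sigma_{ab} - \frac13 R_{ikjl}\tilde x^i_a\tilde x^j_b\tilde x^k\tilde x^l$ in $\R^3$). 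Finally $\mathcal X^0$ is read off from $\tau = -\langle \mathcal X, T_0\rangle = a_0\mathcal X^0 - a_i\mathcal X^i$, which gives the displayed formula and \eqref{time function}.

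The main obstacle I anticipate is the bookkeeping of orders: the terms $k_{ij}(p)$, $\pl_m k_{ij}(p)$, and $R_{ij}(p)$ sit at three different decay rates ($d^{-1-\alpha}$, $d^{-2-\alpha}$, $d^{-3-\alpha}$), and $\alpha > \tfrac12$ only, so products like $k\cdot k = O(d^{-2-2\alpha})$ and $k\cdot\pl k$ must be compared carefully against the linear-in-curvature terms to decide what "the first two orders of the optimal embedding equation" actually means and which cross terms are negligible. A subtlety is that $\Delta\tau$ is not small at the naive order — its leading part is the $O(d^{-1-\alpha})$ contribution of $\tfrac12 a_0 k_{ij}\tilde x^i\tilde x^j$ — so $\rho\Delta\tau$ and $\sinh^{-1}(\cdots)$ must be expanded keeping this in mind, and one must verify that the quadratic-in-$k$ contributions to \eqref{oee} either cancel or are absorbed into the error term. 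I would organize this by first treating the purely $k$-dependent (time-symmetric-free) part of the equation, checking it is solved by the $k$-quadratic and $k$-cubic terms in $\tau$ exactly as in the corresponding lemma of \cite{Chen-Wang-Yau9}, and then adding the linear-in-$R$ correction, which decouples at the order we need and is solved by the $-\tfrac16 a_i R^i_n\tilde x^n$ term. The consistency check at the end is that $\tau$ has no pure $\ell=0$ or unwanted $\ell=1$ pieces beyond $a_i\tilde x^i$, so that the embedding is genuinely a critical point.
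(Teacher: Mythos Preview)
Your overall strategy---perturbative ansatz for $\tau$, verify it in the optimal embedding equation order by order, then solve the linearized isometric embedding for $\mathcal X^i$---is essentially the paper's, but the paper short-circuits the most labor-intensive step. Rather than expanding $\rho$, $\sinh^{-1}(\rho\Delta\tau/|H_0||H|)$, and $\alpha_{H_0}$ from the raw equation \eqref{oee}, the paper quotes from \cite[Section~7]{Chen-Wang-Yau3} the already-linearized form
\[
\tfrac{1}{2}\Delta(\Delta+2)\mathcal X^0 \;=\; div\,\alpha_H + \tfrac{a_i}{a_0}\Big[div\big((H_0-|H|)\nabla\tilde x^i\big)+\tfrac12\Delta\big((H_0-|H|)\tilde x^i\big)\Big] + O(d^{-3-\alpha}),
\]
a fourth-order elliptic equation for $\mathcal X^0$ alone. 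This packaging absorbs the $\alpha_{H_0}$-term (which is \emph{not} zero once $\mathcal X^0\neq 0$, contrary to your ``reference data'' claim) and all the $\rho$-expansion into the error, so one never has to track the cross terms you worry about. What remains is just computing $div\,\alpha_H$ and $H_0-|H|$ explicitly, recognizing the right-hand side as a combination of $-6$ and $-12$ eigenfunctions on $S^2$, and reading off $\mathcal X^0$. The $\mathcal X^i$ part is handled exactly as you describe, via \eqref{riemann3d} and the linearized isometric embedding into $\R^3$.

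A few concrete corrections to your bookkeeping: $R_{ij}(p)=O(d^{-2-\alpha})$, not $O(d^{-3-\alpha})$ (it is built from $\partial^2 g$); consequently $|H_0|-|H|=-R_{ij}\tilde x^i\tilde x^j+O(d^{-2-2\alpha})$ is $O(d^{-2-\alpha})$, not $O(d^{-3-\alpha})$. Also $\Delta\tau$ has an $O(1)$ leading piece $-2a_i\tilde x^i$, not $O(d^{-1-\alpha})$; this is precisely why one needs the linearization of \eqref{oee} about a nontrivial $T_0$ rather than about $\tau=0$, and why the $\frac{a_i}{a_0}(\cdots)$ term appears on the right-hand side above. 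Finally, $\tau=-\langle\mathcal X,T_0\rangle=a_0\mathcal X^0+a_i\mathcal X^i$, not $a_0\mathcal X^0-a_i\mathcal X^i$. None of these derail your plan, but they would bite if you actually carried out the expansion of \eqref{oee} directly; invoking the reduced equation from \cite{Chen-Wang-Yau3} is both faster and safer.
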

\begin{proof}
With $T_0=(a_0,-a_1,-a_2,-a_3)$, the optimal embedding equation reads
\[
\frac{1}{2} \Delta (\Delta + 2 ) \mathcal{X}^0 =  div \alpha_H + \frac{a_i}{a_0} \left[  div ((H_0 - |H|) \nabla \tilde x^i)  + \frac{1}{2} \Delta ((H_0 - |H|) \tilde x^i)  \right ]+ O(d^{-3-\alpha}).
\]
See \cite[Section 7]{Chen-Wang-Yau3}. We first compute
\begin{align*}
(\alpha_H)_a &= -k(\pl_a, \nu) + \pl_a \lt( \frac{\mbox{tr}_\Sigma k}{H} \rt) + O(d^{-3-3\alpha})\\
&= -2 k_{ij} \tilde x^i_a \tilde x^j - 2 \pl_m k_{ij}  \tilde x^i_a \tilde x^j \tilde x^m + \frac{1}{2} \pl_m k_{ii} \tilde x^m_a -\frac{1}{2} \pl_m k_{ij} \tilde x^i \tilde x^j \tilde x^m_a +O(d^{-3-\alpha}).
\end{align*}
Using the Codazzi equation, $\pl_i k_{im} = \pl_m k_{ii} + O(d^{-3-2\alpha})$, we obtain
\begin{align*}
div \alpha_H &= -2 k_{ij} \lt( \delta^{ij} - 3 \tilde x^i \tilde x^j \rt) + 10 \pl_m k_{ij} \lt( \tilde x^i \tilde x^j \tilde x^m - \frac{1}{5} \delta^{ij} \tilde x^m - \frac{1}{5} \delta^{im} \tilde x^j - \frac{1}{5} \delta^{jm} \tilde x^i \rt) + O(d^{-3-\alpha})
\end{align*}
Note that $\delta^{ij} - 3 \tilde x^i \tilde x^j$ and $\tilde x^i \tilde x^j \tilde x^m - \frac{1}{5} \delta^{ij} \tilde x^m - \frac{1}{5} \delta^{im} \tilde x^j - \frac{1}{5} \delta^{jm} \tilde x^i$ are $-6$ and $-12$ eigenfunctions respectively.

On the other hand, by Lemma \ref{lemma_mean_curvature_diff_1} and $|H| = \sqrt{H^2 - (\mbox{tr}_\Sigma k)^2}$, we have 
\[ H_0 - |H| =  - R_{ij} \tilde x^i \tilde x^j + O(d^{-2-2\alpha})
\]
and 
\[
  div ((H_0 - |H|) \nabla \tilde x^i)  + \frac{1}{2} \Delta ((H_0 - |H|) \tilde x^i)  = 10 R_{mn} \tilde x^m \tilde x^n \tilde x^i - 4 R_{in} \tilde x^n + O(d^{-2-2\alpha})
\]
It follows that the given $\mathcal X^0$ satisfies the equation up to error of the order $ O(d^{-2-2\alpha})$.

For the $\mathcal X^i$, we use the well-known formula of Riemann curvature tensor in 3-dimension 
\begin{align}\label{riemann3d}
R_{ikjl} = g_{ij} R_{kl} - g_{il}R_{kj}+g_{kl}R_{ij}-g_{kj}R_{il} - \frac{R}{2} (g_{ij}g_{kl}-g_{il}g_{kj})
\end{align} and \eqref{Hamiltonian constraint} to show that the induced metric is
\[
\begin{split}
\sigma_{ab} =  \tilde \sigma_{ab} - \frac{1}{3} R_{kl}\tilde x^k_a \tilde x^l_b  -\frac{1}{3} R_{ij} \tilde x^i \tilde x^j \tilde\sigma_{ab} + O(d^{-2-2\alpha}).
\end{split}
\]
The lemma follows from the linearized isometric embedding equation into $\R^3$.
\end{proof}

\subsection{Jang's equation}
We work in local coordinates and $\Omega$ is identified with $B_1 \subset \R^3$. We discuss the solution of Dirichlet problem of Jang's equation
\begin{align}
\lt\{ \begin{array}{cl}
\lt( g^{ij} - \frac{D^iu D^ju}{1 + |D u|^2} \rt) \lt( k_{ij}-\frac{D_i D_j u}{\sqrt{1 + |Du|^2}} \rt)=0 & \mbox{ in } B_1\\
u = \tau & \mbox{ on } \pl B_1
\end{array} \rt.
\end{align}  
Let $b_{ij} = g_{ij} - \delta_{ij} + \frac{1}{3} R_{ikjl}(p) X^k X^l$ and $c_{ij} = k_{ij} - k_{ij}(p) - \pl_m k_{ij}(p) X^m$. By Definition \ref{initial_data} and \eqref{expansion_geodesic}, we have \begin{align}\label{weak_decay_assumption}
\| b_{ij}\|_{C^1 (B_1)}, \| c_{ij}\|_{C^1(B_1)} \le C d^{-3-\alpha}.
\end{align}

\begin{lemma}\label{solve_Jang}
Let $u$ be the solution of the Dirichlet problem of Jang's equation on $\Omega$ with boundary value $\tau$ given in \eqref{time function}. Then $u = a_i X^i + \frac{1}{6} T_i X^i + \frac{1}{2} a_0 k_{ij} X^i X^j +  \frac{1}{6} B_{ijm} X^iX^jX^m - \frac{1}{6} a_i R_{in} X^n + v$, where
\begin{align*}
T_m &= -\frac{1}{4 + \frac{1}{a_0^2}} \cdot \frac{2}{a_0}  a_i a_j \bar W\indices{_0^i_m^j}(p) + \frac{1}{4 + \frac{1}{a_0^2}} \cdot 2  a_l R^l_m(p) + \frac{1}{(4+\frac{1}{a_0^2})(2+ \frac{3}{a_0^2})}\cdot \frac{4a_m}{a_0}  a_i a_j R^{ij}(p)\\
B_{ijm} &= \frac{a_0}{3}(\pl_i k_{jm}(p) + \pl_j k_{im}(p) + \pl_m k_{ij}(p) ) - \frac{1}{3}(\delta_{ij} T_m + \delta_{jm} T_i + \delta_{mi} T_j),
\end{align*}
and $\| v \|_{C^{2,\beta}} \le C' d^{-3-\alpha}$ for some constant $C'$ and $0 < \beta < 1$ depending only on $C$ in \eqref{weak_decay_assumption}.
\end{lemma}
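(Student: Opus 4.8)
\emph{Plan of proof.} The strategy is to seek $u$ in the form $u = w + v$, where $w$ is the explicit cubic polynomial in $X$ written in the statement and $v$ is a small error. First I would verify that $w$ already attains the boundary value: on $\partial B_1$ one has $X^i = \tilde x^i$, and inserting the definition of $B_{ijm}$ (the symmetrized gradient $a_0$ times $\mathrm{Sym}(\partial_i k_{jm})$ minus the trace term $\tfrac13(\delta_{ij}T_m + \delta_{jm}T_i + \delta_{mi}T_j)$) gives $\tfrac16 B_{ijm}\tilde x^i\tilde x^j\tilde x^m = \tfrac{a_0}{6}\partial_i k_{jm}(p)\,\tilde x^i\tilde x^j\tilde x^m - \tfrac16 T_m\tilde x^m$; the summand $-\tfrac16 T_m\tilde x^m$ cancels the $\tfrac16 T_i X^i$ term of $w$, and what is left is exactly $\tau$ of \eqref{time function}. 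Hence $w|_{\partial B_1} = \tau$ for \emph{any} vector $T_m$, and $v|_{\partial B_1} = 0$. Substituting $u = w + v$ into Jang's equation turns it into a quasilinear Dirichlet problem for $v$ with vanishing boundary data, schematically $a^{ij}(x, Dw + Dv)\,D_i D_j v + (\text{terms of lower order in } v) = -\mathcal J[w]$, where $\mathcal J[w]$ denotes the Jang expression evaluated on $w$ and $a^{ij}(x,q) = g^{ij} - q^i q^j/(1 + |q|^2)$; this operator is uniformly elliptic since $|Dw|$ is bounded.

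The core of the argument is the estimate $\|\mathcal J[w]\|_{C^{0,\beta}(B_1)} \le C d^{-3-\alpha}$, which is what fixes $T_m$ and hence $B_{ijm}$. Using $g_{ij} = \delta_{ij} - \tfrac13 R_{ikjl}(p) X^k X^l + b_{ij}$, $k_{ij} = k_{ij}(p) + \partial_m k_{ij}(p) X^m + c_{ij}$, and $Dw = a + O(d^{-1-\alpha})$, one checks that $k_{ij} - D_i D_j w/\sqrt{1+|Dw|^2}$ has no term of order $d^{-1-\alpha}$: the Hessian of $\tfrac12 a_0 k_{ij}(p) X^i X^j$ is $a_0 k_{ij}(p)$, and dividing by $\sqrt{1+|Dw|^2} = a_0 + O(d^{-1-\alpha})$ --- here $a_0^2 - |a|^2 = 1$ is used --- returns $k_{ij}(p)$, cancelling the $k_{ij}(p)$ coming from $k$; hence $\mathcal J[w] = O(d^{-2-\alpha})$. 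The order $d^{-2-\alpha}$ part of $\mathcal J[w]$ is linear in $X$, and contracting it with the leading coefficient matrix $P^{ij} := \delta^{ij} - a^i a^j/a_0^2$ produces an algebraic identity of the form $P^{ij}\bigl(\partial_m k_{ij}(p) - \tfrac1{a_0} B_{ijm} + \tfrac1{a_0}(\text{Christoffel terms of } g, \text{ expressed through } R_{ikjl}(p)) + (\text{quadratic in } k)\bigr) = 0$. Here the three-dimensional identity \eqref{riemann3d} turns the Riemann and Christoffel contributions into the Ricci tensor, the Codazzi relation (momentum constraint) handles the divergence of $k$, and the Hamiltonian constraint controls $R(g)$. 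Because $P^{ij}$ meets the $T$-trace part of $B_{ijm}$ via $\mathrm{tr}\,P = 2 + a_0^{-2}$ and $P^{ij} a_j = a^i/a_0^2$, this reduces to a $3 \times 3$ linear system for $T_m$ whose matrix has eigenvalue $4 + a_0^{-2}$ on the orthogonal complement of $a$ and $2 + 3a_0^{-2}$ along $a$; inverting it yields exactly the three terms displayed for $T_m$, and the symmetric trace-free part of $B_{ijm}$ is then pinned down by the boundary matching of the first paragraph (it must equal $a_0$ times that of $\partial_i k_{jm}(p)$). Finally, since $\|b_{ij}\|_{C^1}, \|c_{ij}\|_{C^1} \le C d^{-3-\alpha}$, and hence their $C^{0,\beta}$ norms are $\le C'd^{-3-\alpha}$, every remaining contribution to $\mathcal J[w]$ --- including the quadratic-in-$k$ terms of order $d^{-2-2\alpha}$, which is why it is convenient to write $T_m$ through the spacetime Weyl component $\bar W\indices{_0^i_m^j}(p)$ rather than purely through the initial data --- is of order $d^{-3-\alpha}$ in $C^{0,\beta}$.

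Once the source term is controlled, the remaining step is standard: the Dirichlet problem for Jang's equation with this smooth boundary data is well posed, and the method of continuity together with the interior and boundary Schauder estimates for the linearization about $w$ --- whose coefficients lie in $C^{0,\beta}$ and differ from those of the model operator built from $\delta_{ij} - \tfrac13 R_{ikjl}(p) X^k X^l$ and $k_{ij}(p) + \partial_m k_{ij}(p) X^m$ by quantities of size $d^{-3-\alpha}$ --- gives a unique small solution $v$ with $v|_{\partial B_1} = 0$ and $\|v\|_{C^{2,\beta}(B_1)} \le C' d^{-3-\alpha}$, with $C'$ and $\beta$ depending only on the constant $C$ in \eqref{weak_decay_assumption}.

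The step I expect to be the main obstacle is the order $d^{-2-\alpha}$ bookkeeping in the second paragraph: extracting from the projected Jang operator the precise linear system satisfied by $T_m$, keeping accurate track of how $P^{ij}$ and its trace generate the coefficients $4 + a_0^{-2}$ and $2 + 3a_0^{-2}$, and then checking that the resulting symmetric $3$-tensor $B_{ijm}$ is consistent with the boundary matching. Keeping the quadratic-in-$k$ terms under control is a secondary point of care.
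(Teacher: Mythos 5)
Your proposal is correct and follows essentially the same route as the paper: the same polynomial ansatz, the same reduction of the leading order of Jang's equation (after projecting with $\delta^{ij}-a^ia^j/a_0^2$) to the $3\times 3$ system with matrix $(4+a_0^{-2})\delta_{im}-2a_ia_ma_0^{-2}$, whose eigenvalues $4+a_0^{-2}$ and $2+3a_0^{-2}$ you identify correctly, and the same barrier-plus-Schauder machinery for $v$. The only cosmetic difference is that you verify directly that the stated $B_{ijm}$ restricts to $\tau$ on $\partial B_1$ for every choice of $T_m$, whereas the paper derives the pure-trace form of $B_{ijm}$ from orthogonality to the degree-one and degree-three spherical harmonics.
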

\begin{proof}
We write $u = a_i X^i + b_i X^i + \frac{1}{2} a_0 k_{ij} X^i X^j + \frac{1}{6} B_{ijm}X^iX^jX^m - \frac{1}{6} a_i R_{in} X^n + v$ for constants $b_i, B_{ijm}$ to be determined from the leading order of Jang's equation.

Define a symmetric 3-tensor
\[ T_{ijm} = \frac{a_0}{3}(\pl_i k_{jm} + \pl_j k_{mi} + \pl_m k_{ij}) - B_{ijm}.\]

We need to show that $\tau-u$ is perpendicular to all $-2$ and $-12$ eigenfunctions on $\pl B_1=S^2$ which,  by the expression \eqref{time function} of $\tau$, is equivalent to  
\begin{align*}
&\int_{S^2} \lt( \frac{1}{6}T_{ijm} \tilde x^i \tilde x^j \tilde x^m - b_i \tilde x^i \rt) \tilde x^l dS^2=0, \\
&\int_{S^2} T_{ijm} \tilde x^i \tilde x^j \tilde x^m (\tilde x^l \tilde x^p \tilde x^q - \frac{1}{5} \tilde x^l \delta^{pq} - \frac{1}{5} \tilde x^p \delta^{ql} - \frac{1}{5} \tilde x^q \delta^{lp}) dS^2 =0.
\end{align*}
The second equation contains 7 linear equations with 10 variables. Using  \cite[Lemma 5.3]{Chen-Wang-Yau4}, we solve $T_{ijm}$ by free variables $T_{111}, T_{222}, T_{333}$:
\[ T_{ijm} = \frac{1}{3} (\delta_{ij} T_m + \delta_{jm} T_i + \delta_{mi} T_j), \quad T_m := T_{mmm}. \]
and then solve $b_i = \frac{1}{6} T_i$. 

It remains to solve $T_m$ from Jang's equation
\[ (g^{ij} - \frac{D^iu D^ju}{1 + |Du|^2}) (k_{ij} - \frac{D_iD_j u}{\sqrt{1 + |Du|^2}} ) =0. \]
By the Codazzi equation, $\pl_m k_{ij} - \pl_i k_{mj} = - \bar W_{0jmi} + O(d^{-3-2\alpha})$, it follows that 
\[ k_{ij} - \frac{u_{ij}}{\sqrt{1+|Du|^2}} = S_{ijm}X^m + O(d^{-3-\alpha}) \]
where
\begin{align}\label{S}
S_{ijm}=\frac{1}{3} \lt( \frac{1}{a_0}( \delta_{ij} T_m + \delta_{jm}T_i - \delta_{mi} T_j) - \bar W_{0imj} - \bar W_{0jmi}+ \frac{a_l}{a_0}(R_{iljm} + R_{imjl})\rt).
\end{align}
The leading order of Jang's equation thus reduces to 3 linear equations \[ \sum_{i,j} (\delta_{ij} - \frac{a_ia_j}{a_0^2}) S_{ijm} =0.\]
We have 
\begin{align*}
0 &= \sum_{i,j} (\delta_{ij} - \frac{a_i a_j}{a_0^2})( \delta_{ij} T_m + \delta_{jm}T_i + \delta_{mi} T_j ) + \frac{a_ia_j}{a_0} (\bar W\indices{_0^i_m^j} + \bar W\indices{_0^j_m^i}) - 2 a_l R^l_m \\
&= \sum_i G_{im} T_i + \frac{a_ia_j}{a_0} (\bar W\indices{_0^i_m^j} + \bar W\indices{_0^j_m^i}) - 2 a_l R^l_m, 
\end{align*}
where $G_{im} = (4 + \frac{1}{a_0^2} )\delta_{im} -2 \frac{a_ia_m}{a_0^2}$. We solve for the inverse matrix of $G_{im}$ \[ ( G^{-1})_{ml} = \frac{1}{(4+ \frac{1}{a_0^2}) (2 + \frac{3}{a_0^2})} \lt[ (2 + \frac{3}{a_0^2}) \delta_{ml} + 2 \frac{a_ma_l}{a_0^2}\rt] \]
to get
\[ T_m = -\frac{1}{4 + \frac{1}{a_0^2}} \cdot \frac{2a_ia_j}{a_0} \bar W\indices{_0^i_m^j} + \frac{1}{4 + \frac{1}{a_0^2}} \cdot 2 a_l R^l_m + \frac{1}{(4+\frac{1}{a_0^2})(2+ \frac{3}{a_0^2})}\cdot \frac{4a_m}{a_0^2} R^{ij}a_i a_j. \]

After obtaining the leading order of $u$, we treat Jang's equation as a quasilinear partial differential equation $Q(x, Dv)v=0$ in $B_1$ and $v=0$ on $\pl B_1$. By \eqref{weak_decay_assumption}, we can choose a constant $C'$ that depends only on $C$ such that $ \pm C' d^{-3-\alpha} \lt( |x|^2 - 1 \rt)$ is a sub/super solution to this equation. This provides the $C^0$-estimate and the boundary gradient estimate. By \cite[Theorem 15.1]{GT} and \cite[Theorem 13.7]{GT}, we get the gradient estimate and H\"{o}lder estimate for the gradient. The $C^{2,\beta}$ a priori estimate for $v$ and the solvability of $u$ then follows from the Schauder estimate and \cite[Theorem 11.4]{GT}.
\end{proof}

\begin{proof}[Proof of Theorem \ref{main}, (1)]
We first examine the limit for an initial data set with matter fields. We will show below that all terms except $\mu - J(\tilde e_4)$ are of the order $O(d^{-4-2\alpha})$. As a result,
\[
E(\Sigma) = \frac{1}{8\pi} \int (\mu - J(\tilde e_4)) d\tilde\Omega + O(d^{-4-2\alpha}). 
\]
The assertion follows from $d\tilde\Omega = a_0 dx + O(d^{-2-\alpha})$ and $\tilde e_4 = \frac{1}{a_0}(-1,a_1,a_2,a_3) + O(d^{-1-\alpha})$.
\end{proof}

For a vacuum initial data set, Theorem \ref{mass by Jang's equation} becomes
\begin{align}
E(\Sigma, X , T_0 ) &= \frac{1}{16\pi} \int \lt[ \lt| k_{ij} - \frac{D_iD_ju}{\sqrt{1+|Du|^2}} \rt|^2_{\tilde g} + 2 |Y|^2_{\tilde g} \rt] d\tilde\Omega \label{Schoen-Yau}\\
&\quad + \frac{1}{16\pi} \int \lt[ |h_0(s) - h(s)|^2 - (H_0(s) - H(s))^2 \rt] d\tilde\Omega \label{difference_2nd_ff}\\
&\quad + \frac{1}{8\pi} \int \lt[ |H|\sqrt{1+|\na\tau|^2} (\cosh\theta' - \cosh\theta) + \Delta\tau (\theta' - \theta)\rt] d\Sigma \label{gauge}
\end{align}

\subsection{Evaluation of \eqref{Schoen-Yau}}
\begin{lemma}\label{lemma_Schoen-Yau}
\begin{align*}
   &\frac{1}{16\pi} \int_{B_1} \lt( \lt| k_{ij} - \frac{D_iD_ju}{\sqrt{1 + |Du|^2}} \rt|^2_{\bar g} + 2 |Y|^2_{\bar g} \rt) d\tilde\Omega  \\
=& -\frac{1}{54a_0} \sum_{m} T_m^2 - \frac{1}{27a_0}  T_m a_l R^{lm}(p) + \frac{1}{90} \lt( a_0 - \frac{1}{a_0} \rt) R_{ij}(p) R^{ij}(p) - \frac{1}{54a_0} \sum_{i,j,l} a_i a_j R^{il}(p) R^j_l(p)\\
& + \frac{a_0}{180} \bar W_{0imj}(p) \bar W_0^{\;\;imj}(p) - \frac{1}{45} a_m \bar W\indices{_0^{imj}}(p) \bar W_{0i0j}(p) + O(d^{-5-2\alpha}).
\end{align*}
\end{lemma}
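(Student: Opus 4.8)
The plan is to evaluate the integrand of \eqref{Schoen-Yau} pointwise on $B_1$ using the expansion of the Jang solution $u$ from Lemma \ref{solve_Jang}, and then integrate over the unit ball. The key input is that $u = a_iX^i + \frac{1}{6}T_iX^i + \frac12 a_0 k_{ij}X^iX^j + \frac16 B_{ijm}X^iX^jX^m - \frac16 a_iR_{in}X^n + v$ with $\|v\|_{C^{2,\beta}} = O(d^{-3-\alpha})$. First I would record the two building blocks that already appeared in the proof of Lemma \ref{solve_Jang}: the symmetric tensor $k_{ij} - \frac{u_{ij}}{\sqrt{1+|Du|^2}} = S_{ijm}X^m + O(d^{-3-\alpha})$ with $S_{ijm}$ given by \eqref{S}, and a parallel computation of $Y_i = \bigl(k - \frac{D^2u}{\sqrt{1+|Du|^2}}\bigr)(\frac{\partial}{\partial X^i}, \tilde e_4)$, which will again be linear in $X$ to leading order, say $Y_i = P_{im}X^m + O(d^{-3-\alpha})$ for an explicit $P_{im}$ built from $T_m$, $\bar W_{0i0j}$ and $R_{ij}$ (using $\tilde e_4 = \frac{1}{a_0}(-1,a_1,a_2,a_3) + O(d^{-1-\alpha})$ and the Jang graph structure). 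Since $d\tilde\Omega = a_0\,dx + O(d^{-2-\alpha})$ and $\tilde g = \delta + O(d^{-\alpha})$, norms may be taken with respect to the flat metric up to the stated error, so the integral reduces to $\frac{a_0}{16\pi}\int_{B_1}\bigl(|S_{ijm}X^m|^2 + 2|P_{im}X^m|^2\bigr)\,dx + O(d^{-5-2\alpha})$.

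Next I would carry out the angular and radial integration. Each term is quadratic in the linear-in-$X$ expressions, so the integrand is a homogeneous quadratic form in $X$; writing $X^m = s\tilde x^m$, the radial integral contributes $\int_0^1 s^4\,ds = \frac15$ times $\frac{4\pi}{3}$ from $\int_{S^2}\tilde x^m\tilde x^n\,dS^2 = \frac{4\pi}{3}\delta^{mn}$. Thus $\int_{B_1}|S_{ijm}X^m|^2\,dx = \frac{4\pi}{15}\sum_{i,j,m}S_{ijm}^2$ and similarly for $P$. The remaining work is the pure algebra of computing $\sum S_{ijm}^2$ and $\sum P_{im}^2$ from their explicit formulas. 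This is where the contractions $T_m a_l R^{lm}$, $R_{ij}R^{ij}$, $a_ia_jR^{il}R^j_l$, $\bar W_{0imj}\bar W_0^{\;imj}$, and $a_m\bar W_0^{\;imj}\bar W_{0i0j}$ will appear; one should use the vacuum Gauss equation $\bar W_{0i0j} = R_{ij}$ (plus $\bar W_{0ijk}=0$ in the time-symmetric case, but here $k\neq 0$ so $\bar W_{0ijk}$ is genuinely present) and the symmetries of $\bar W$ to collect terms. The defining relation for $T_m$, namely $\sum_i G_{im}T_i + \frac{a_ia_j}{a_0}(\bar W_0^{\;i}{}_m^{\;j} + \bar W_0^{\;j}{}_m^{\;i}) - 2a_lR^l_m = 0$ with $G_{im} = (4+\frac{1}{a_0^2})\delta_{im} - 2\frac{a_ia_m}{a_0^2}$, should be used to simplify cross terms — many $\bar W$ and $R$ contractions against $a_i$ can be traded for expressions in $T_m$, which is presumably why the stated answer mixes $T_m$-terms with curvature terms rather than being purely curvature.

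I expect the main obstacle to be the bookkeeping in the algebraic simplification: $S_{ijm}$ has three distinct index-symmetry pieces (the $\delta$-contraction term, the two $\bar W$ terms, and the two $R$ terms with the $a_l/a_0$ weight), and squaring and summing produces on the order of two dozen contractions that must be reduced using $\sum_i a_i^2$ relative to $a_0$, the first Bianchi-type symmetry $\bar W_{0imj} + \bar W_{0mji} + \bar W_{0jim} = 0$ restricted appropriately, tracefreeness $\bar W$ enjoys, and the identity \eqref{riemann3d} to convert spatial Riemann into Ricci. Keeping the coefficients straight — so that the $\frac{1}{54a_0}$, $\frac{1}{27a_0}$, $\frac{1}{90}(a_0 - \frac{1}{a_0})$, $\frac{1}{54a_0}$, $\frac{a_0}{180}$, and $\frac{1}{45}$ all emerge correctly — is the delicate part; a useful check is to set $k=0$, which kills $\bar W_{0ijk}$, forces $T_m = \frac{2}{4+1/a_0^2}a_lR^l_m + \dots$, and should reproduce the $\frac{1}{60}R_{ij}R^{ij}$ contribution consistent with the time-symmetric case of Theorem \ref{Brown_York_thm} once combined with \eqref{difference_2nd_ff}. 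Another check is homogeneity in $d$: every term is manifestly $O(d^{-5-2\alpha})$ since $T_m, \bar W_{0i0j}, \bar W_{0imj}, R_{ij}$ are all $O(d^{-2-\alpha})$ (the latter by \eqref{Hamiltonian constraint} together with \eqref{momentum constraint} and the vacuum condition), matching the claimed error $O(d^{-5-2\alpha})$.
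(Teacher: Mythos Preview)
Your outline has a genuine gap at the step where you claim ``$\tilde g = \delta + O(d^{-\alpha})$, so norms may be taken with respect to the flat metric.'' This is false: in the coordinates $X^i$ on the Jang graph the induced metric is $\tilde g_{ij} = g_{ij} + u_i u_j = \delta_{ij} + a_i a_j + O(d^{-1-\alpha})$, and the $a_ia_j$ piece is $O(1)$, not small in $d$. Consequently the correct inverse metric at leading order is $\tilde g^{ij} = \delta^{ij} - \frac{a^ia^j}{a_0^2}$, and the two norms in the integrand must be computed as
\[
\Bigl| k - \tfrac{D^2 u}{\sqrt{1+|Du|^2}} \Bigr|^2_{\tilde g} = \Bigl(\delta^{ip}-\tfrac{a^ia^p}{a_0^2}\Bigr)\Bigl(\delta^{jq}-\tfrac{a^ja^q}{a_0^2}\Bigr)(S_{ijm}X^m)(S_{pqn}X^n),
\qquad
|Y|^2_{\tilde g} = \Bigl(\delta^{ij}-\tfrac{a^ia^j}{a_0^2}\Bigr)Y_iY_j.
\]
Your flat-metric expression $\frac{a_0}{60}\bigl(\sum S_{ijm}^2 + 2\sum P_{im}^2\bigr)$ does not agree with what these contractions produce, so the coefficients you are aiming for will not come out.

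What makes the paper's computation manageable is exactly this $\tilde g$-structure: after expanding, the cross terms cancel by the symmetry $S_{ijm}=S_{jim}$ and one is left with
\[
\frac{a_0}{60}\Bigl[\sum_{i,j,m} S_{ijm}^2 - \sum_{m}\Bigl(\tfrac{a_ia_j}{a_0^2}S_{ijm}\Bigr)^2\Bigr],
\]
and then the leading-order Jang equation $\bigl(\delta^{ij}-\tfrac{a^ia^j}{a_0^2}\bigr)S_{ijm}=0$ converts the second sum into $\sum_m\bigl(\sum_i S_{iim}\bigr)^2$. This is a much cleaner object than what you would face by trying to substitute the explicit formula for $T_m$; indeed the paper never solves for $T_m$ here, it only uses Jang's equation as a contraction identity. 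The remaining algebra (first Bianchi for $\bar W_{0imj}$ and \eqref{riemann3d} for the spatial Riemann pieces) then gives the stated coefficients. Fix the metric on the graph and this simplification, and the rest of your plan---the radial/angular integration and the listed curvature identities---goes through.
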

\begin{proof}
Recall that we write $k_{ij} - \frac{D_iD_j u}{\sqrt{1+|Du|^2}} = S_{ijm} X^m + O(d^{-3-\alpha})$ in the proof of Lemma \ref{solve_Jang}. Since $\tilde e_4 = \frac{(-1, a_1, a_2, a_3)}{a_0}+ O(d^{-1-\alpha})$, we have $Y = \sum_{i,m,p} \frac{a_p}{a_0} S_{ipm} X^m \frac{\pl}{\pl X^i} + O(d^{-3-\alpha})$ and   
\begin{align*}
&\frac{1}{16\pi} \int_{B_1} |k_{ij} - \frac{D_iD_ju}{\sqrt{1 + |Du|^2}}|^2_{\bar g} + 2 |Y|^2_{\bar g} dV_{\bar g} \\
 =& \frac{a_0}{60} \sum_{i,j,m,p,q} \lt[ (\delta_{ij} - \frac{a_i a_j}{a_0^2}) S_{ipm} S_{jqm} (\delta_{pq} - \frac{a_p a_q}{a_0^2}) + 2 (\delta_{ij} - \frac{a_i a_j}{a_0^2}) \frac{a_pa_q}{a_0^2} S_{ipm} S_{jqm} \rt]+ O(d^{-5-2\alpha})\\
=& \frac{a_0}{60} \lt[ \sum_{i,j,m} S_{ijm}^2 - \sum_{i,j,m,p,q} \frac{a_i a_j a_p a_q}{a_0^4} S_{ipm} S_{jqm} \rt] + O(d^{-5-2\alpha})\\
= &\frac{a_0}{60} \lt[ \sum_{i,j,m} S_{ijm}^2  - \sum_m \lt( \sum_i S_{iim} \rt)^2 \rt]+ O(d^{-5-2\alpha}),
\end{align*}
where Jang's equation is used in the last equality.
We compute, by (\ref{S}), 
\begin{align*}
\sum_m \lt( \sum_i S_{iim} \rt)^2= \frac{1}{9a_0^2} \sum_m (5 T_m + 2a_l R^l_m)^2 = \frac{1}{9a_0^2} \lt( 25 \sum_m T_m^2 + 10 a_l T_m R^{lm} + 4 a_l a_n R^{lm}R^n_m \rt)
\end{align*}
and
\begin{align*}
\sum_{i,j,m} S_{ijm}^2 &= \frac{1}{9} \Big[ \frac{15}{a_0^2} \sum_m T_m^2 + \sum_{i,j,m}(\bar W_{0imj} + \bar W_{0jmi})^2 \\
&+ \frac{1}{a_0^2} \sum_{i,j,m}\lt( 2 a_m R_{ij} + 2 a_l R^l_m\delta_{ij} - a_j R_{im} - a_i R_{jm} -a_l R^l_j \delta_{im} - a_l R^l_i R_{jm}\rt)^2 \\
&- 2 \sum_{i,j,l,m}\frac{a_l}{a_0}(\bar W_{0imj} + \bar W_{0jmi})(R_{iljm} + R_{imjl}) \Big]
\end{align*}
By the first Bianchi identity, $2 \sum_{i,j,m}\bar W_{0imj} \bar W_{0jmi} = \sum_{i,j,m} \bar W_{0imj} \bar W_{0imj}$ and hence $\sum_{i,j,m} (\bar W_{0imj} + \bar W_{0jmi})^2 = 3 \sum_{i,j,m} \bar W_{0imj}^2$. Direct computation shows that the third term in the bracket is equal to $\frac{1}{a_0^2} (6 \sum_m a_m^2 R_{ij}^2 - 6 a_i a_j R^{il} R^j_l)$. Finally, by (\ref{riemann3d}), the last term in the bracket is equal to $-12 \sum_{i,j,m} \frac{a_m}{a_0} \bar W_{0imj} \bar W_{0i0j}$.
\end{proof}

\subsection{Evaluation of \eqref{difference_2nd_ff}}
\begin{lemma}\label{lemma_difference_2nd_ff}
\begin{align*}
\frac{1}{16\pi} \int \lt[ |h_0(s) - h(s)|^2 - (H_0(s) - H(s))^2 \rt] d\tilde\Omega = \frac{1}{60a_0} R_{ij}(p)R^{ij}(p) + O(d^{-5-2\alpha}).
\end{align*}
\end{lemma}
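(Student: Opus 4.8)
The plan is to reduce the integral to a pointwise computation of the leading-order second fundamental form differences already obtained in Lemma \ref{lemma_mean_curvature_diff_1}. First I would recall from Lemma \ref{lemma_mean_curvature_diff_1} that, since the Hamiltonian constraint gives $R=O(d^{-3-\alpha})$, on each slice $\Sigma(s)$ we have $h_0(s)_{ab}-h(s)_{ab} = s^3 R_{ij}(p)\tilde x^i_a \tilde x^j_b + \frac{s^3}{2} R_{ij}(p)\tilde x^i\tilde x^j\tilde\sigma_{ab} + O(d^{-3-\alpha})$ (combining the traceless part with the trace part coming from $H_0-H = -sR_{ij}(p)\tilde x^i\tilde x^j + O(d^{-3-\alpha})$). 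The quantity $|h_0-h|^2-(H_0-H)^2$ is then computed by contracting with the round metric $\tilde\sigma^{ab}$ on $S^2$ (the deviation of $\sigma_{ab}$ from $s^2\tilde\sigma_{ab}$ only contributes at the error order), giving a homogeneous quartic in $R_{ij}(p)$ times an explicit power $s^4$.

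The key computational step is the angular integration over $S^2$. I would use the standard moment identities $\int_{S^2}\tilde x^i\tilde x^j\, dS^2 = \frac{4\pi}{3}\delta^{ij}$ and $\int_{S^2}\tilde x^i\tilde x^j\tilde x^k\tilde x^l\, dS^2 = \frac{4\pi}{15}(\delta^{ij}\delta^{kl}+\delta^{ik}\delta^{jl}+\delta^{il}\delta^{jk})$ to evaluate $\int_{S^2}\big(|h_0-h|^2-(H_0-H)^2\big)\, dS^2$. After the dust settles this should produce a constant multiple of $R_{ij}(p)R^{ij}(p)$ (all the $\tilde x$-contracted cross terms collapse by the moment formulas, and the traces $R_{ii}=R=O(d^{-3-\alpha})$ drop into the error term). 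Then I would carry out the radial integration: the volume form is $d\tilde\Omega = a_0\, s^2\, ds\, dS^2 + O(d^{-2-\alpha})$ — the factor $a_0$ coming from the Jang graph exactly as in the proof of Theorem \ref{main}, (1) — and $\int_0^1 s^4\, ds = \tfrac15$. Assembling $\frac{1}{16\pi}\cdot a_0 \cdot (\text{angular constant})\cdot R_{ij}R^{ij}\cdot\frac15$ and comparing with the time-symmetric computation in Theorem \ref{Brown_York_thm} (where the same integrand gave $\frac{1}{60}R_{ij}R^{ij}$ with the factor $a_0=1$), the answer is forced to be $\frac{1}{60a_0}R_{ij}(p)R^{ij}(p)$; note the factor $a_0$ appears in the \emph{denominator} because in the Wang-Yau setting the integrand is essentially unchanged while the slices $\Sigma(s)$ are now the \emph{coordinate} spheres of the metric $\tilde g$, and the parametrization shift introduces the $a_0^{-1}$.

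The one point requiring care — and the likely main obstacle — is verifying that the coordinate spheres $\Sigma(s)$ used to foliate $\tilde\Omega$ in the Schoen-Yau/Jang reduction of Theorem \ref{mass by Jang's equation} contribute the same integrand $|h_0(s)-h(s)|^2-(H_0(s)-H(s))^2$ as in the time-symmetric case to leading order, i.e. that replacing $g$ by the Jang metric $\tilde g$ and the geodesic foliation by the graph foliation only perturbs $h_0-h$ and $H_0-H$ at order $O(d^{-3-\alpha})$. This follows because $\tilde g_{ij} = g_{ij} + u_iu_j = g_{ij} + a_ia_j + O(d^{-1-\alpha})$ differs from a flat metric by lower-order terms whose curvature contribution is $O(d^{-2-2\alpha})$ when multiplied into the $s^3$ factors, so the Ricci curvature driving $h_0-h$ is unchanged to the relevant order; one then checks that $|h_0-h|^2-(H_0-H)^2 = O(d^{-4-2\alpha})$ pointwise, so its product with the $O(d^{-2-\alpha})$ corrections to $d\tilde\Omega$ is safely $O(d^{-6-3\alpha})$, well inside the claimed error $O(d^{-5-2\alpha})$. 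Once this bookkeeping is in place, the lemma is immediate from the angular and radial integrations above.
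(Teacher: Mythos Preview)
Your proposal has a genuine gap at the point you yourself flag as ``the one point requiring care.'' You claim that passing from $g$ to the Jang metric $\tilde g$ only perturbs $h_0-h$ at order $O(d^{-3-\alpha})$, so that Lemma \ref{lemma_mean_curvature_diff_1} applies verbatim. This is false: $\tilde g_{ij}=g_{ij}+u_iu_j$ differs from $g_{ij}$ by the \emph{order-one} term $a_ia_j$, so the leading-order induced metric on $\Sigma(s)$ is $s^2(\tilde\sigma_{ab}+\bar\tau_a\bar\tau_b)$ with $\bar\tau=a_i\tilde x^i$, not $s^2\tilde\sigma_{ab}$. The coordinate spheres are no longer (approximately) round, their isometric embeddings into $\R^3$ are ellipsoidal at leading order, and the second fundamental form difference is genuinely modified. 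The paper carries out this computation in the material following the lemma (culminating in Lemma \ref{difference of second fundamental forms}) and obtains
\[
h_0(s)-h(s)=s^3\,\frac{\sqrt{1+|a_i\tilde\na\tilde x^i|^2}}{a_0}\,R_{ij}(p)\tilde x^i_a\tilde x^j_b+O(d^{-3-\alpha}),
\]
with the extra factor $\sqrt{1+|\tilde\na\bar\tau|^2}/a_0$ that you are missing.

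The consequence is that your numerics do not close. With your integrand $s^2(R_{ij}R^{ij}-2R_{ij}R^i_l\tilde x^j\tilde x^l)$ and volume form $a_0\,s^2\,ds\,dS^2$, you get $\frac{a_0}{60}R_{ij}R^{ij}$, not $\frac{1}{60a_0}R_{ij}R^{ij}$; your sentence about a ``parametrization shift'' producing $a_0^{-1}$ is not a computation and in fact contradicts your own assertion that ``the integrand is essentially unchanged.'' What actually happens in the paper is that the squared factor $(1+|\tilde\na\bar\tau|^2)/a_0^2$ in $|h_0-h|^2$ combines with the 2D identity $|A|^2_\sigma-(\mathrm{tr}_\sigma A)^2=(\det_{\tilde\sigma}\sigma)^{-1}\big(|A|^2_{\tilde\sigma}-(\mathrm{tr}_{\tilde\sigma}A)^2\big)$ and $\det_{\tilde\sigma}\sigma=s^4(1+|\tilde\na\bar\tau|^2)$ to produce the integrand $\frac{s^2}{a_0^2}(R_{ij}R^{ij}-2R_{ij}R^i_l\tilde x^j\tilde x^l)$; together with $d\tilde\Omega=a_0\,s^2\,ds\,dS^2$ this yields the correct $\frac{1}{60a_0}$. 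To repair your argument you must redo the second fundamental form computation for the ellipsoidal leading metric, which is precisely the content of Lemma \ref{difference of second fundamental forms}.
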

\begin{proof}
By Lemma \ref{difference of second fundamental forms} below, we have
\[ h_0(s) - h(s) = s^3 \frac{\sqrt{1 + |a_i \tilde\na \tilde x^i|^2}}{a_0} R_{ij} \tilde x^i_a \tilde x^j_b + O(d^{-3-\alpha}). \]
We compute
\begin{align*}
|h_0(s) - h(s)|^2_{\sigma} - \lt( H_0(s) - H(s)\rt)^2 &= (\mbox{det}_{\tilde\sigma}(\sigma))^{-1} \lt( |h_0(s) - h(s)|^2_{\tilde\sigma} - (\mbox{tr}_{\tilde\sigma} h_0(s) - \mbox{tr}_{\tilde\sigma} h(s))^2 \rt) \\ &= \frac{s^2}{a_0^2} \lt( R_{ij}R^{ij} - 2 R_{ij} R^i_l \tilde x^j \tilde x^l \rt) + O(d^{-5-2\alpha}).
\end{align*}
Finally, we note that the volume form $d\tilde\Omega = a_0 s^2 dS^2 ds + O(d^{-2-\alpha})$ and hence
\begin{align*}
  &\int \lt[ |h_0(s) - h(s)|^2 - (H_0(s) - H(s))^2 \rt]d\tilde\Omega  \\
=& \frac{1}{a_0} \int_0^1 \int_{S^2} \lt( R_{ij}R^{ij} - 2 R_{ij} R^i_l \tilde x^i \tilde x^j \rt) dS^2 s^4 ds+ O(d^{-5-2\alpha})\\
=& \frac{4}{15a_0} R_{ij} R^{ij}+ O(d^{-5-2\alpha}).
\end{align*}
\end{proof}
The rest of this subsection is devoted to computing the difference of second fundamental forms of $\Sigma_s$ in $\tilde\Omega$ and in $\R^3$, Lemma \ref{difference of second fundamental forms}. We first solve the isometric embedding of $\Sigma_s$ into $\R^3$ and then compute the second fundamental form. Recall the solution of Jang's equation is $u = a_i X^i + \frac{1}{2} a_0 k_{ij} X^i X^j + \frac{1}{6} B_{ijm} X^i X^j X^m - \frac{a_i}{6} R_{in} X^n + O(d^{-4})$. The induced metric on the graph of Jang's equation is then given by
\begin{align*}
\bar g_{ij} &= \delta_{ij} + a_i a_j -\frac{1}{3} R_{ikjl} X^k X^l + a_0 (a_i k_{jm} + a_j k_{im} ) X^m\\
&\quad  + \frac{1}{2} (a_i B_{jlm} + a_j B_{ilm}) X^l X^m - \frac{1}{6} a_i a_l R_{lj} - \frac{1}{6} a_j a_l R_{li} +  O(d^{-3 - \alpha}).
\end{align*}

In polar coordinates $(s,u^a)$, we have $\bar g = \bar g_{ss} ds^2 + 2 \bar g_{as} ds du^a + \bar g_{ab} du^a du^b$ where
\begin{align*}
\bar g_{ss} &= 1 + (a_i \tilde x^i)^2\\
&\quad + a_i \tilde x^i \lt( 2s a_0 k_{jm} \tilde x^j \tilde x^m + s^2  B_{jlm} \tilde x^j \tilde x^l \tilde x^m - \frac{1}{3} a_l R_{lj} \tilde x^j \rt)+ O(d^{-3 - \alpha}).\\
\bar g_{as} &= s a_i \tilde x^i_a a_j \tilde x^j + s a_i \tilde x^i_a \lt( s a_0 k_{jm} \tilde x^j \tilde x^m + \frac{1}{2}s^2  B_{jlm} \tilde x^j \tilde x^l \tilde x^m - \frac{1}{6} a_l R_{lj} \tilde x^j \rt) \\
&\quad + s a_j \tilde x^j \lt( sa_0 k_{im} \tilde x^i_a \tilde x^m + \frac{1}{2} s^2 B_{ilm} \tilde x^i_a \tilde x^l \tilde x^m - \frac{1}{6} a_l R_{li} \tilde x^i_a \rt)+ O(d^{-3 - \alpha}).\\
\bar g_{ab} &= s^2 \tilde\sigma_{ab} + s^2 \lt( a_i a_j - \frac{s^2}{3} R_{ikjl} \tilde x^k \tilde x^l \rt) \tilde x^i_a \tilde x^j_b +s^3 a_0(a_i k_{jm} + a_j k_{im}) \tilde x^i_a \tilde x^j_b \tilde x^m  \\
&\quad + \frac{1}{2} s^4 (a_iB_{jlm} + a_j B_{ilm}) \tilde x^i_a \tilde x^j_b \tilde x^l \tilde x^m - \frac{1}{6}s^2(a_i a_l R^l_j + a_j a_l R^l_i) \tilde x^i_a \tilde x^j_b  + O(d^{-3 - \alpha}).
\end{align*}
Let $\sigma(s)$ be the induced metric on $\Sigma_s$. We consider the isometric embedding of $(\Sigma_s, \sigma(s)_{ab})$ into the hyperplane $X^0 = a_i X^i$ in $\R^4$ with the form $X^0 = a_i s( \tilde x^i + \mathfrak{y}^i)$ and $X^i = s (\tilde x^i + \mathfrak{y}^i)$ where $\mathfrak{y}^i = O(d^{-1-\alpha})$ and satisfies the linearized isometric embedding equations
\begin{align*}
(\delta_{ij} + a_i a_j)(\tilde x^i_a \mathfrak{y}^j_b + \tilde x^j_b \mathfrak{y}^i_a) &= -\frac{s^2}{3} R_{ikjl} \tilde x^k \tilde x^l \tilde x^i_a \tilde x^j_b \\
&\quad + s a_0 (a_i k_{jm} + a_j k_{im}) \tilde x^m \tilde x^i_a \tilde x^j_b + \frac{s^2}{2} (a_i B_{jlm} + a_j B_{ilm}) \tilde x^l \tilde x^m \tilde x^i_a \tilde x^j_b\\
&\quad -\frac{1}{6}(a_i R^l_j + a_j R^l_i) a_l \tilde x^i_a \tilde x^j_b + O(d^{-3-\alpha}).
\end{align*}
It's not hard to see that $\mathfrak{y}^i = y^{(0)i} + y^{i}$ with
\begin{align*}
y^{(0)i} = a_i \lt( \frac{s}{2a_0}  k_{lm} \tilde x^l \tilde x^m + \frac{s^2}{6a_0^2} B_{jlm} \tilde x^j \tilde x^l \tilde x^m - \frac{1}{6a_0^2} a_m R^m_n \tilde x^n  \rt)
\end{align*}
and 
\begin{align}\label{iso_ellipsoid} (\delta_{ij} + a_i a_j)(\tilde x^i_a y^j_b + \tilde x^j_b y^i_a) = -\frac{s^2}{3} R_{ikjl} \tilde x^k \tilde x^l \tilde x^i_a \tilde x^j_b. \end{align}
Equation (\ref{iso_ellipsoid}) is a linearized isometric embedding equation on an ellipsoid. Let $\hat y^i = (\delta_{ij} + a_i a_j) y^j$. One readily verifies that
\[ \hat y^i = -\frac{s^2}{6} \lt( R^i_j \tilde x^j + R_{jk} \tilde x^j \tilde x^k \tilde x^i \rt) \]
solves \eqref{iso_ellipsoid}. 

The family of isometric embedding of $\Sigma_s$ forms a foliation $F: (0,1] \times S^2 \rightarrow \{ X^0 = a_i X^i\} \subset \R^4$. From $F_*(\pl_s) = (\tilde x^i + \mathfrak{y}^i + s \frac{\pl\mathfrak{y}^i}{\pl s}) (a_i \frac{\pl}{\pl X^0} + \frac{\pl}{\pl X^i})$ and $F_*(\pl_a) = s(\tilde x^i_a + \mathfrak{y}^i_a)(a_i \frac{\pl}{\pl X^0} + \frac{\pl}{\pl X^i})$, we could write the flat metric, denoted by $\breve g$, in $(s,u^a)$ coordinates. Straightforward computation shows that 
\begin{align*}
\breve g_{ss} &= \bar g_{ss} + -2s^2 R_{ij} \tilde x^i \tilde x^j + O(d^{-3-\alpha})\\
\breve g_{bs} &= \bar g_{bs} -s^3 R_{ij} \tilde x^i_b \tilde x^j + O(d^{-3-\alpha}).
\end{align*}
\begin{remark}
It shouldn't be surprising that $\breve g - \bar g$ does not depend on $y^{(0)}$; namely, $\bar g$ remains flat after the graphical perturbation of $u$.  Indeed, if $\bar g_{ij} = g_{ij} + u_i u_j$, then the curvature tensors are related by
\[ \bar R_{ij\;\;l}^{\;\;\;k} = R_{ij\;\;l}^{\;\;\;k} - \frac{1}{1 + |\na u|^2} R_{ij\;\;l}^{\;\;\;p} \na_p u \na^k u + X_{ij\;\;l}^{\;\;\;k}, \]
where $X$ is quadratic in the Hessian of $u$.
\end{remark} 

We are ready to compute the second fundamental forms.
\begin{lemma}\label{difference of second fundamental forms}
\[ \bar h_{ab} - \breve h_{ab} = -s^3 \frac{\sqrt{1 + |a_i \tilde\na \tilde x^i|^2}}{a_0} R_{ij}(p) \tilde x^i_a \tilde x^j_b + O(d^{ -3-\alpha}). \]
\end{lemma}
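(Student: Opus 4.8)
The plan is to read off both $\bar h_{ab}$ and $\breve h_{ab}$ from the lapse--shift decompositions of $\bar g$ and $\breve g$ relative to the common radial foliation $\{\Sigma_s\}$, and then subtract. For a metric written in polar coordinates $(s,u^a)$ as $G = G_{ss}\,ds^2 + 2G_{as}\,ds\,du^a + G_{ab}\,du^a du^b$, the second fundamental form of the slice $\Sigma_s$ with respect to the outward unit normal is
\[
h_{ab} = \frac{1}{2N}\lt( \pl_s G_{ab} - D_a G_{bs} - D_b G_{as} \rt), \qquad N^2 = G_{ss} - G^{ab}G_{as}G_{bs},
\]
where $D$ is the Levi-Civita connection of the slice metric $G_{ab}$. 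The key structural point is that $\bar g$ and $\breve g$ carry the \emph{same} $du^a du^b$ part, namely the induced metric $\sigma(s)_{ab}$ on $\Sigma_s$ (the Jang graph realizes it, and the isometric embedding into the hyperplane $\{X^0 = a_i X^i\}$ reproduces it through $O(d^{-3-\alpha})$ by Lemma \ref{solve_Jang}), so the slice connection $D$ and $\pl_s\sigma_{ab}$ are common to both, and
\[
\bar h_{ab} - \breve h_{ab} = \lt( \frac{1}{2\bar N} - \frac{1}{2\breve N} \rt)\lt( \pl_s\sigma_{ab} - D_a\bar g_{bs} - D_b\bar g_{as} \rt) + \frac{1}{2\breve N}\lt( D_a(\breve g_{bs} - \bar g_{bs}) + D_b(\breve g_{as} - \bar g_{as}) \rt).
\]
The only inputs into the right-hand side are the discrepancies $\breve g_{ss} - \bar g_{ss} = -2s^2 R_{ij}\tilde x^i\tilde x^j + O(d^{-3-\alpha})$ and $\breve g_{as} - \bar g_{as} = -s^3 R_{ij}\tilde x^i_a\tilde x^j + O(d^{-3-\alpha})$ computed above, together with the leading behaviour of $\bar N$, $\bar g_{as}$, $\sigma_{ab}$ and $\bar h_{ab}$.

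Next I would pin down the leading-order lapse of the Jang graph. Writing $f = a_i\tilde x^i$ and using $\bar g_{ss} = 1 + f^2 + O(d^{-\alpha})$, $\bar g_{as} = sf\,\pl_a f + O(d^{-1-\alpha})$, $\sigma_{ab} = s^2(\tilde\sigma_{ab} + \pl_a f\,\pl_b f) + O(d^{-\alpha})$, together with the elementary $S^2$ identity $|\tilde\na f|^2 = |a|^2 - f^2$ for a linear function on the unit sphere, and the normalization $a_0^2 = 1 + |a|^2$ of the unit timelike vector $T_0$, a short manipulation collapses the numerator to $a_0^2$ and yields
\[
\bar N^2 = \frac{a_0^2}{1 + |a_i\tilde\na\tilde x^i|^2} + O(d^{-\alpha}), \qquad \text{so} \qquad \frac{1}{\bar N} = \frac{\sqrt{1 + |a_i\tilde\na\tilde x^i|^2}}{a_0} + O(d^{-\alpha}),
\]
which already produces the prefactor in the statement. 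Linearizing $N^2 = G_{ss} - G^{ab}G_{as}G_{bs}$ about $\bar g$ and feeding in the two discrepancies above then gives $\breve N - \bar N$ at order $O(d^{-2-\alpha})$.

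I would then substitute into the difference formula. Since $\pl_s\sigma_{ab} - D_a\bar g_{bs} - D_b\bar g_{as} = 2\bar N\bar h_{ab}$, the first term equals $\frac{\breve N - \bar N}{\bar N}\,\bar h_{ab} + O(d^{-4-2\alpha})$, where $\bar h_{ab} = O(s)$ is the known leading second fundamental form of $\Sigma_s$ in the Jang graph; this contribution is assembled from $\tilde\sigma_{ab}$- and $\pl_a f\,\pl_b f$-type tensors times $R$-linear scalars. For the second term, $D_a(\breve g_{bs} - \bar g_{bs}) \approx \tilde\na_a(-s^3 R_{ij}\tilde x^i_b\tilde x^j)$, and the $S^2$ identity $\tilde\na_a\tilde x^i_b = -\tilde\sigma_{ab}\tilde x^i$ splits it into a piece proportional to $R_{ij}\tilde x^i\tilde x^j\,\tilde\sigma_{ab}$ and a piece proportional to $R_{ij}\tilde x^i_a\tilde x^j_b$; explicitly the second term becomes $\frac{\sqrt{1 + |a_i\tilde\na\tilde x^i|^2}}{a_0}\lt( s^3 R_{ij}\tilde x^i\tilde x^j\,\tilde\sigma_{ab} - s^3 R_{ij}\tilde x^i_a\tilde x^j_b \rt) + O(d^{-3-\alpha})$, whose last term is precisely the claimed answer.

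The crux of the argument --- and the step I expect to require the most care --- is verifying that \emph{all} contributions other than $-s^3\frac{\sqrt{1+|a_i\tilde\na\tilde x^i|^2}}{a_0}R_{ij}(p)\tilde x^i_a\tilde x^j_b$ cancel: the $\tilde\sigma_{ab}$-proportional piece of the second term must be killed against the $\tilde\sigma_{ab}$- and $\pl_a f\,\pl_b f$-structured output of the first (lapse) term, which forces an exact matching of several $O(d^{-2-\alpha})$ quantities that are individually the same size as the answer, so none of them may be discarded. This cancellation is consistent with the trace of $\bar h_{ab} - \breve h_{ab}$ reproducing the difference of mean curvatures of $\Sigma_s$ in the Jang graph and in $\R^3$, which is already known and provides a useful check. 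All genuinely lower-order terms are $O(d^{-3-\alpha})$ by the $C^1$ bounds \eqref{weak_decay_assumption} and the structure of the Jang solution $u$ from Lemma \ref{solve_Jang}.
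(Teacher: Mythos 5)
Your proposal is correct and follows essentially the same route as the paper: both compute $\bar h_{ab}$ and $\breve h_{ab}$ from the lapse--shift formula for the radial foliation, observe that the tangential parts agree so only the discrepancies $\breve g_{ss}-\bar g_{ss}$ and $\breve g_{as}-\bar g_{as}$ enter, extract the leading lapse $\bar N = a_0/\sqrt{1+|\tilde\na\bar\tau|^2}$, and rely on the cancellation of the $\tilde\sigma_{ab}$-proportional pieces between the lapse-difference and shift-derivative contributions. The cancellation you flag as the delicate step (including the $O(d^{-2-\alpha})$ correction terms coming from $\na_a$ versus $\tilde\na_a$ and from $\sigma^{ab}\bar g_{as}\bar g_{bs}$) is exactly what the paper's proof verifies explicitly, using $\tilde\na_a\tilde\na_b\bar\tau=-\bar\tau\,\tilde\sigma_{ab}$ and the leading identity $\frac{1}{2}\pl_s\bar g_{ab}-\na_a\bar g_{bs}=s\bar N^2\tilde\sigma_{ab}$.
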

\begin{proof}
We denote the leading order of $\tau$ by $\bar \tau = a_i \tilde x^i$. The second fundamental form of $\Sigma_s$ can be computed from the formula
\[ \bar h = \frac{\frac{1}{2} \pl_s \bar g_{ab} - \na_a \bar g_{bs}}{\sqrt{\bar g_{ss} - \sigma^{ab}\bar g_{as} \bar g_{bs}}}. \]
Here $\sigma$ is the induced metric of $\Sigma_s$. We note that $\sigma_{ab} = s^2(\tilde\sigma_{ab} + \bar \tau_a \bar\tau_b) + O(d^{-1-\alpha})$.

We compute
\begin{align*}
\bar g_{ss} - \sigma^{ab} \bar g_{as} \bar g_{bs} &= \frac{a_0^2}{ 1 + |\tilde\na\bar \tau|^2} + O(d^{-2-\alpha}) \\
\frac{1}{2} \pl_s \bar g_{ab} - \na_a \bar g_{bs} &= s \frac{a_0^2}{1 + |\tilde\na\bar\tau|^2} \tilde\sigma_{ab} + O(d^{-2-\alpha}).
\end{align*}
Hence, the difference of second fundamental forms is given by
\begin{align*}
\bar h - \breve h &= \lt( \frac{1}{2} \pl_s \bar g_{ab} - \na_a \bar g_{bs} \rt) \cdot \frac{\lt( \breve g_{ss} - \sigma^{cd} \breve g_{cs} \breve g_{ds} \rt) - \lt( \bar g_{ss} - \sigma^{cd} \bar g_{cs} \bar g_{ds} \rt)}{2 a_0^3} \cdot ( 1+|\tilde\na\bar \tau|^2)^\frac{3}{2} \\
&\quad + \frac{\na_a (\breve g_{bs} - \bar g_{bs})}{a_0} \sqrt{1+|\tilde\na\bar \tau|^2} + O(d^{-4-2\alpha})\\
&= \frac{\sqrt{1+|\tilde\na\bar\tau|^2}}{a_0} \lt( \frac{1}{2}( \breve g_{ss} - \bar g_{ss} + \sigma^{cd} \bar g_{cs} \bar g_{ds} - \sigma^{cd}\breve g_{cs} \breve g_{ds}) \cdot s \tilde \sigma_{ab} + \na_a(\breve g_{bs} - \bar g_{as})\rt) + O(d^{-3-\alpha}).
\end{align*}

We compute
\begin{align*}
\na_a (\breve g_{bs} - \bar g_{bs}) &= \tilde\na_a (\breve g_{bs} - \bar g_{bs}) - \frac{\tilde\na^c \bar \tau}{1 + |\tilde\na \bar \tau|^2} \tilde\na_a \tilde \na_b \bar \tau (\breve g_{cs} - \bar g_{cs}) + O(d^{-3-2\alpha})\\
&= s^3 R_{ij} \tilde x^i \tilde x^j \tilde\sigma_{ab} - s^3 R_{ij} \tilde x^i_a \tilde x^j_b  - \frac{\tilde\na^c \bar \tau}{1 + |\tilde\na \bar \tau|^2} \tilde\na_a \tilde \na_b \bar \tau (\breve g_{cs} - \bar g_{cs}) + O(d^{ -3-\alpha}).
\end{align*}
On the other hand,
\begin{align*}
\frac{1}{2} \lt( \breve g_{ss} - \bar g_{ss} + \sigma^{ab}(\bar g_{as}\bar g_{bs} - \breve g_{as} \breve g_{bs}) \rt) &= -s^2 R_{ij} \tilde x^i \tilde x^j + \frac{1}{2} \sigma^{cd} \lt( \bar g_{cs} \lt( \bar g_{ds} - \breve g_{ds} \rt) + \breve g_{ds} \lt( \bar g_{cs} - \breve g_{cs} \rt) \rt) +  O(d^{ -3-\alpha})\\
&= -s^2 R_{ij} \tilde x^i \tilde x^j + \frac{\tilde\na\bar\tau}{1 + |\tilde\na\bar\tau|^2} \cdot s^{-1} \bar\tau (\bar g_{cs} - \breve g_{cs}) + O(d^{-3-\alpha}).
\end{align*}
Putting these together, the assertion follows.
\end{proof}
\subsection{Evaluation of \eqref{gauge}}
In this subsection, we evaluate the integral resulted from the difference between the gauge induced by Jang's equation and the canonical gauge. Recall the solution of optimal isometric embedding equation is $\tau = a_i \tilde x^i + \frac{a_0}{2} k_{ij} \tilde x^i \tilde x^j + \frac{a_0}{6} \pl_i k_{jm} \tilde x^i\tilde x^j \tilde x^m - \frac{a_i}{6} R_{in} \tilde x^n$ and the solution of Jang's equation is $u = a_i X^i +\frac{1}{6} T_i \tilde x^i+  \frac{1}{2} a_0 k_{ij} X^i X^j + \frac{1}{6} B_{ijm} X^i X^j X^m - \frac{a_i}{6} R_{in} X^n$. We will need the following lemma.
\begin{lemma}
For $\tau = a_i \tilde x^i + v$, $v =  O(d^{-1-\alpha})$, we have
\begin{align*}
\Delta\tau = \tilde\Delta \tau + \tilde\Delta v - \frac{2}{3} a_i R_{ij} \tilde x^j + \frac{1}{3} a_i R_{kl} \tilde x^i \tilde x^k \tilde x^l  + O(d^{-3-\alpha})  
\end{align*}
\end{lemma}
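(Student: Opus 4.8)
The plan is to obtain the identity by first‑order perturbation theory for the Laplace--Beltrami operator, comparing $\Delta=\Delta_\sigma$ on $\Sigma=\Sigma(1)$ with the round Laplacian $\tilde\Delta=\Delta_{\tilde\sigma}$ on $S^2$. First I would record the metric perturbation: from the induced‑metric expansion of $\Sigma(s)$ recorded after Lemma~\ref{lemma_Brown_York}, setting $s=1$, one has $\epsilon_{ab}:=\sigma_{ab}-\tilde\sigma_{ab}=-\tfrac13 R_{ikjl}(p)\,\tilde x^i_a\tilde x^j_b\tilde x^k\tilde x^l+O(d^{-3-\alpha})$; then, exactly as in the proof of Lemma~\ref{optimal, general observer}, I would use the three‑dimensional curvature identity \eqref{riemann3d}, the bound $R=O(d^{-3-\alpha})$, and the elementary facts $\sum_i\tilde x^i\tilde x^i_a=0$ and $\tilde\sigma^{ab}\tilde x^i_a\tilde x^j_b=\delta^{ij}-\tilde x^i\tilde x^j$ to bring $\epsilon$ into the form $\epsilon_{ab}=-\tfrac13(R_{kl}\tilde x^k\tilde x^l)\tilde\sigma_{ab}-\tfrac13 R_{ij}\tilde x^i_a\tilde x^j_b+O(d^{-3-\alpha})$, with leading part of size $O(d^{-2-\alpha})$.

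Next I would apply the standard linearization of the Laplacian,
\[
(\Delta_\sigma-\Delta_{\tilde\sigma})f=-\epsilon^{ab}\,\tilde\na_a\tilde\na_b f-\bigl(\tilde\na_a\epsilon^{ab}-\tfrac12\tilde\na^b\operatorname{tr}_{\tilde\sigma}\epsilon\bigr)\tilde\na_b f+O(|\epsilon|^2),
\]
(indices raised with $\tilde\sigma$) to $f=\tau=a_i\tilde x^i+v$. Since $\epsilon=O(d^{-2-\alpha})$ and $v=O(d^{-1-\alpha})$ in $C^2$, the correction operator acting on $v$ produces only $O(d^{-3-2\alpha})$ and is absorbed into the error, so
\[
\Delta\tau=\tilde\Delta(a_i\tilde x^i)+\tilde\Delta v+(\Delta_\sigma-\Delta_{\tilde\sigma})(a_i\tilde x^i)+O(d^{-3-\alpha}),
\]
and only the correction on the $\ell=1$ function $a_i\tilde x^i$ remains to be computed.

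To evaluate it, I would exploit that $a_i\tilde x^i$ is an $\ell=1$ eigenfunction, so $\tilde\na_a\tilde\na_b(a_i\tilde x^i)=-(a_i\tilde x^i)\tilde\sigma_{ab}$ and the Hessian term equals $(a_i\tilde x^i)\operatorname{tr}_{\tilde\sigma}\epsilon=-\tfrac13 a_i R_{kl}\tilde x^i\tilde x^k\tilde x^l+O(d^{-3-\alpha})$, using $\operatorname{tr}_{\tilde\sigma}\epsilon=-\tfrac13 R_{kl}\tilde x^k\tilde x^l+O(d^{-3-\alpha})$ (the $R$‑proportional piece dropped by \eqref{Hamiltonian constraint}, and $R_{ikjl}\tilde x^i\tilde x^j\tilde x^k\tilde x^l=0$ by antisymmetry). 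For the divergence term I would use $\tilde\Delta\tilde x^i=-2\tilde x^i$ and $\tilde\na^a\tilde\na_b\tilde x^j=-\tilde x^j\delta^a_b$ to get $\tilde\na^a(\tilde x^i_a\tilde x^j_b)=-2\tilde x^i\tilde x^j_b-\tilde x^j\tilde x^i_b$, hence $\tilde\na_a\epsilon^{ab}-\tfrac12\tilde\na^b\operatorname{tr}_{\tilde\sigma}\epsilon=\tfrac23 R_{ij}\tilde x^i\tilde x^{j,b}+O(d^{-3-\alpha})$; contracting against $\tilde\na_b(a_m\tilde x^m)=a_m\tilde x^m_b$ with $\tilde x^{j,b}\tilde x^m_b=\delta^{jm}-\tilde x^j\tilde x^m$ then contributes $-\tfrac23 a_i R_{ij}\tilde x^j+\tfrac23 a_m R_{ij}\tilde x^i\tilde x^j\tilde x^m$. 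Adding the two contributions collapses the two cubic monomials into $\tfrac13 a_i R_{kl}\tilde x^i\tilde x^k\tilde x^l$ and leaves $-\tfrac23 a_i R_{ij}\tilde x^j$, which is the asserted formula (the term $\tilde\Delta\tau$ in the statement being understood as $\tilde\Delta$ of the leading part $a_i\tilde x^i$).

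The step I expect to be the main obstacle is the bookkeeping in this last paragraph: tracking which Ricci/Riemann contractions actually survive the cancellations forced by $\sum_i\tilde x^i\tilde x^i_a=0$ and by the smallness of the scalar curvature, and confirming that the discarded pieces — the $O(|\epsilon|^2)$ remainder in the linearization, the cross terms involving $v$, and the corrections from replacing $g_{ij}$ by $\delta_{ij}$ in \eqref{riemann3d} — are all of the order absorbed into the stated error for $\alpha>\tfrac12$. Everything else is a routine application of the linearized Laplacian formula together with standard identities for the coordinate functions on $S^2$.
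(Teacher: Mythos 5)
Your computation is correct, and since the paper states this lemma without proof there is nothing to compare against: the first-order perturbation of the Laplace--Beltrami operator with $\epsilon_{ab}=-\tfrac13(R_{kl}\tilde x^k\tilde x^l)\tilde\sigma_{ab}-\tfrac13 R_{ij}\tilde x^i_a\tilde x^j_b+O(d^{-3-\alpha})$ is exactly the natural route, and I verified the divergence term $\tfrac23 R_{ij}\tilde x^i\tilde\na^b\tilde x^j$, the trace term $-\tfrac13 a_iR_{kl}\tilde x^i\tilde x^k\tilde x^l$, and their sum against the stated coefficients. Your reading of $\tilde\Delta\tau$ as $\tilde\Delta(a_i\tilde x^i)$ (so that $\tilde\Delta v$ is not double-counted) is the right one; it is the only reading consistent with the expansion of $-\Delta\tau/\sqrt{1+|\na\tau|^2}$ used in the proof of Lemma \ref{lemma_gauge}, and all discarded remainders are indeed $O(d^{-3-2\alpha})$ or better for $\alpha>\tfrac12$.
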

\begin{lemma}\label{lemma_gauge}
\begin{align*}
&\frac{1}{8\pi} \int \lt[ |H|\sqrt{1+|\na\tau|^2} (\cosh\theta' - \cosh\theta) + \Delta\tau (\theta' - \theta)\rt] d\Sigma \\
&= \frac{1}{54a_0} \lt( \sum_m T_m^2 + 2  T_m a_i R^{im}(p) + a_i a_j R^{il}(p) R^j_l(p) \rt) + O(d^{-5-2\alpha}).
\end{align*}
\end{lemma}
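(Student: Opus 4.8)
The plan is to exploit that, modulo the defining equation for $\theta$, the integrand in \eqref{gauge} is a perfect square in $\theta'-\theta$. Write $f(\theta):=|H|\sqrt{1+|\na\tau|^2}\cosh\theta+\Delta\tau\,\theta$, so that \eqref{gauge} equals $\frac{1}{8\pi}\int_\Sigma\bigl(f(\theta')-f(\theta)\bigr)\,d\Sigma$. The relation $\sinh\theta=-\Delta\tau/(|H|\sqrt{1+|\na\tau|^2})$ says precisely that $f'(\theta)=0$, while $f''(\theta)=|H|\sqrt{1+|\na\tau|^2}\cosh\theta$. Taylor expanding at $\theta$ therefore gives
\[
f(\theta')-f(\theta)=\tfrac12\,|H|\sqrt{1+|\na\tau|^2}\cosh\theta\,(\theta'-\theta)^2+O(|\theta'-\theta|^3).
\]
Using $|H|=2+O(d^{-2-\alpha})$, the identity $(1+|\na\tau|^2)\cosh^2\theta=(1+|\na\tau|^2)+(\Delta\tau)^2/|H|^2$, and $1+|\tilde\na\bar\tau|^2+\bar\tau^2=1+|a|^2=a_0^2$ for the leading term $\bar\tau=a_i\tilde x^i$ (so that $(1+|\na\tau|^2)\cosh^2\theta=a_0^2+O(d^{-1-\alpha})$), one finds $|H|\sqrt{1+|\na\tau|^2}\cosh\theta=2a_0+O(d^{-1-\alpha})$; combined with $d\Sigma=dS^2\bigl(1+O(d^{-2-\alpha})\bigr)$, the integral \eqref{gauge} reduces to $\frac{a_0}{8\pi}\int_{S^2}(\theta'-\theta)^2\,dS^2$ up to the claimed error, once one knows $\theta'-\theta=O(d^{-2-\alpha})$.

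The technical heart is the computation of $\theta'-\theta$. For $\theta$, I would plug the expansion \eqref{time function} of $\tau$ and the expansion of the induced metric $\sigma$ into $\sinh\theta=-\Delta_\sigma\tau/(|H|\sqrt{1+|\na\tau|^2})$, use the Laplacian-expansion lemma above to evaluate $\Delta_\sigma\tau$, and solve iteratively, writing $\theta=\bar\theta+\theta^{(1)}$ with $\sinh\bar\theta=(a_i\tilde x^i)/\sqrt{1+|\tilde\na\bar\tau|^2}$ and $\theta^{(1)}=O(d^{-1-\alpha})$. For $\theta'$, I would unwind its definition $\cosh\phi\,e_3-\sinh\phi\,e_4=\cosh\theta'\,e_3^H-\sinh\theta'\,e_4^H$, $\sinh\phi=-u_3/\sqrt{1+|\na\tau|^2}$, using Lemma \ref{solve_Jang} to compute the boundary normal derivative of the Jang solution, whose homogeneous pieces give
\[
u_3=\Bigl(a_i+\tfrac16 T_i-\tfrac16 a_lR_{li}(p)\Bigr)\tilde x^i+a_0k_{ij}(p)\tilde x^i\tilde x^j+\tfrac12 B_{ijm}\tilde x^i\tilde x^j\tilde x^m+O(d^{-3-\alpha}),
\]
together with the expansions of the Jang-graph geometry of $\tilde\Omega$ near $\pl B_1$ recorded in the previous subsection. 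This again yields $\theta'=\bar\theta$ plus an $O(d^{-1-\alpha})$ correction.

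Forming the difference, the contributions of the quadratic ($a_0k_{ij}$) terms and of the cubic ($\pl k$) terms are designed to match: $u$ and $\tau$ agree on $\pl B_1$, and $B_{ijm}$ was chosen in Lemma \ref{solve_Jang} so that $T_{ijm}$ is pure trace, i.e. $B_{ijm}\tilde x^i\tilde x^j\tilde x^m=a_0\pl_ik_{jm}\tilde x^i\tilde x^j\tilde x^m-T_m\tilde x^m$; together with the Codazzi relation $\pl_mk_{ij}-\pl_ik_{mj}=-\bar W_{0jmi}+O(d^{-3-2\alpha})$ already used in solving Jang's equation, the $k$ and $\pl k$ dependence reorganizes into the Weyl/Ricci curvature and $T_m$. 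The net result is expected to be the pure first spherical harmonic
\[
\theta'-\theta=-\frac{1}{3a_0}\bigl(T_m+a_lR^l_m(p)\bigr)\tilde x^m+O(d^{-3-\alpha}).
\]
Squaring and using $\int_{S^2}\tilde x^m\tilde x^n\,dS^2=\frac{4\pi}{3}\delta^{mn}$ then gives
\[
\frac{a_0}{8\pi}\int_{S^2}(\theta'-\theta)^2\,dS^2=\frac{a_0}{8\pi}\cdot\frac{1}{9a_0^2}\cdot\frac{4\pi}{3}\sum_m\bigl(T_m+a_lR^l_m\bigr)^2=\frac{1}{54a_0}\Bigl(\sum_m T_m^2+2T_ma_iR^{im}(p)+a_ia_jR^{il}(p)R^j_l(p)\Bigr),
\]
which is the assertion. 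The main obstacle will be exactly this computation of $\theta'-\theta$: correctly unwinding the boost definition of $\theta'$ in terms of the tilted Jang-graph frame $\{e_3,e_4\}$ versus the canonical frame $\{e_3^H,e_4^H\}$, and carefully verifying that, after accounting for the non-round tangential Laplacian on $\tilde\Sigma$ and the degree-dependent relation between normal and tangential derivatives on $S^2$, all of the $k_{ij}$ and $\pl k$ contributions cancel or collapse into the stated $\ell=1$ combination — this is where the bookkeeping is most delicate, and it is also the structural reason the $T_m$-dependence here is cancelled by that of Lemma \ref{lemma_Schoen-Yau} in the final assembly.
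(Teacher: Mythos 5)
Your proposal is correct and follows essentially the same route as the paper: both reduce the integrand to a positive quadratic form in the gauge difference (your Taylor expansion of $f$ at its critical point is an equivalent, arguably cleaner, packaging of the paper's chain of hyperbolic identities, which yields the integrand $\frac{1}{\cosh\theta+\cosh\theta'}\sqrt{1+|\nabla\tau|^2}\,|H|(\sinh\theta'-\sinh\theta)^2$), and both then extract the gauge difference from $\Delta\tau$ and the normal derivative $u_3$ of the Jang solution. Your asserted $\theta'-\theta=-\frac{1}{3a_0}(T_m+a_lR^l_m)\tilde x^m$ agrees exactly with the paper's $|H|(\sinh\theta-\sinh\theta')=\frac{2}{3\sqrt{1+|\nabla\tau|^2}}(T_m\tilde x^m+a_iR^i_j\tilde x^j)$, as do the prefactor $|H|\sqrt{1+|\nabla\tau|^2}\cosh\theta=2a_0$ and the final $S^2$ integration; the one step you only sketch --- the explicit cancellation of the $k_{ij}$ and $\partial k$ terms in $\sinh\theta-\sinh\theta'$ --- is precisely the computation the paper carries out, and your stated outcome is the correct one.
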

\begin{proof}
For the canonical gauge, we have
\begin{align*}
\langle H, \bar e_4 \rangle &= |H| \sinh\theta = \frac{-\Delta\tau}{\sqrt{1+|\na\tau|^2}} \\
&= \frac{1}{\sqrt{1 + |\na\tau|^2}} \Big[ 2 a_i \tilde x^i + 3 a_0 k_{ij}(\tilde x^i \tilde x^j - \frac{1}{3} \delta^{ij}) + a_0 \pl_i k_{jm} \lt( 2 \tilde x^i \tilde x^j \tilde x^m - \frac{1}{3} (\tilde x^i \delta^{jm} + \tilde x^j \delta^{mi} + \tilde x^m \delta^{ij}) \rt) \\
&\qquad + \frac{a_i}{3} R^i_n \tilde x^n - \frac{1}{3} a_m R_{ij} \tilde x^i \tilde x^j \tilde x^m \Big] + O(d^{-3-\alpha}).
\end{align*}

For the gauge $\{ e_3', e_4'\}$ induced by Jang's equation, we have \cite[Theorem 4.1]{Wang-Yau1} $e_4' = \sinh\phi e_3 + \cosh\phi e_4$ with 
\[ 
\begin{split}
\sinh\phi & = \frac{-u_3}{\sqrt{1+|\na\tau|^2}} \\
&= \frac{-1}{\sqrt{1+|\na\tau|^2}} (a_i \tilde x^i + \frac{1}{6} T_i \tilde x^i + a_0 k_{ij} \tilde x^i \tilde x^j + \frac{1}{2} B_{ijm} \tilde x^i \tilde x^j \tilde x^m - \frac{1}{6} a_i R^i_n \tilde x^n) + O(d^{-3-\alpha})
\end{split}
\]
and hence
\begin{align*}
   & \langle H, e_4' \rangle \\
=& \cosh\phi \langle H, e_4 \rangle + \sinh\phi \langle H,e_3 \rangle \\
=& \frac{a_0}{\sqrt{1+|\na\tau|^2}} \lt( -k_{ii} + k_{ij} \tilde x^i \tilde x^j - \pl_m k_{ii} \tilde x^m + \pl_m k_{ij} \tilde x^m \tilde x^i \tilde x^j\rt)\\
& +\frac{1}{\sqrt{1+|\na\tau|^2}} \lt( a_i \tilde x^i + \frac{1}{6} T_i \tilde x^i +a_0 k_{ij} \tilde x^i \tilde x^j + \frac{1}{2} B_{ijm} \tilde x^i \tilde x^j \tilde x^m - \frac{1}{6} a_i R^i_n \tilde x^n \rt) (2 - \frac{1}{3} R_{ij} \tilde x^i \tilde x^j) + O(d^{-3-\alpha}).
\end{align*}
By the constraint equations, we get
\[ |H| (\sinh \theta - \sinh \theta') = \frac{1}{\sqrt{1+|\na\tau|^2}} \frac{2}{3} \lt( T_m \tilde x^m + a_i R^i_j \tilde x^j \rt) + O(d^{-3-\alpha}). \]

Next, using two elementary computations 
\[ \cosh\theta' - \cosh\theta = \frac{\sinh\theta' + \sinh\theta}{\cosh\theta' + \cosh\theta} (\sinh\theta' - \sinh\theta)\]
and (up to error of order $O(d^{-6-3\alpha})$)
\begin{align*}
\theta' - \theta = \sinh(\theta' - \theta) = \sinh\theta' \cosh\theta - \sinh\theta \cosh\theta' = \cosh\theta (\sinh\theta' - \sinh\theta)-\sinh\theta (\cosh\theta' - \cosh\theta), 
\end{align*}
we get, up to a negligible error,
\begin{align*}
& |H|\sqrt{1+|\na\tau|^2} (\cosh\theta' - \cosh\theta) + \Delta\tau (\theta' - \theta)\\
=& \lt( \sqrt{1+|\na\tau|^2 }|H| - \sinh\theta \Delta\tau \rt) (\cosh\theta' - \cosh\theta) + \Delta\tau \cosh\theta (\sinh\theta' - \sinh\theta)\\
=& \lt( \lt( \sqrt{1 + |\na\tau|^2} |H| - \sinh\theta \Delta\tau \rt) \frac{\sinh\theta + \sinh\theta'}{\cosh\theta + \cosh\theta'} + \Delta\tau \cosh\theta \rt)(\sinh\theta' - \sinh\theta) \\
= &\sqrt{1+|\na\tau|^2}|H| \lt( \cosh^2\theta \frac{\sinh\theta + \sinh\theta'}{\cosh\theta + \cosh\theta'} - \sinh\theta \cosh\theta \rt)(\sinh\theta' - \sinh\theta)\\
= & \frac{1}{\cosh\theta + \cosh\theta'} \sqrt{1 + |\na\tau|^2} |H| (\sinh\theta' - \sinh\theta)^2  
\end{align*}

Consequently, 
\begin{align*}
&\frac{1}{8\pi} \int \lt[ |H|\sqrt{1+|\na\tau|^2} (\cosh\theta' - \cosh\theta) + \Delta\tau (\theta' - \theta)\rt] d\Sigma \\
=& \frac{1}{8\pi} \int \frac{1}{2\cosh\theta} \sqrt{1+|\na\tau|^2} |H| (\sinh\theta' - \sinh\theta)^2 dS^2\\
=& \frac{1}{8\pi} \frac{1}{4a_0} \frac{4}{9}\int \lt( T_m \tilde x^m +  a_i R^i_j \tilde x^j \rt)^2  dS^2 + O(d^{-5-2\alpha})\\
=& \frac{1}{54a_0} \lt( \sum_m T_m^2 + 2 T_m a_i R^{im} + a_i a_j R^{il} R^j_l \rt) + O(d^{-5-2\alpha}).
 \end{align*}
\end{proof}

\begin{proof}[Proof of Theorem \ref{main}, (2)]
Putting Lemma \ref{lemma_Schoen-Yau}, \ref{lemma_difference_2nd_ff}, \ref{lemma_gauge} together, we get
\begin{align*}
E(\Sigma,X,T_0) &=  \frac{a_0}{90} R_{ij}(p) R^{ij}(p) + \frac{1}{180a_0} R_{ij}(p) R^{ij}(p) \\
& + \frac{a_0}{180} \bar W_{0imj}(p) \bar W_0^{\;\;imj}(p) - \frac{1}{45} a_m \bar W\indices{_0^{imj}}(p) \bar W_{0i0j}(p).
\end{align*}
Recall \cite[page 4]{Chen-Wang-Yau4}
\begin{align*}
Q(e_0,e_0,e_0,T_0) = \frac{a_0}{2} \bar W_{0imj}(p) \bar W_0^{\;\;imj}(p) + a_0 \bar W_{0m0n}(p) \bar W\indices{_0^m_0^n}(p) - 2 a_m \bar W_0^{\;\;imj}(p)\bar W_{0i0j}(p).
\end{align*}
By the Gauss equation, $\bar W_{0i0j} = R_{ij} + O(d^{-2-2\alpha})$, we complete the proof.
\end{proof}
\section{Small sphere limit}
The careful readers would surely find the similarity between our main result and the small sphere limit \cite[Theorem 1.1, 1.2]{Chen-Wang-Yau4}. In this section, we adapt the previous computations to the small sphere setting. Although the family of small spheres is different from that considered in \cite{Chen-Wang-Yau4}, the limit of quasi-local mass turns out to be the same. 

Let $p$ be a point in the spacetime. We recall the setup in \cite{Chen-Wang-Yau4}. Let $e_0, e_1, e_2, e_3$ be an orthonormal basis at $p$, $\langle e_\alpha, e_\beta \rangle = \eta_{\alpha\beta}$. Using $e_0$, we normalize each null vector $L$ at $p$ by $\langle L, e_0 \rangle = -1$. We consider the null geodesics with initial velocity being the normalized $L$. $\Sigma_r$ is defined as the level sets of the affine parameter $r$. In short, Chen-Wang-Yau considered small spheres approaching $p$ along the light cone.
\begin{theorem}\label{small_sphere_limit}\cite[Theorem 1.1, 1.2]{Chen-Wang-Yau4}
\begin{enumerate}
\item For the isometric embeddings $\mathcal X_r$ of $\Sigma_r$ into $\R^3$, the quasi-local energy satisfies
\begin{align}\label{CWY_nonvacuum}
E(\Sigma_r, \mathcal X_r, T_0) = r^3 \cdot \frac{4\pi}{3} T(e_0,T_0) + O(r^4)
\end{align}
as $r$ goes to $0$.
\item Suppose the stress-energy tensor $T_{\alpha\beta}$ vanish in a neighborhood of $p$. Then, for the pair $(\mathcal X_r(T_0), T_0)$ solving the leading order term of the optimal embedding equation of $\Sigma_r$, we have
\begin{align}\label{CWY_vacuum}
E(\Sigma_r, \mathcal X_r(T_0), T_0) = r^5 \cdot \frac{1}{90} \lt[ Q(e_0,e_0,e_0, T_0) + \frac{\sum_{m,n} \bar W^2_{0m0n}(p)}{2a_0} \rt] + O(r^6)
\end{align}
as $r$ goes to $0$.
\end{enumerate}
In the right-hand side of both formula, we identify $T_0 = (a_0, -a_1, -a_2, -a_3)$ with the timelike vector $a_0 e_0 - \sum_{i=1}^3 a_i e_i$ at $p$. 
\end{theorem}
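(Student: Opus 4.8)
The plan is to re-run the analysis of Sections~3--5 with the large-distance parameter $d\to\infty$ replaced by the small affine radius $r\to 0$. Since $E(\Sigma_r,\mathcal X,T_0)$ depends only on $\Sigma_r$ as a surface in $N$, and since the bulk formula of Theorem~\ref{mass by Jang's equation} holds for \emph{any} spacelike region filling $\Sigma_r$, we may choose a convenient one. Working in geodesic normal coordinates $(x^0,x^1,x^2,x^3)$ at $p$ adapted to $e_0,\dots,e_3$, and noting that the normalized null vectors are exactly $L=e_0+\omega$ with $\omega$ a spacelike unit vector, the surface $\Sigma_r=\exp_p\{rL\}$ is precisely the coordinate sphere $\{x^0=r,\ |x|=r\}$; it bounds the spacelike disk $\Omega_r=\{x^0=r,\ |x|\le r\}$ inside the hypersurface $\{x^0=r\}$ (spacelike near $p$ because $g^{00}<0$ there). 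Rescaling $x=rX$ identifies $\Omega_r$ with the unit ball $B_1$ and $\Sigma_r$ with $\pl B_1$. (A closely analogous computation applies if one instead takes $\Sigma_r$ to be a geodesic sphere of a fixed spacelike slice through $p$.)

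Expanding the induced data from $\bar g_{\mu\nu}(x)=\eta_{\mu\nu}-\tfrac13\bar R_{\mu\alpha\nu\beta}(p)x^\alpha x^\beta+O(|x|^3)$, restricting to $x^0=r$ and rescaling, one gets $g_{ij}=\delta_{ij}+O(r^2)$ with $r^2$-coefficient polynomial in $\bar R_{\mu\nu\alpha\beta}(p)$, $k_{ij}=O(r)$, and $\mu,J=8\pi T(e_0,\cdot)+O(r)$ by Gauss--Codazzi (so $\mu\equiv J\equiv 0$ in the vacuum case). After the rescaling this is the same structural input as \eqref{abbreviated expansion}, with $r$ (for $k$) and $r^2$ (for the curvature) playing the role of the decay orders, and — as in the Remark following Theorem~\ref{main} — the relevant curvature of the slice is $\bar W_{0i0j}(p)$ up to $O(r)$ while the Codazzi tensor $\pl_{[m}k_{i]j}(p)$ supplies the remaining components of $\bar W(p)$.

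With this data I would apply Theorem~\ref{mass by Jang's equation} and then rerun Lemmas~\ref{optimal, general observer} and~\ref{solve_Jang} mutatis mutandis: the optimal embedding equation again reduces to an equation on $S^2$ whose source lies in the $-6$ and $-12$ eigenspaces, and Jang's equation again reduces to the linear system $\sum_i G_{im}T_i=(\text{curvature at }p)$ with the same matrix $G_{im}=(4+\tfrac1{a_0^2})\delta_{im}-\tfrac{2a_ia_m}{a_0^2}$. Re-evaluating the three integrals as in Lemmas~\ref{lemma_Schoen-Yau}, \ref{lemma_difference_2nd_ff}, \ref{lemma_gauge}: for part~(1) every term but $\tfrac1{8\pi}\int(\mu-J(\tilde e_4))\,d\tilde\Omega$ is of higher order, and since $\mu,J$ are $O(1)$ at $p$ and $\mathrm{Vol}(\Omega_r)=\tfrac{4\pi}{3}r^3+\cdots$, this integral equals $r^3\cdot\tfrac{4\pi}{3}T(e_0,T_0)+O(r^4)$; for part~(2) with $\mu=J=0$, the three quadratic integrals recombine exactly as in the proof of Theorem~\ref{main}~(2), and using $\bar W_{0i0j}(p)=R_{ij}(p)+O(r)$ together with the expression for $Q(e_0,e_0,e_0,T_0)$ recalled there from \cite{Chen-Wang-Yau4}, the total collapses to $r^5\cdot\tfrac1{90}\bigl[Q(e_0,e_0,e_0,T_0)+\tfrac1{2a_0}\sum_{m,n}\bar W_{0m0n}^2(p)\bigr]+O(r^6)$, the powers $r^3$ and $r^5$ emerging from the scalings of the volume form and of the (quadratic) integrands.

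I expect the main obstacle to be the uniform control behind the first two paragraphs: the filling region $\Omega_r$ genuinely moves with $r$, so the Taylor coefficients, the solvability of the optimal embedding and Jang equations, and the Schauder estimates for the remainder term must all be uniform as $r\to 0$ — the analog of \eqref{weak_decay_assumption} and the barrier argument in Lemma~\ref{solve_Jang}, now with $r$-dependent right-hand sides. The second, more mechanical point is to verify that the algebraic combination produced by Lemmas~\ref{lemma_Schoen-Yau}--\ref{lemma_gauge}, written through $R_{ij}(p)$, $\pl k(p)$ and $a_\mu$, reorganizes — via the first Bianchi identity, the Codazzi relations and the electric/magnetic decomposition of $\bar W(p)$ — into exactly $Q(e_0,e_0,e_0,T_0)+\tfrac1{2a_0}\sum_{m,n}\bar W^2_{0m0n}(p)$, matching the value computed in \cite{Chen-Wang-Yau4} for the light-cone family.
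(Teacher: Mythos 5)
Your overall strategy --- fill $\Sigma_r$ by a small spacelike region, apply the bulk formula of Theorem \ref{mass by Jang's equation}, and rerun the Section 5 lemmas with the powers of $r$ supplied by the rescaling --- is exactly the route the paper takes in Section 6. But you and the paper part ways at the choice of family/filling, and that is where your plan has a genuine gap. The paper does \emph{not} compute the limit for the light-cone spheres of the statement: it replaces them by the coordinate spheres $\{X^0=0,\ |X|=r\}$ of the fixed slice through $p$ (explicitly acknowledging that the family is different and citing \cite{Chen-Wang-Yau4} for the light-cone version), precisely because on that slice the restricted normal coordinates are geodesic normal coordinates of the induced metric to the relevant order, so the structural input \eqref{abbreviated expansion} and all the Section 5 computations apply verbatim. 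You instead fill the genuine light-cone sphere $\{x^0=r,\ |x|=r\}$ by the disk in $\{x^0=r\}$. On that slice the induced data in the restricted coordinates is
\[
g_{ij}=\delta_{ij}-\tfrac13\bar R_{i0j0}(p)\,r^2-\tfrac13\bigl(\bar R_{i0jk}(p)+\bar R_{ikj0}(p)\bigr) r\,x^k-\tfrac13\bar R_{ikjl}(p)\,x^kx^l+O(r^3),
\]
and $k_{ij}$ likewise acquires the nonzero value $-\tfrac13\bar R_{i0j0}(p)\,r+O(r^2)$ at the center of the disk. The first two metric terms are of the \emph{same} order, $O(r^2)$ relative to $\delta_{ij}$, as the purely spatial curvature term you retain; they feed into the isometric embedding, into $H_0-H$ and $h_0-h$, and hence into the quadratic integrals of Lemmas \ref{lemma_Schoen-Yau}--\ref{lemma_gauge} at exactly the $r^5$ order of part (2) (note in particular that the $r\,x^k$ term involves the magnetic components $\bar R_{i0jk}$, which do appear in the final answer). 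So the assertion that ``after the rescaling this is the same structural input as \eqref{abbreviated expansion}'' is false, and the \emph{mutatis mutandis} rerun of Lemma \ref{lemma_mean_curvature_diff_1} and the three evaluation lemmas is not justified: you would have to track these extra contributions and show that their net effect on $E$ vanishes, which is essentially the nontrivial assertion that the light-cone and spacelike-slice families share the same limit. Part (1) is unaffected, since at order $r^3$ only $\mu-J(\tilde e_4)$ at $p$ survives and your volume computation is correct.

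If you retreat to your parenthetical alternative (spheres in a fixed spacelike slice through $p$), you land on the paper's proof, but then you are verifying the formula for a different family than the one in the statement and must, as the paper does, rely on \cite{Chen-Wang-Yau4} for the light-cone case itself. Two further points are deferred rather than addressed in your write-up: the uniformity issue you flag is handled in the paper by the rescaling to $B_1$ together with the barrier and Schauder argument of Lemma \ref{solve_Jang}, so it is not the real obstacle; the real work in part (2) is the algebra you postpone --- using $k_{ij}(p)=0$ on the chosen slice, the Codazzi identification of $\pl_{[m}k_{i]j}(p)$ with $\bar W_{0jmi}(p)$, the first Bianchi identity, and the expression for $Q(e_0,e_0,e_0,T_0)$ from \cite{Chen-Wang-Yau4} to collapse the three integrals into the stated combination. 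As it stands, the proposal identifies the correct machinery but neither executes the decisive computation nor justifies the step that would make it apply to the family actually named in the theorem.
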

To get the same limit using the method of previous sections, we approach $p$ along a spacelike hypersurface. Let $X^0, X^1, X^2, X^3$ be a normal coordinate near $p$. The metric has the expansion
\[ \bar g_{\alpha\beta} = \eta_{\alpha\beta} - \frac{1}{3} \bar R_{\alpha\gamma\beta\delta} X^\gamma X^\delta + \cdots. \] 
Let $M$ be the slice $\{ X^0 = 0 \}$. We consider small spheres $\Sigma_r = \{ X^0=0, (X^1)^2 + (X^2)^2 + (X^3)^2 = r^2 \}$ and balls $B_r = \{ X^0=0, (X^1)^2 + (X^2)^2 + (X^3)^2 \le r^2\}, 0 < r < \epsilon$. The timelike unit normal vector and second fundamental form of $M$ are given by
\[ \vec{n} = \frac{\pl}{\pl X^0} + \frac{1}{3} \bar R_{0jik} X^j X^k \frac{\pl}{\pl X^i} - \frac{1}{6} \bar R_{0j0k} X^j X^k \frac{\pl}{\pl X^0} + O(|X|^3) \]
and
\begin{align*}
k_{ij} &= \frac{1}{2} \lt( \langle D_{\pl_i} \vec{n}, \pl_j + \langle D_{\pl_j} \vec{n}, \pl_i \rangle \rt)\\
&= \frac{1}{2} \pl_0 g_{ij} + \frac{1}{6} \lt( \bar R_{0ijq} + \bar R_{0jiq} \rt)X^q + O(|X|^3)\\
&= -\frac{1}{3} \lt( \bar R_{0iqj} + \bar R_{0jqi}\rt) X^q + O(|X|^2).
\end{align*} 
 We compute on $\Sigma_r$
\begin{align*}
\mbox{tr}_\Sigma k &= O(r^2),\\
\alpha_H &= - k(\pl_a, \nu) + \pl_a \lt( \frac{\mbox{tr}_\Sigma k}{|H|} \rt) + O(r^4) = -\frac{r^2}{6} \bar R_{j0iq} \tilde x^i_a \tilde x^j \tilde x^q + O(r^4),\\
div \alpha_H &= O(r^2).
\end{align*}

We are ready to use Theorem \ref{mass by Jang's equation} to recover the small sphere limits, Theorem \ref{small_sphere_limit}. Consider the nonvacuum case first. By definition, we have $\mu = 8\pi T(e_0, e_0)$ and $J_i = 8\pi T(e_0,e_i)$. Moreover, integrating over $B_r$ provides a factor of $r^3$. Therefore we recover (\ref{CWY_nonvacuum}).

For the non-vacuum case, we again solve the optimal embedding equation and Jang's equation first and then evaluate the three integrals on either $B_r$ or $\Sigma_r$.

\begin{lemma}
The following pair $T_0 = (a_0, -a_1, -a_2, -a_3)$ and
\begin{align*}
\mathcal X^0 &= \frac{a_i}{6a_0} r^3 R_{mn}(p) \tilde x^m \tilde x^n \tilde x^i + O(r^4)\\
\mathcal X^i &= r\tilde x^i - \frac{r^3}{6} R^i_n(p) \tilde x^n - \frac{r^3}{6} R_{mn}(p) \tilde x^m \tilde x^n \tilde x^i+ O(r^4)
\end{align*}
solves the leading order of the optimal embedding equation on $\Sigma_r$. In particular, the above solution gives a time function $\tau = - \mathcal X \cdot T_0$ with
\begin{align*}
\tau = r a_i \tilde x^i - \frac{r^3}{6} a_i R^i_n(p) \tilde x^n + O(r^4).
\end{align*}
\end{lemma}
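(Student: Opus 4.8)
The plan is to reproduce the argument of Lemma~\ref{optimal, general observer} essentially verbatim, taking advantage of the two simplifications available on a vacuum neighborhood of $p$: the spacetime Ricci tensor $\bar R_{\alpha\beta}$ vanishes and $k_{ij}(p)=0$. With $T_0=(a_0,-a_1,-a_2,-a_3)$, the optimal embedding equation \eqref{oee} on $\Sigma_r$ reduces, exactly as in the proof of Lemma~\ref{optimal, general observer} (see \cite[Section 7]{Chen-Wang-Yau3}), to a fourth order linear equation for $\mathcal X^0$ of the form
\begin{equation*}
\tfrac12\Delta(\Delta+2)\mathcal X^0 = div\,\alpha_H + \tfrac{a_i}{a_0}\lt[ div\,\lt((H_0-|H|)\na\tilde x^i\rt) + \tfrac12\Delta\lt((H_0-|H|)\tilde x^i\rt)\rt] + (\text{lower order}),
\end{equation*}
it being convenient to rescale $\Sigma_r$ to a unit geodesic sphere, after which the metric on the unit ball becomes $\delta_{ij}+O(r^2)$, its curvature is $O(r^2)$, and the setting is literally that of Section~5 with $r^2$ playing the role of the $d^{-\alpha}$-type smallness. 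I would then (i) decide which pieces of the right-hand side survive to the order needed for $\mathcal X^0$; (ii) invert $\Delta(\Delta+2)$ by spherical harmonics; (iii) recover $\mathcal X^i$ from the linearized isometric embedding equation; (iv) assemble $\tau=-\langle\mathcal X,T_0\rangle=a_0\mathcal X^0+a_i\mathcal X^i$.

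For step (i) I would use the expansions on $\Sigma_r$ computed just above the statement. Since $\mbox{tr}_\Sigma k=O(r^2)$ and $\alpha_H=-\tfrac{r^2}{6}\bar R_{j0iq}\tilde x^i_a\tilde x^j\tilde x^q+O(r^4)$, the term $div\,\alpha_H$ does not affect $\mathcal X^0$ at the order in question; and because $k(p)=0$, the $k$-linear and $\partial k$-linear pieces that produce the $l=2$ part of $\mathcal X^0$ in Lemma~\ref{optimal, general observer} are pushed one order lower here. So $\mathcal X^0$ is driven by $H_0-|H|$ alone, and, arguing as in Lemma~\ref{lemma_mean_curvature_diff_1} but now with radius $r$, the relation $|H|=\sqrt{H^2-(\mbox{tr}_\Sigma k)^2}$ together with the vacuum Hamiltonian constraint $R(g)=|k|^2-(\mbox{tr}\,k)^2=O(r^2)$ give $H_0-|H|=-r\,R_{ij}(p)\tilde x^i\tilde x^j+O(r^3)$. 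Evaluating $div\,\lt((H_0-|H|)\na\tilde x^i\rt)+\tfrac12\Delta\lt((H_0-|H|)\tilde x^i\rt)$ word for word as in Lemma~\ref{optimal, general observer}, the $l=1$ contributions cancel and one is left with $\tfrac{a_i}{a_0}$ times the degree-$3$ harmonic built from $R_{mn}(p)\tilde x^m\tilde x^n\tilde x^i$; inverting $\tfrac12\Delta(\Delta+2)$, which acts by $\tfrac12(l-1)l(l+1)(l+2)=60$ on degree-$3$ harmonics, yields $\mathcal X^0=\tfrac{a_i}{6a_0}r^3 R_{mn}(p)\tilde x^m\tilde x^n\tilde x^i+O(r^4)$, the power $r^3$ coming from the $r$ in $H_0-|H|$ together with the $r^2$ from undoing the rescaling.

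For step (iii), the three-dimensional curvature identity \eqref{riemann3d} and $R(g)=O(r^2)$ give the induced metric on $\Sigma_r$ as
\begin{equation*}
\sigma_{ab}=r^2\lt(\tilde\sigma_{ab}-\tfrac13 R_{kl}(p)\tilde x^k_a\tilde x^l_b-\tfrac13 R_{ij}(p)\tilde x^i\tilde x^j\tilde\sigma_{ab}\rt)+O(r^4),
\end{equation*}
which is the $r$-rescaled version of the metric in the proof of Lemma~\ref{optimal, general observer}; solving the linearized isometric embedding equation into $\R^3$ then gives $\mathcal X^i=r\tilde x^i-\tfrac{r^3}{6}R^i_n(p)\tilde x^n-\tfrac{r^3}{6}R_{mn}(p)\tilde x^m\tilde x^n\tilde x^i+O(r^4)$. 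Finally, for step (iv), $\tau=a_0\mathcal X^0+a_i\mathcal X^i$, and the cubic terms $\tfrac{a_i}{6}r^3 R_{mn}(p)\tilde x^m\tilde x^n\tilde x^i$ in $a_0\mathcal X^0$ and $-\tfrac{a_i}{6}r^3 R_{mn}(p)\tilde x^m\tilde x^n\tilde x^i$ in $a_i\mathcal X^i$ cancel, leaving $\tau=r\,a_i\tilde x^i-\tfrac{r^3}{6}a_i R^i_n(p)\tilde x^n+O(r^4)$, as claimed.

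The step that genuinely requires care — and the only real departure from Lemma~\ref{optimal, general observer} — is (i): one must confirm that $div\,\alpha_H$ and the $k$-linear/$\partial k$-linear contributions really fail to reach the order of $\mathcal X^0$. This uses $k(p)=0$ and $\mbox{tr}_\Sigma k=O(r^2)$, the latter coming from the vacuum condition $\bar R_{\alpha\beta}=0$ via a Ricci contraction of $\bar R_{0iqj}$, together with the fact that the surviving part of $div\,\alpha_H$ lies in the kernel $l\le1$ of $\Delta(\Delta+2)$. Everything else — tracking the powers of $r$ hidden in $|H|\sim2/r$ and in $\Delta_{\Sigma_r}=r^{-2}\tilde\Delta$, the spherical harmonic computation, and the linearized isometric embedding step — is routine and formally identical to Section~5.
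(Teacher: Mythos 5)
Your proposal is correct and takes essentially the same approach the paper intends: the lemma is stated there without a separate proof, resting on exactly the adaptation of Lemma \ref{optimal, general observer} that you carry out, and you correctly isolate the one genuinely new point — that $div\, \alpha_H$ and the $k$-linear, $\pl k$-linear sources drop out of the equation for $\mathcal X^0$, which is the content of the paper's preparatory computations $\mbox{tr}_\Sigma k = O(r^2)$ and $div\,\alpha_H = O(r^2)$ (both consequences of $\bar R_{\alpha\beta}=0$ and the antisymmetries of $\bar R_{0jim}$), leaving $H_0-|H|=-rR_{ij}(p)\tilde x^i\tilde x^j+\cdots$ as the sole effective source. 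The only slip is cosmetic: the curvature corrections in your displayed induced metric should carry a factor $r^4$, not $r^2$, i.e.\ $\sigma_{ab}=r^2\tilde\sigma_{ab}-\frac{r^4}{3}\lt(R_{kl}(p)\tilde x^k_a\tilde x^l_b+R_{ij}(p)\tilde x^i\tilde x^j\tilde\sigma_{ab}\rt)+O(r^5)$, which is precisely what makes the linearized isometric embedding produce the $r^3$ coefficients in your (correct) final $\mathcal X^i$.
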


\begin{lemma}
Let $u$ be the solution of the Dirichlet problem of Jang's equation on $B_r$ with boundary value $\tau$.  Then $u = a_i X^i + \frac{r^2}{6} T_i X^i + \frac{r^2}{6} B_{ijm} X^i X^j X^m - \frac{r^2}{6} a_i R^i_n X^n$, where 
\[ B_{ijm} = -\frac{1}{3} (\delta_{ij} T_m + \delta_{jm} T_i +\delta_{mi} T_j). \]
\end{lemma}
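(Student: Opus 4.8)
The plan is to adapt the proof of Lemma~\ref{solve_Jang} essentially line by line, with the decay exponents in $d$ replaced by powers of the affine radius $r$ and with the simplifications $k_{ij}(p)=0$ and $\mathrm{tr}\,k=O(r^2)$, which hold because $M=\{X^0=0\}$ is a normal-coordinate slice of the vacuum development $N$ (so $k_{ij}=-\tfrac13(\bar R_{0iqj}+\bar R_{0jqi})X^q+O(|X|^2)$, as recorded above). First I would fix the polynomial ansatz $u = a_iX^i + b_iX^i + \tfrac16 B_{ijm}X^iX^jX^m - \tfrac{r^2}{6}a_iR^i_n(p)X^n + v$, where the vector $b_i$ and the symmetric $3$-tensor $B_{ijm}$ are the undetermined constants and $v$ is the remainder; note that, unlike in Lemma~\ref{solve_Jang}, there is no $X^iX^j$ term, precisely because $k_{ij}(p)=0$, and there is no $\tfrac{a_0}{3}\pl k$ contribution inside $B_{ijm}$ because the optimal embedding $\tau=ra_i\tilde x^i-\tfrac{r^3}{6}a_iR^i_n(p)\tilde x^n+O(r^4)$ carries no cubic-in-$\tilde x$ term.

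Next I would impose the Dirichlet condition $u=\tau$ on $\pl B_r\cong S^2$. Since $\tau$ is known only to the orders just displayed, this is equivalent to requiring $u-\tau$ to be orthogonal to the $-2$ and $-12$ eigenspaces on the sphere (the $\ell=1$ and $\ell=3$ spherical harmonics that occur), exactly as in Lemma~\ref{solve_Jang}. Introducing the auxiliary symmetric tensor $T_{ijm}$ attached to $\pl k$ and $B_{ijm}$ as there — here the $\pl_m k_{ij}(p)$ part of $\tau$ appears only at order $r^4$ and is swept into $v$, so $T_{ijm}=-B_{ijm}$ up to negligible terms — and using \cite[Lemma~5.3]{Chen-Wang-Yau4}, the seven orthogonality relations reduce the ten components of $B_{ijm}$ to the three-parameter family $B_{ijm}=-\tfrac13(\delta_{ij}T_m+\delta_{jm}T_i+\delta_{mi}T_j)$, $T_m:=T_{mmm}$, and simultaneously force $b_i=\tfrac{r^2}{6}T_i$.

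It then remains to pin down $T_m$ from Jang's equation itself. Substituting the ansatz into $\lt(g^{ij}-\tfrac{D^iuD^ju}{1+|Du|^2}\rt)\lt(k_{ij}-\tfrac{D_iD_ju}{\sqrt{1+|Du|^2}}\rt)=0$ and using the Codazzi equation $\pl_m k_{ij}-\pl_i k_{mj}=-\bar R_{0jmi}+O(r^2)$, the Gauss equation $R^{(M)}_{ikjl}(p)=\bar R_{ikjl}(p)$ (valid since $k(p)=0$, which also lets one replace the $3$-dimensional Ricci $R^i_n(p)$ by $-\bar R_{0i0n}(p)=-\bar W_{0i0n}(p)$), and the vacuum condition $\bar R_{\mu\nu}=0$, one obtains $k_{ij}-\tfrac{D_iD_ju}{\sqrt{1+|Du|^2}}=S_{ijm}X^m+(\text{lower order in }r)$ with $S_{ijm}$ the vacuum specialization of \eqref{S}, linear in $T_m$ and in the spacetime curvature at $p$. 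The leading order of Jang's equation then collapses to the three linear equations $\sum_{i,j}(\delta_{ij}-\tfrac{a_ia_j}{a_0^2})S_{ijm}=0$, i.e.\ $\sum_i G_{im}T_i=(\text{curvature source})$ with the same matrix $G_{im}=(4+\tfrac{1}{a_0^2})\delta_{im}-2\tfrac{a_ia_m}{a_0^2}$ inverted in Lemma~\ref{solve_Jang}; this yields $T_m$. Finally, with the polynomial part of $u$ frozen, I would treat Jang's equation as a quasilinear Dirichlet problem $Q(x,Dv)v=0$ in $B_r$ with $v=0$ on $\pl B_r$, use barriers of the form $\pm C'r^{\gamma}(|X|^2-r^2)$ to get the $C^0$ and boundary-gradient bounds, and invoke \cite[Theorems~13.7,~15.1,~11.4]{GT} together with Schauder estimates for solvability and for a negligible bound on $v$, exactly as in Lemma~\ref{solve_Jang}.

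The main obstacle is the third step: verifying that the leading-order reduction of Jang's equation genuinely collapses to the same linear system for $T_m$ as in the asymptotically flat case. This demands a careful audit of which contributions of $k$, $\pl k$, the Christoffel symbols, and $\tau$ survive at the relevant order in $r$ versus which fall into $v$ — in particular that choosing the optimal $\tau$ kills the would-be leading obstruction — and it is here that the vacuum condition must be combined with the Codazzi and Gauss equations to turn an a priori unwieldy expression into the clean tensor $S_{ijm}$. Once that identity is in hand, the remaining steps are a mechanical rescaling of the proof of Lemma~\ref{solve_Jang}, and the resulting $T_m$ and $B_{ijm}$ match the vacuum specialization of that lemma, which is the content of the present statement.
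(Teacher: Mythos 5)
Your proposal is correct and follows exactly the route the paper intends: this lemma is stated without proof precisely because it is the rescaled, vacuum specialization of Lemma~\ref{solve_Jang}, and your adaptation — the same ansatz with the $k_{ij}(p)$-quadratic term dropped, boundary matching against the $-2$ and $-12$ eigenspaces via \cite[Lemma 5.3]{Chen-Wang-Yau4}, reduction of the leading order of Jang's equation to the linear system $\sum_i G_{im}T_i=(\text{curvature source})$ with the same matrix $G_{im}$, and barriers plus Schauder for the remainder — is the intended argument. One point worth recording: your ansatz carries $\frac16 B_{ijm}X^iX^jX^m$ rather than the $\frac{r^2}{6}$ printed in the statement, and your normalization is the correct one, since it is what makes the $T_i$-contributions cancel on $\pl B_r$ against $\frac{r^2}{6}T_iX^i$ and what yields $D_iD_ju=B_{ijm}X^m+O(r^2)$ as required in the subsequent evaluation of the integral \eqref{Schoen-Yau}; the extra $r^2$ in the displayed formula for $u$ should be read as a typographical slip.
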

The constants $T_i$ can be solved from Jang's equation. As before, their contribution to each integral would cancel and we do not bother to solve them explicitly here.
\begin{lemma}
\begin{align*}
   &\frac{1}{16\pi} \int_{B_r} \lt[ \lt| k_{ij} - \frac{D_iD_ju}{\sqrt{1 + |Du|^2}} \rt|^2_{\bar g} + 2 |Y|^2_{\bar g} \rt] d\tilde\Omega  \\
=& r^5 \Bigg[ -\frac{1}{54a_0} \sum_m T_m^2 - \frac{1}{27a_0} T_m a_l R^{lm}(p) + \frac{1}{90} \lt( a_0 - \frac{1}{a_0} \rt) R_{ij}(p)R^{ij}(p) - \frac{1}{54a_0} a_i a_j R^{il}(p) R^j_l(p)\\
& + \frac{a_0}{180} \bar W_{0imj}(p) \bar W\indices{_0^{imj}}(p) -\frac{1}{45} a_m \bar W\indices{_0^{imj}}(p) \bar W_{0i0j}(p) \Bigg] + O(r^6).
\end{align*}
\end{lemma}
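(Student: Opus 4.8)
The plan is to repeat the computation of Lemma~\ref{lemma_Schoen-Yau}, this time over $B_r$ and with the small-sphere data, tracking the powers of $r$. Every ingredient has an exact counterpart, so the argument is almost entirely parallel.

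First I would record the leading behavior of the integrand. From the solution of Jang's equation above, the expansion $k_{ij} = -\frac{1}{3}(\bar R_{0iqj}+\bar R_{0jqi})X^q + O(|X|^2)$, and the Codazzi equation for $M$ in $N$ (as in the proof of Lemma~\ref{solve_Jang}), one checks that
\[
k_{ij}-\frac{D_iD_j u}{\sqrt{1+|Du|^2}} = S_{ijm}X^m + O(|X|^2),
\]
where $S_{ijm}$ is given by exactly the algebraic expression~\eqref{S}, with all curvature tensors evaluated at $p$ and with $T_m$ again determined by the leading order of Jang's equation, which once more reduces to $(\delta_{ij}-\frac{a_ia_j}{a_0^2})S_{ijm}=0$. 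Here one uses that $k_{ij}(p)=0$, so that $M$ is totally geodesic at $p$ and the Gauss equation gives $R_{ikjl}(p)=\bar W_{ikjl}(p)$ and $\bar W_{0i0j}(p)=R_{ij}(p)$. With $\tilde e_4 = \frac{1}{a_0}(-1,a_1,a_2,a_3)+O(r)$ one likewise gets $Y_i = \frac{a_p}{a_0}S_{ipm}X^m + O(|X|^2)$.

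Next I would substitute these into $\frac{1}{16\pi}\int_{B_r}\lt(|k_{ij}-\cdots|^2_{\bar g}+2|Y|^2_{\bar g}\rt)d\tilde\Omega$. Since $\bar g_{ij}=\delta_{ij}+a_ia_j+O(|X|)$, $d\tilde\Omega = a_0\,dX+O(|X|)$, and $\int_{B_r}X^mX^n\,dX = \frac{4\pi r^5}{15}\delta^{mn}$, the same contractions and the same appeal to Jang's equation as in Lemma~\ref{lemma_Schoen-Yau} collapse the quadratic form to
\[
\frac{a_0 r^5}{60}\lt(\sum_{i,j,m}S_{ijm}^2 - \sum_m\lt(\sum_i S_{iim}\rt)^2\rt) + O(r^6).
\]
The two sums are then evaluated exactly as in Lemma~\ref{lemma_Schoen-Yau}: the first Bianchi identity gives $\sum_{i,j,m}(\bar W_{0imj}+\bar W_{0jmi})^2 = 3\sum_{i,j,m}\bar W_{0imj}^2$, the three-dimensional Riemann identity~\eqref{riemann3d} turns the cross term into $-12\sum_{i,j,m}\frac{a_m}{a_0}\bar W_{0imj}\bar W_{0i0j}$, and assembling the pieces — using $R_{ij}(p)=\bar W_{0i0j}(p)$ — produces the stated formula with overall factor $r^5$.

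The point requiring care is the verification that $S_{ijm}$ coincides on the nose with~\eqref{S}; this depends on the Codazzi relation along $\Sigma_r$ and on the Gauss-equation identifications at $p$, both of which are available only because $k_{ij}(p)=0$. One must also confirm that the dropped terms, being one order higher in $|X|$, contribute only $O(r^6)$ after integration over $B_r$, which follows from $|X|\le r$ on $B_r$ and $\mathrm{vol}(B_r)=O(r^3)$. Everything else is the routine algebra already carried out in Lemma~\ref{lemma_Schoen-Yau}.
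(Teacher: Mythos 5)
Your proposal is correct and follows essentially the same route as the paper: identify the linear coefficient of $k_{ij}-\frac{D_iD_ju}{\sqrt{1+|Du|^2}}$ with the same $S_{ijm}$ as in \eqref{S} (using $k_{ij}(p)=0$ and the Gauss/Codazzi identifications), observe that $\frac{1}{16\pi}\int_{B_r}X^mX^n\,dX=\frac{r^5}{60}\delta^{mn}$ supplies the overall factor $r^5$, and then reuse verbatim the contraction algebra of Lemma \ref{lemma_Schoen-Yau}. The paper's own proof is just a terser version of exactly this argument.
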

\begin{proof} We have
\begin{align*}
   & k_{ij} - \frac{D_iD_j u}{\sqrt{1+|Du|^2}} \\
= &  \frac{1}{3} \lt( \frac{1}{a_0} (\delta_{ij} T_m + \delta_{jm} T_i + \delta_{mi} T_j) - \bar W_{0imj} - \bar W_{0jmi} + \frac{a_l}{a_0} \lt( R_{iljm} + R_{imjl} \rt) \rt)X^m + O(r^2)
\end{align*}
Since we are integrating on a ball with radius $r$ instead of $1$, we get an additional factor $r^5$ from
\[ \frac{1}{16\pi} \int_{B_r} X^m X^n dx = \frac{r^5}{60} \delta^{mn}. \] 
\end{proof}
\begin{lemma}
\begin{align*}
\frac{1}{16\pi} \int \lt[ |h_0(s) - h(s)|^2 - (H_0(s) - H(s))^2 \rt] d\tilde\Omega = \frac{r^5}{60a_0} R_{ij}(p)R^{ij}(p) + O(r^6).
\end{align*}
\end{lemma}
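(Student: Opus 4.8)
The plan is to reproduce the argument behind Lemma~\ref{lemma_difference_2nd_ff} in the small-sphere normal-coordinate setting of this section. The essential ingredient will be the small-sphere counterpart of Lemma~\ref{difference of second fundamental forms}: writing $h(s)$ for the second fundamental form of $\Sigma_s$ in the Jang graph $\tilde\Omega$ over $B_r$ and $h_0(s)$ for that of the isometric embedding of $(\Sigma_s,\sigma(s))$ into $\R^3$, I expect
\[
 h_0(s) - h(s) = s^3\, \frac{\sqrt{1 + |a_i \tilde\na \tilde x^i|^2}}{a_0}\, R_{ij}(p)\, \tilde x^i_a \tilde x^j_b + O(s^4),
\]
where $R_{ij}$ is the Ricci curvature of the slice $M = \{X^0 = 0\}$. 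To prove it I would proceed as in Lemma~\ref{difference of second fundamental forms}: solve the linearized isometric embedding equation for $\Sigma_s$ into the hyperplane $\{X^0 = a_i X^i\}\subset\R^4$, split the embedding correction $\mathfrak{y}^i$ into the graphical part $y^{(0)i}$ — which by the Remark following \eqref{iso_ellipsoid} does not affect $\breve g - \bar g$ — and the part $y^i$ solving the ellipsoid equation \eqref{iso_ellipsoid}, with the same solution $\hat y^i = -\tfrac{s^2}{6}(R^i_j\tilde x^j + R_{jk}\tilde x^j\tilde x^k\tilde x^i)$, and then read off $\bar h - \breve h$ from $\bar h = (\tfrac{1}{2} \pl_s \bar g_{ab} - \na_a \bar g_{bs})/\sqrt{\bar g_{ss} - \sigma^{ab}\bar g_{as}\bar g_{bs}}$. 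Since $k_{ij} = O(|X|)$ vanishes at $p$ — so that $R(g)(p) = 0$ and, by \eqref{riemann3d}, $R_{ikjl}(p)$ is expressed through $R_{ij}(p)$ — while the curvature is $O(1)$, the $k$-, $T_i$- and $B_{ijm}$-terms either enter the $O(s^4)$ remainder or are carried off by $y^{(0)}$, so the leading $s^3$-term is driven solely by the metric deviation $-\tfrac{1}{3} R_{ikjl}(p)\tilde x^k\tilde x^l$, exactly as at spatial infinity.

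With that identity in hand, the rest mirrors the proof of Lemma~\ref{lemma_difference_2nd_ff}. I would convert the $\sigma$-norm to the round norm via
\[
 |h_0(s)-h(s)|^2_\sigma - (H_0(s)-H(s))^2 = (\mbox{det}_{\tilde\sigma}(\sigma))^{-1}\big(|h_0(s)-h(s)|^2_{\tilde\sigma} - (\mbox{tr}_{\tilde\sigma}h_0(s) - \mbox{tr}_{\tilde\sigma}h(s))^2\big),
\]
and, using $\sigma_{ab} = s^2(\tilde\sigma_{ab}+\bar\tau_a\bar\tau_b) + O(s^3)$ with $\bar\tau = a_i\tilde x^i$, obtain
\[
 |h_0(s)-h(s)|^2 - (H_0(s)-H(s))^2 = \frac{s^2}{a_0^2}\big(R_{ij}R^{ij} - 2 R_{ij}R^i_l\,\tilde x^j\tilde x^l\big) + O(s^3).
\]
Then, using $d\tilde\Omega = a_0 s^2\, ds\, dS^2 + O(s^3\, ds\, dS^2)$, $\int_{S^2}\tilde x^i\tilde x^j\, dS^2 = \tfrac{4\pi}{3}\delta^{ij}$ and $\int_0^r s^4\, ds = r^5/5$ — so that the integrand integrates over $S^2$ to $\tfrac{4\pi}{3}R_{ij}R^{ij}$ — I arrive at
\[
 \frac{1}{16\pi}\int\big[|h_0(s)-h(s)|^2 - (H_0(s)-H(s))^2\big]\, d\tilde\Omega = \frac{r^5}{60 a_0}\, R_{ij}(p)R^{ij}(p) + O(r^6),
\]
the numerical constant being $\tfrac{1}{16\pi}\cdot\tfrac{1}{a_0}\cdot\tfrac{4\pi}{3}\cdot\tfrac{1}{5} = \tfrac{1}{60 a_0}$. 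This is the spatial-infinity value of Lemma~\ref{lemma_difference_2nd_ff} multiplied by $r^5$, reflecting that we integrate over $B_r$ rather than the unit ball.

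The hard part will be the first step — establishing the small-sphere version of Lemma~\ref{difference of second fundamental forms}. Although structurally identical to the spatial-infinity computation, it requires re-deriving the induced metric $\bar g$ on the Jang graph and the flat metric $\breve g$ of the associated isometric embedding in this setting, and re-checking the two features that made the original argument clean: that $\breve g - \bar g$, hence $h_0(s) - h(s)$, is independent of the graphical part $y^{(0)}$ of the perturbation, and that it is governed entirely by the intrinsic curvature of the slice. The absence of a quadratic term in the Jang solution here (because $k_{ij}(p) = 0$) changes the bookkeeping but not the structure; once this is checked the lemma follows as above.
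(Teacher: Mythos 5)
Your proposal is correct and follows essentially the same route as the paper: the paper's own proof simply states that the argument is almost identical to that of Lemma \ref{lemma_difference_2nd_ff}, with the only change being the final radial integral $\int_0^r s^4\,ds = r^5/5$ producing the factor $r^5$. Your additional observations (that $k_{ij}(p)=0$ removes the quadratic term in the Jang solution, and that the graphical part $y^{(0)}$ drops out of $\breve g - \bar g$) are consistent with the bookkeeping the paper implicitly relies on.
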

\begin{proof}
The argument is almost identical as in the proof of Lemma \ref{lemma_difference_2nd_ff}. We get an additional factor $r^5$ in the last step:
\[ \int_0^r s^4 ds = \frac{r^5}{5}. \]
\end{proof}

\begin{lemma}
\begin{align*}
\frac{1}{8\pi} \int_{\Sigma_r} \lt[ |H|\sqrt{1+|\na\tau|^2} (\cosh\theta' - \cosh\theta) + \Delta\tau (\theta' - \theta)\rt] d\Sigma \\
= \frac{r^5}{54a_0} \lt( \sum_m T_m^2 + 2 T_m a_i R^{im}(p) + a_i a_j R^{il}(p) R^j_l(p) \rt)+ O(r^6).
\end{align*}
\end{lemma}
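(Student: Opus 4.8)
The plan is to run the proof of Lemma~\ref{lemma_gauge} with $d$ replaced by $r$. The computation again breaks into three pieces: a purely algebraic reduction of the integrand, the comparison of the canonical gauge with the gauge induced by Jang's equation along $\Sigma_r$, and the integration over $\Sigma_r$.

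First I would reduce the integrand. Using $\sinh\theta=-\Delta\tau/(|H|\sqrt{1+|\na\tau|^2})$ together with the elementary identities $\cosh\theta'-\cosh\theta=\frac{\sinh\theta'+\sinh\theta}{\cosh\theta'+\cosh\theta}(\sinh\theta'-\sinh\theta)$ and $\theta'-\theta=\cosh\theta(\sinh\theta'-\sinh\theta)-\sinh\theta(\cosh\theta'-\cosh\theta)$ (valid up to negligible error), the integrand collapses, exactly as in Lemma~\ref{lemma_gauge}, to $\frac{1}{\cosh\theta+\cosh\theta'}\sqrt{1+|\na\tau|^2}\,|H|\,(\sinh\theta'-\sinh\theta)^2$. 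This step is insensitive to the regime and carries over verbatim, so what remains is to expand $|H|(\sinh\theta-\sinh\theta')$ and the prefactor $\frac{1}{2\cosh\theta}\cdot\frac{\sqrt{1+|\na\tau|^2}}{|H|}$ on $\Sigma_r$.

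The heart of the argument is the gauge comparison. I would compute $\langle H,\bar e_4\rangle=-\Delta\tau/\sqrt{1+|\na\tau|^2}$ from the optimal embedding $\tau=r a_i\tilde x^i-\frac{r^3}{6}a_iR^i_n(p)\tilde x^n+O(r^4)$ and the $r$-scaled expansion of $\Delta_\sigma$ on $\Sigma_r$; and $\langle H,e_4'\rangle=\cosh\phi\,\langle H,e_4\rangle+\sinh\phi\,\langle H,e_3\rangle$ with $\sinh\phi=-u_3/\sqrt{1+|\na\tau|^2}$, from the Jang solution $u=a_iX^i+\frac{r^2}{6}T_iX^i+\frac{r^2}{6}B_{ijm}X^iX^jX^m-\frac{r^2}{6}a_iR^i_nX^n$, the expansion $k_{ij}=-\frac13(\bar R_{0iqj}+\bar R_{0jqi})X^q+O(|X|^2)$, and $g_{ij}=\delta_{ij}-\frac13 R_{ikjl}(p)X^kX^l+O(|X|^3)$. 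Two features of this setting make it lighter than Lemma~\ref{lemma_gauge}: since $k$ vanishes at $p$ one has $\langle H,e_4\rangle=O(r^2)$ (it is essentially $\operatorname{tr}_{\Sigma_r}k$), negligible at the order in question, and $\tau$ carries no $k$-terms; and the first Bianchi identity forces $\pl_ik_{jm}(p)+\pl_jk_{mi}(p)+\pl_mk_{ij}(p)=0$, so the cubic $\pl k$-terms drop out of $B_{ijm}$ and of $\tau$. Invoking the vacuum constraints $\mu=0$, $J=0$ near $p$ and the Codazzi relation $\pl_mk_{ij}-\pl_ik_{mj}=-\bar W_{0jmi}+O(|X|)$, the leading $O(r^{-1})$ term $\frac{2}{r}a_i\tilde x^i$ cancels in $\langle H,\bar e_4\rangle-\langle H,e_4'\rangle$, and the remaining order-$r$ terms collapse, just as in the constraint step of Lemma~\ref{lemma_gauge}, to
\[ |H|(\sinh\theta-\sinh\theta')=\frac{2r}{3\sqrt{1+|\na\tau|^2}}\big(T_m\tilde x^m+a_iR^i_j(p)\tilde x^j\big)+O(r^2). \]

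Finally I would assemble the pieces. Using $|H|=\frac{2}{r}+O(r)$, $|\na\tau|^2=|\tilde\na(a_i\tilde x^i)|^2+O(r^2)$, the identity $a_0^2=1+|a|^2$ (so that $\cosh\theta=a_0/\sqrt{1+|\tilde\na(a_i\tilde x^i)|^2}+O(r^2)$), and $d\Sigma=r^2\,dS^2+O(r^4)$, the factors $1+|\tilde\na(a_i\tilde x^i)|^2$ cancel and the reduced integrand becomes $\frac{r^3}{9a_0}\big(T_m\tilde x^m+a_iR^i_j\tilde x^j\big)^2+O(r^4)$ with respect to $d\Sigma$, i.e.\ the expression from Lemma~\ref{lemma_gauge} times $r^3$. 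Integrating against $d\Sigma=r^2\,dS^2$ and using $\int_{S^2}\tilde x^m\tilde x^n\,dS^2=\frac{4\pi}{3}\delta^{mn}$ then gives $\frac{r^5}{54a_0}\big(\sum_m T_m^2+2T_m a_iR^{im}(p)+a_ia_jR^{il}(p)R^j_l(p)\big)+O(r^6)$, as claimed. The gauge comparison is the step I expect to be the real obstacle: keeping the $r$-powers consistent, checking the cancellation of the $O(r^{-1})$ terms, and carrying the constraint-aided simplification of the surviving order-$r$ terms all the way down to the two monomials $T_m\tilde x^m$ and $a_iR^i_j\tilde x^j$ --- precisely what Lemma~\ref{lemma_gauge} dispatched in a single line, and where it would be easy to drop a sign or a power of $r$.
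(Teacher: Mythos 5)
Your proposal is correct and follows the paper's own route: the paper simply cites "a similar computation as in the proof of Lemma \ref{lemma_gauge}" to obtain the identical key expansion $|H|(\sinh\theta-\sinh\theta')=\frac{2r}{3\sqrt{1+|\na\tau|^2}}\bigl(T_m\tilde x^m+a_iR^i_j\tilde x^j\bigr)+O(r^2)$, and then extracts the factor $r^3$ from $|H|(\sinh\theta'-\sinh\theta)^2$ via $|H|=\frac2r+\ldots$ and $r^2$ from the area form, exactly as you do. Your additional observations (vanishing of $k(p)$ and of the Bianchi-symmetrized $\pl k$, the cancellation of the $(1+|\tilde\na(a_i\tilde x^i)|^2)$ factors via $\cosh\theta=a_0/\sqrt{1+|\na\tau|^2}$) are consistent with the paper and merely fill in what it leaves implicit.
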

\begin{proof}
A similar computation as in the proof of Lemma \ref{lemma_gauge} leads to
\begin{align*}
|H|(\sinh\theta - \sinh\theta') = \frac{r}{\sqrt{1+|\na\tau|^2}} \frac{2}{3} \lt( T_m \tilde x^m + a_i R^i_j \tilde x^j \rt) + O(r^2).
\end{align*}
Recall the main term in the integrand is $|H|(\sinh\theta - \sinh\theta')^2$. We get a factor $r^3$ because $|H| = \frac{2}{r} + O(r^{-2})$ and another factor $r^2$ from the area form of $\Sigma_r$.
\end{proof}

Putting the above three lemma together with Theorem \ref{mass by Jang's equation}, we recover (\ref{CWY_vacuum}).

\end{document}